\date{}
\newtheorem{theorem}{Theorem}[section]
\newtheorem{definition}{Definition}[section]
\newtheorem{lem}{Lemma}[section]
\newtheorem{rem}{Remark}[section]
\newtheorem{prop}{Proposition}[section]
\newtheorem{cor}{Corollary}[section]
\newcounter{hypA}
\newenvironment{hypA}{\refstepcounter{hypA}\begin{itemize}
  \item[({\bf A\arabic{hypA}})]}{\end{itemize}}
\newcounter{hypB}
\newenvironment{hypB}{\refstepcounter{hypB}\begin{itemize}
  \item[({\bf B\arabic{hypB}})]}{\end{itemize}}
\newcommand{\wtilde}{\widetilde}
\newcommand{\bbR}{\mathbb R}
\newcommand{\ess}{\mathrm{ESS}}
\newcommand{\ext}{\mathrm{Ext}}
\newcommand{\E}{\mathbb{E}}
\newcommand{\PP}{\mathbb{P}}
\newcommand{\LL}{\mathscr{L}}
\newcommand{\A}{\mathcal{A}}
\newcommand{\FF}{\mathscr{F}}
\renewcommand{\phi}{\varphi}
\newcommand{\prob}{\rightarrow_{\mathbb{P}}}
\newcommand{\weak}{\Rightarrow}
\newcommand{\Var}{\mathrm{Var}}
\newcommand{\Normal}{\mathcal{N}}
\newcommand{\bra}[1]{\langle #1 \rangle}
\newcommand{\domain}{\mbox{Dom}}
\newcommand{\mat}{\mathsf{M}}
\newcommand{\Exp}{\mathbb{E}}
\begin{document}

\begin{center}

{\Large \textbf{On the Convergence of Adaptive Sequential Monte Carlo Methods}}

\bigskip

BY ALEXANDROS BESKOS$^{1}$, AJAY JASRA$^{1}$, NIKOLAS KANTAS$^{2}$ \& ALEXANDRE H.~THI\'ERY$^{1}$

{\footnotesize $^{1}$Department of Statistics \& Applied Probability,
National University of Singapore, Singapore, 117546, SG.}\\
{\footnotesize E-Mail:\,}\texttt{\emph{\footnotesize staba@nus.edu.sg, staja@nus.edu.sg, a.h.thiery@nus.edu.sg}}\\
{\footnotesize $^{2}$Department of Mathematics,
Imperial College London, London, SW7 2AZ, UK.}\\
{\footnotesize E-Mail:\,}\texttt{\emph{\footnotesize n.kantas@ic.ac.uk}}
\end{center}

\begin{abstract}
In several implementations of Sequential Monte Carlo (SMC) methods it is natural, and important
in terms of algorithmic efficiency, to exploit the information of the history of the samples 
to optimally tune their subsequent propagations.
In this article we provide a carefully formulated asymptotic theory for a class of such \emph{adaptive}
SMC methods.
The theoretical framework developed here will cover, under assumptions,
several commonly used SMC algorithms \cite{chopin,jasra,schafer}. There are only limited results about the theoretical underpinning of such adaptive methods: we will bridge this gap by providing a weak law of large numbers (WLLN) and a central limit theorem (CLT) for some of these algorithms.
The latter seems to be the first result of its kind in the literature and provides a formal justification of algorithms used in many real data contexts \cite{jasra,schafer}.
We establish that for a general class of  adaptive SMC algorithms \cite{chopin}
the asymptotic variance of the estimators from the adaptive SMC method is \emph{identical} to a so-called `perfect' SMC algorithm which uses ideal proposal kernels. 
Our results are supported by application on a complex high-dimensional posterior distribution associated with the Navier-Stokes model,  where adapting high-dimensional parameters of the proposal kernels is critical for the efficiency of the algorithm. \\
\textbf{Keywords}: Adaptive SMC, Central Limit Theorem, Markov chain Monte Carlo.
\end{abstract}

%
%
\section{Introduction}\label{sec:intro}
SMC methods are amongst the most widely used computational techniques in statistics, engineering, physics, finance and many other disciplines;
see \cite{doucet} for a recent overview. 
They are designed to approximate a sequence $\{ \eta_n \}_{n \geq 0}$ of probability distributions of increasing dimension or complexity. The method uses $N \geq 1$
weighted samples, or particles, generated in parallel and propagated via Markov kernels and resampling methods.
The method has accuracy which increases as the number of particles grows and is typically asymptotically exact.
Standard SMC methodology is by now very well understood with regards to its convergence properties and several consistency results have been proved \cite{chopin1,delmoral}.
SMC methods have also recently been proved to be stable in certain high-dimensional contexts \cite{beskos}.

In this article, we are concerned with \emph{adaptive} SMC methods; in an effort to improve algorithmic efficiency, the weights and/or Markov proposal kernels can depend upon the history of the simulated process. Such procedures appear in a wealth of articles including \cite{chopin,delmoralabc,jasra,schafer} and have important applications in, for example, econometrics, population genetics and data assimilation. The underlying idea of these algorithms  is that, using the particle approximation $\eta^N_n$ of the distribution $\eta_n$, one can exploit the induced information to build effective proposals or even to \emph{determine} the next probability distribution in the sequence; this is often achieved by using the expectation $\eta^N_n(\xi_{n+1})$ of a summary statistic $\xi_{n+1}$ with respect to the current SMC approximation $\eta^N_n$. In other cases, one can use the particles to determine  the next distribution in an artificial sequence of densities; we expand upon this point below.
Such approaches are expected to lead to algorithms that are more efficient than their  `non-adaptive' counter-parts. Critically, such ideas also deliver more automated algorithms by reducing the number of user-specified tuning  parameters.

Whilst the literature on adaptive MCMC methods is by now well-developed e.g.~\cite{andrieu} and sufficient conditions for an adaptive MCMC algorithm to be ergodic are well-understood, the analysis of adaptive SMC algorithms is still in its infancy. 
To the best of our knowledge, a theoretical study of the consistency and fluctuation properties
of adaptive SMC algorithms is lacking 
in the current literature. This article aims at filling this critical gap in the theory of SMC methods. Some preliminary results can be found, under exceptionally strong conditions, in \cite{crisan,jasra}. Proof sketches are given in \cite{delmoralabc} with some more realistic but limited analysis in \cite{giraud}.
We are not aware of any other asymptotic analysis of these particular class of algorithms in the literature.
Contrary to adaptive MCMC algorithms, we show in this article that it is reasonable to expect most adaptive SMC methods to be asymptotically correct.

\subsection{Results and Structure}
This article explores two distinct directions.
In the first part, an asymptotic analysis of a class of SMC methods with adaptive Markov kernels and weights is carried out. The second part of the article looks at the case where an additional layer of randomness is taken into account through an adaptive tempering procedure.
A weak law of large numbers (WLLN) and a central limit theorem (CLT) relevant to each situation are proved. In both cases we consider a sequence of target distributions $\{ \eta_{n} \}_{n\ge 0}$ defined on a corresponding sequence of measurable spaces  $(E_n,\mathscr{E}_n)_{n\geq 0}$. 
We write $\eta_{n}^{N}=(1/N)\sum_{i=1}^{N}\delta_{x_n^{i}}$ for the $N$-particle SMC approximation of $\eta_n$, with $\delta_{x_n}$ the Dirac measure at $x_n \in E_n$ and $\{x_n^{i}\}_{i=1}^{N} \in E_n^N$ the collection of particles at time $n\ge 0$.

In the first part of the paper, for each $n \geq 1$ we consider parametric families, indexed by a parameter $\xi\in \bbR^d$, of Markov kernels $M_{n, \xi}: E_{n-1} \times \mathscr{E}_n \to \bbR_+$  and potential functions $G_{n-1, \xi}: E_{n-1} \to \bbR_{+}$. To construct the particle approximation $\eta^N_{n}$, the \emph{practical} SMC algorithm exploits summary statistics $\xi_n: E_{n-1} \to \bbR^d$ by reweighing and propagating the particle approximation $\eta^N_{n-1}$ through the potential $G_{n,\eta^N_{n-1}({\xi_n})}$ and the Markov kernel $M_{n,\eta_{n-1}^N{(\xi_n)}}$. This is a substitute for the \emph{perfect} algorithm (as also used by \cite{giraud} and which cannot be implemented) which employs the Markov kernel $M_{n,\eta_{n-1}{(\xi_n)}}$ and weight function    
$G_{n,\eta_{n-1}({\xi_n})}$. We prove a WLLN and a CLT for both the approximation of the probability distribution $\eta_n$ and its normalising constant.
This set-up is relevant, for example, in the context of sequential Bayesian parameter inference \cite{chopin,kantas} when $\{ \eta_{n} \}_{n \geq 0}$ is a sequence of posterior distributions that corresponds to increasing amount of data. The Markov kernel $M_{n,\eta_{n-1}^{N}{(\xi_n)}}$ is user-specified 
and its role is to efficiently move the particles within the state space. In many situations the Markov kernel $M_{n,\eta_{n-1}^{N}{(\xi_n)}}$ is constructed so that it leaves the distribution $\eta_n$ invariant; a random walk Metropolis kernel that uses the estimated covariance structure of $\eta_{n-1}^{N}$ for scaling its jump proposals is a popular choice.
The case when there is also a tuned parameter in the weight function $G_{n,\eta_{n-1}^{N}({\xi_n})}$ 
is relevant to particle filters \cite{doucet}, as described in Section \ref{sec:ex_filt}.

The second part of this article investigates an adaptive tempering procedure. 
Standard MCMC methods can be inefficient for directly exploring complex probability distributions  involving high-dimensional state spaces, multi-modality, greatly varying scales, or combination thereof. It is a standard approach to introduce a bridging sequence of distributions $\{\eta_n\}_{n=0}^{n=n_*}$ between a distribution $\eta_0$ that is easy to sample from and the distribution of interest $\eta_{n_*} \equiv \pi$. In accordance with the simulated annealing literature, the probability distribution of interest is written as $\pi(dx) = Z^{-1} \, e^{-\beta_* \, V(x)} \, m(dx)$ for a potential $V$, temperature parameter $\beta_*\in \bbR$, dominating measure $m(dx)$ and normalisation constant $Z$; the bridging sequence of distributions is constructed by introducing a ladder of temperature parameters $\beta_0 \leq \beta_1 \leq \cdots \leq \beta_{n_*} =: \beta_*$ and setting
$\eta_{n}(dx) = Z(\beta_n)^{-1} \, e^{-\beta_n \, V(x)} \, m(dx)$
for a normalisation constant $Z(\beta_n)$. The choice of the bridging sequence of distributions is an important and complex problem, see e.g.~\cite{gelman}. To avoid the task of having to pre-specify a potentially large number of temperature parameters, an adaptive SMC method can compute them `on the fly' \cite{jasra,schafer}, thus obtaining a random increasing sequence of temperature parameters 
$\big\{ \beta_n^N \big\}_{n \geq 0}$. In this article, we adopt the following strategy: assuming a particle approximation $\eta_{n-1}^N = (1/N) \sum_{i=1}^N \delta_{x^i_{n-1}}$ with temperature parameter $\beta_{n-1}^{N}$, the particles are assigned weights proportional to $e^{-(\beta^N_n - \beta^N_{n-1}) \, V(x^i_{n-1})}$ to represent the next distribution in the sequence; the choice of $\beta^N_{n}$ is determined from the particle collection $\{x_{n-1}^{i}\}_{i=1}^{N}$ by ensuring a minimum effective sample size (ESS)  (it is described later on, why this might be a sensible choice). This can efficiently be implemented using a bisection method; see e.g.~\cite{jasra}. We prove a WLLN and a CLT for both the approximation of the probability distribution $\eta_n$ and the estimates of the normalising constants $Z(\beta_n)$.


%
%

One of the contributions of the article is the proof that the asymptotic variance in the CLT,  for some algorithms in the first part of the paper, is \emph{identical} to the one of the `perfect' SMC algorithm using the ideal kernels.
One consequence of this effect is that if the asymptotic variance associated to the (relative) normalizing constant estimate increases linearly with respect to time (see e.g.~\cite{cerou}), then so does the asymptotic variance for the adaptive algorithm. We present numerical results on a complex high-dimensional posterior distribution associated with the Navier-Stokes model  (as in e.g.~\cite{kantas}), where adapting the proposal kernels over hundreds of different directions is critical for the efficiency of the algorithm.  Whilst our theoretical result (with regards to the asymptotic variance) only holds for the case where one adapts the proposal kernel, the numerical application will involve much more advanced adaptation procedures. These experiments provide some evidence that our theory could be relevant 
in more general scenarios.

This article is structured as follows. In Section \ref{sec:algo} the adaptive SMC algorithm is introduced and the associated notations are detailed. 
In Section \ref{sec:exam1} we provide some motivating examples for the use of adaptive SMC.
In Section \ref{sec:main_res} 
we study the asymptotic properties 
of a class of SMC algorithms with adaptive Markov kernels and weights.
In Section \ref{sec:annealed},
we extend our analysis to the case where an adaptive tempering scheme is taken into account. In each situation, we prove  a WLLN and a CLT.
In Section \ref{sec:exam}, we verify that our assumptions 
hold when using the adaptive SMC algorithm in a 
real scenario. In addition, we provide numerical results associated to the Navier-Stokes model and some theoretical insights associated to
the effect of the dimension of the statistic which is adapted.
The article is concluded in Section \ref{sec:summ} with a discussion of future work.
The appendix features a proof of one of the results in the main text.

%
%
\section{Algorithm and Notations}\label{sec:algo}
In this section we provide the necessary notations and describe the SMC algorithm with adaptive Markov kernels and weights. The description of the adaptive tempering procedure is postponed to Section \ref{sec:annealed}.

\subsection{Notations and definitions}

Let $(E_n,\mathscr{E}_n)_{n\geq 0}$ be a sequence of measurable spaces endowed with a countably generated $\sigma$-field $\mathscr{E}_n$. The set $\mathcal{B}_b(E_n)$ denotes the class of bounded $\mathscr{E}_n/\mathbb{B}(\bbR)$-measurable functions on $E_n$ where $\mathbb{B}(\mathbb{R})$ Borel $\sigma$-algebra on $\mathbb{R}$. The supremum norm is written as $\|f\|_{\infty} = \sup_{x\in E_n}|f(x)|$ and $\mathcal{P}(E_n)$ is the set of probability measures on $(E_n,\mathscr{E}_n)$. We will consider non-negative operators $K : E_{n-1} \times \mathscr{E}_n \rightarrow \bbR_+$ such that for each $x \in E_{n-1}$ the mapping $A \mapsto K(x, A)$ is a finite non-negative measure on $\mathscr{E}_n$ and for each $A \in \mathscr{E}_n$ the function $x \mapsto K(x, A)$ is $\mathscr{E}_{n-1} / \mathbb{B}(\mathbb{R})$-measurable; the kernel $K$ is Markovian if $K(x, dy)$ is a probability measure for every $x \in E_{n-1}$.
For a finite measure $\mu$ on $(E_{n-1},\mathscr{E}_{n-1})$ and Borel test function $f \in \mathcal{B}_b(E_n)$ we define
\begin{equation*}
    \mu K  : A \mapsto \int K(x, A) \, \mu(dx)\ ;\quad 
    K f :  x \mapsto \int f(y) \, K(x, dy)\ .
\end{equation*}
We will use the following notion of continuity at several places in this article.
\begin{definition} \label{def.unif.cont}
Let $\mathcal{X}$, $\mathcal{Y}$ and $\mathcal{Z}$ be three metric spaces. A function $f: \mathcal{X} \times \mathcal{Y} \to \mathcal{Z}$ is continuous at $y_0\in \mathcal{Y}$ uniformly on $\mathcal{X}$ if 
\begin{equation} \label{eq.uniform.cont}
\limsup_{\delta \to 0^+} \;  \Big\{ d_{\mathcal{Z}}
\big(f(x,y), f(x,y_0) \big) \; : \; x \in \mathcal{X}, \; d_{\mathcal{Y}}(y,y_0) < \delta \Big\} = 0\ .
\end{equation}
\end{definition}
%
%
We write $\rightarrow_{\mathbb{P}}$  and $\weak$ to denote convergence in probability and in distributions. The Kroenecker product $u \otimes v$ of two vectors $u,v \in \bbR^d$ designates the matrix $u \cdot v^{\top} \in
\mathbb{R}^{d\times d}$; the covariance of a function $\phi\in\mathcal{B}_b(E)^r$ with respect to a probability measure $\mu\in\mathcal{P}(E)$ is denoted by  
$\Sigma_{\mu}(\phi) 
=
\int_E [\phi(x)-\mu(\phi)] \otimes [\phi(x)-\mu(\phi)] \, \mu(dx)$.

\subsection{SMC Algorithm}\label{sec:algo1}

For each index $n \geq 1$, we consider Markov operators $M_{n,\xi}: E_{n-1} \times \mathscr{E}_n \rightarrow \bbR_+$ and weight functions $G_{n-1, \xi}: E_{n-1} \to \bbR_+$ parametrized by $\xi \in \bbR^d$. The adaptive SMC algorithm to be described exploits summary statistics $\xi_n: E_{n-1} \to \bbR^d$ and aims at approximating the sequence of probability distributions $\{\eta_n\}_{n \geq 0}$, on the measurable spaces $(E_n,\mathscr{E}_n)_{n\geq 0}$,
defined via their operation on a test function $\phi_n \in \mathcal{B}_b(E_n)$ as
\begin{equation}
\label{eq:eta_def}
\eta_n(\phi) := \gamma_n(\phi_n) / \gamma_n(1)
\end{equation}
where $\gamma_n$ is the unnormalised measure on $(E_n,\mathscr{E}_n)$ given by 
\begin{equation}
\label{eq:gamma_def}
\gamma_n(\phi) := \mathbb{E}\,\Big[\prod_{p=0}^{n-1} G_p(X_p)\cdot\phi(X_n)\Big]\ .\end{equation}
The above expectation is under the law of a non-homogeneous Markov chain $\big\{ X_n \big\}_{n \geq 0}$ with initial distribution $X_0 \sim \eta_0 \equiv \gamma_0$ and transition $\PP\,[\,X_{n} \in A \mid X_{n-1}=x\,] = M_n(x,A)$ where we have used the notations
\begin{equation*}
M_n \equiv M_{n, \eta_{n-1}(\xi_n)}\ ; \quad 
G_{n} \equiv G_{n, \eta_{n}(\xi_{n+1})}\ .
\end{equation*}
In practice, the expectations $\eta_{n-1}(\xi_n)$ of the summary statistics are not analytically tractable and it is thus impossible to simulate from the Markov chain $\{X_n\}_{n \geq 0}$ or compute the weights $G_n$. Nevertheless, for the purpose of analysis, we introduce the following idealized algorithm, referred to as the \emph{perfect} SMC algorithm in the sequel, that propagates a set of $N \geq 1$ particles by sampling from the distribution
\begin{equation}
\label{eq:strange}
\mathbb{P}\big(\,d(x_0^{1:N},x_1^{1:N},\ldots,x_n^{1:N})\,\big)
=
\prod_{i=1}^N \eta_0(dx_0^i) \, \prod_{p=1}^n \prod_{i=1}^N\Phi_{p}(\eta_{p-1}^N)(dx_p^i)
\end{equation}
where the $N$-particle approximation of the distribution \eqref{eq:eta_def} is defined as
\begin{equation} \label{eq.approx.normalized.measure}
\eta_n^{N}= \frac{1}{N} \, \sum_{i=1}^N \delta_{x_n^i}\ .
\end{equation}
In \eqref{eq:strange}, the operator $\Phi_n: \mathcal{P}(E_{n-1}) \to \mathcal{P}(E_{n})$ is
\begin{equation*}
\Phi_n(\mu)(dy) = \frac{\mu(G_{n-1}M_{n})(dy)}{\mu(G_{n-1})} \ .
\end{equation*}
%
Expression \eqref{eq:strange} is a mathematically concise way to describe a standard particle method that begins by sampling $N$ i.i.d.~particles from 
$\eta_0$ and, given particles $\{ x_{n-1}^{i}\}_{i=1}^{N}$,  performs multinomial resampling according to the unnormalised weights 
$G_{n-1}(x_{n-1}^i)$ before propagating the particles via the Markov kernel $M_{n}(x,dy)$.

The SMC algorithm that is actually simulated in practice, referred to as the \emph{practical} SMC algorithm in the sequel, has joint law
\begin{equation}
\label{eq:d1}
\mathbb{P}\,\big(\,d(x_0^{1:N},x_1^{1:N},\dots,x_n^{1:N})\,\big) = \prod_{i=1}^N \eta_0(dx_0^i) \,\prod_{p=1}^n \prod_{i=1}^N\Phi_{p,N}(\eta_{p-1}^N)(dx_p^i)\ .
\end{equation}
The operator $\Phi_{n,N}$ approximates the ideal one, $\Phi_{n}$, and is defined as
\begin{equation*}
\Phi_{n,N}(\mu)(dy) = \frac{\mu(G_{n-1,N} \, M_{n,N})(dy)}{\mu(G_{n-1,N})}\ .
\end{equation*}
We have used the short-hand notations 
\begin{equation*}
M_{n,N} \equiv M_{n, \eta^N_{n-1}(\xi_n)}
\ ; \qquad  
G_{n,N} \equiv G_{n, \eta^N_n(\xi_{n+1})}\ .
\end{equation*}
%

Throughout this article we assume that the potentials are strictly positive,
$G_{n,\xi}(x) > 0$ for all $x\in E_{n}$ and $\xi \in \bbR^d$
so that there is no possibility that the algorithm collapses. 
The particle approximation of  the unnormalised distribution \eqref{eq:gamma_def} is defined as
\begin{equation}
\label{eq.approx.unnormalized.measure}
\gamma_n^N(\phi_n) = \Big\{ \prod_{p=0}^{n-1} \eta_p^N(G_{p,N}) \Big\} \, \eta^N_n(\phi_n)\ .
\end{equation}
It will bel useful to introduce the non-negative operator
\begin{equation} \label{def.operator.Q}
Q_{n,N}(x,dy) = G_{n-1,N}(x) M_{n,N}(x,dy)
\end{equation}
and the idealised version
\begin{equation*}
Q_{n}(x,dy) = G_{n-1}(x) M_{n}(x,dy) 
\equiv G_{n-1,\eta_{n-1}(\xi_n)}(x) M_{n,\eta_{n-1}(\xi_n)}(x,dy)\ .
\end{equation*}
Many times we will be interested in the properties of involved operators as functions of $\xi$, thus we will also write $$Q_{n,\xi}(x,dy) := G_{n-1,\xi}(x) \, M_{n,\xi}(x,dy)$$ to emphasise the dependency on the parameter $\xi \in \bbR^d$. Unless otherwise stated, the differentiation operation $\partial_{\xi}$ at step $n$ is evaluated at the limiting parameter value $\xi=\eta_{n-1}(\xi_{n})$.
With these definitions, one can verify that the following identities hold
\begin{equation}
\label{eq:defPhi}
\eta_n(\phi_n) = \Phi_{n}(\eta_{n-1})(\phi_n) = \frac{\eta_{n-1}(Q_{n} \phi_n)}{\eta_{n-1}(G_{n-1})}\ ;\quad 
\gamma_n(\phi_n) =  \gamma_{n-1}(Q_n \phi_n)\ .
\end{equation}
Similar formulae are available for the $N$-particle approximations; if $\mathscr{F}_{n}^N$ designates the filtration generated by the particle system up-to (and including) time $n$ we have
\begin{equation}
\E\,\big[\eta^N_n(\phi_n) \mid \mathscr{F}_{n-1}^N \big] = \Phi_{n,N}(\eta^N_{n-1})(\phi_n)\ ; \quad 
\E\,\big[\gamma^N_n(\phi_n)  \mid \mathscr{F}_{n-1}^N  \big] =  \gamma^N_{n-1}(Q_{n,N} \phi_n)\ .
\end{equation}
In the sequel, we will use the expressions $\E_{n-1}[\, \cdot\, ]$ and $\Var_{n-1}[ \,\cdot\, ]$ to denote the conditional expectation $\E\,[\, \cdot \mid \FF^N_{n-1}\,]$ and conditional variance $\Var\,[\,\cdot \mid \FF^N_{n-1}\,]$ respectively.


\begin{rem}
Our results concern multinomial resampling at each time. 
Extension of our analysis to adaptive resampling \cite{delm:12} is possible but would require many additional calculations and technicalities; this is left as a topic for future work.
\end{rem}


%
%
\section{Motivating Examples}\label{sec:exam1}

%
%
\subsection{Sequential Bayesian Parameter Inference}
\label{sec:seq_bi}
Consider Bayesian inference for the parameter $x \in E$, observations $y_i \in \mathcal{Y}$ and prior measure $\eta_0(dx)$. The posterior distribution $\eta_n$ after having observed $y_{1:n} \in \mathcal{Y}^{n+1}$ reads
\begin{equation*}
\eta_n(dx) = \big(\, \PP\,[\,y_{1:n} \mid x\,] \,/\, \PP\,[\,y_{1:n}\,]\, \big) \, \eta_0(dx)\, .
\end{equation*}
The approach in \cite{chopin} fits in the framework described in Section \ref{sec:algo1} with state spaces $E_n=E$ and potential functions $G_n(x)= \PP\,[\,y_{n+1} \mid y_{1:n},x\,]$.
For an MCMC kernel $M_n \equiv M_{n,\eta_{n-1}(\xi_n)}$ with invariant measure $\eta_n$ the posterior distribution $\eta_n$ is given by $\eta_n(\phi_n) = \gamma_n(\phi_n) / \gamma_n(1)$
where the unnormalised measure $\gamma_n$ is defined as in \eqref{eq:gamma_def}. 
A popular choice consists in choosing for $M_{n,\eta_{n-1}(\xi_n)}$ a random walk Metropolis kernel reversible with respect to $\eta_{n}$ and jump covariance structure matching the one of the distribution $\eta_{n-1}$. Under our assumptions, the analysis of Section \ref{sec:main_res} applies in this context.

Whilst such an example is quite simple it is indicative of more complex applications in the literature.
Article \cite{kantas} considers a state-space with dimension of about $10^4$ and dimension of adapted statistic of about $500$.
In such a setting, pre-specifying the covariance structure of the random walk Metropolis proposals is impractical; the adaptive SMC strategy of Section~\ref{sec:algo} provides a principled framework for automatically setting this covariance structure, see also Section \ref{sec:num_ex}.

%

%
%
\subsection{Filtering}\label{sec:ex_filt}
This section illustrates the case of having an adaptive weight function.
Consider a state-space model with observations $Y_{1:n} \in \mathcal{Y}^n$, unobserved Markov chain $U_{0:n}\in \mathcal{U}^{n+1}$ and joint density with respect to a dominating measure $\lambda_{\mathcal{Y}}^{\otimes n} \otimes \lambda_{\mathcal{U}}^{\otimes n+1}$ given by
\begin{equation*}
\eta_0(u_0) \prod_{p=1}^n g_p(u_p,y_p) \, f_p(u_{p-1},u_p)\  .
\end{equation*}
The probability $\eta_0(u_0) \, \lambda_{\mathcal{U}}(du_0)$ is the prior distribution for the initial state of the unobserved Markov chain, $g_p(u_p, y_p) \, \lambda_{\mathcal{Y}}(dy_p)$ is the conditional observation probability at time $p$ and $f_p(u_{p-1}, u_p) \, \lambda_{\mathcal{U}}(du_p)$ describes the dynamics of the unobserved Markov process.

A standard particle filter with proposal at time $p$ corresponding to the Markov kernel $\PP[U_{p} \in du_p \mid U_{p-1}=u_{p-1}] = m_p(u_{p-1}, u_p) \, \lambda_{\mathcal{U}}(du_p)$ has importance weights of the form
\begin{equation*}
G_p(x_p) = \frac{g_p(u_p,y_p) f_p(u_{p-1},u_p)}{m_p(u_{p-1},u_p)}
\end{equation*}
where here $x_p \equiv (x_p^{(1)}, x_p^{(2)}) \equiv (u_{p-1}, u_p)$. The process $\{X_p\}_{p=1}^n$ is Markovian with transition $M_p(x_{p-1}, dx_p) = \delta_{x_{p-1}^{(2)}}(dx_p^{(1)}) \, m_p(x_{p-1}^{(2)}, x^{(2)}_p) \, \lambda_{\mathcal{U}}(dx_p^{(2)})$. The marginals of the sequence of probability distributions $\eta_n$ described in Equation \eqref{eq:eta_def} are the standard predictors.

In practice, the choice of the proposal kernel $m_n$ is critical to the efficiency of the SMC algorithm. In such settings, one may want to exploit the information contained in the distribution $\eta_{n-1}$ in order to build efficient proposal kernels. Approximating the filter mean is a standard strategy. In these cases, both the Markov kernel $M_n$ and the weight function $G_{n-1}$ depend upon the distribution $\eta_{n-1}$; this is covered by the framework adapted in Section~\ref{sec:algo}. See \cite{doucet} and the references therein for ideas associated to such approaches. 

%
%
\section{Asymptotic Results for Adaptive SMC via Summary Statistics}\label{sec:main_res}
In this section we develop an asymptotic analysis of the class of adaptive SMC algorithm described in section \ref{sec:algo}.
After first stating our assumptions in Section \ref{sec.algo1.assump}, we give a WLLN in Section \ref{sec.algo1.WLLN} and a CLT in Section \ref{sec.algo1.CLT}.

%
%
\subsection{Assumptions} \label{sec.algo1.assump}
Our results will make use of conditions (A\ref{hyp:1}-\ref{hyp:2}) below. 
By $\domain(\xi_n) \subset \mathbb{R}^d$ we denote a convex set that contains the range of the statistic $\xi_n:E_{n-1} \to \mathbb{R}^{d}$.
%
%
\begin{hypA}
\label{hyp:1}
For each $n \geq 0$, function $(x,\xi) \mapsto G_{n, \xi}(x)$ is bounded and continuous at $\xi=\eta_{n}(\xi_{n+1})$ uniformly over $x\in E_n$. Statistics $\xi_{n+1}:E_n \to \bbR^d$ are bounded. For any test function $\phi_{n+1} \in \mathcal{B}_b(E_{n+1})$ the function $(x,\xi) \mapsto Q_{n+1, \xi} \phi_{n+1} (x)$ is bounded, continuous at 
$\xi=\eta_{n}(\xi_{n+1})$ uniformly over $x\in E_n$.
\end{hypA}
%
%
\begin{hypA}
\label{hyp:2}
For each $n \geq 0$  and test function $\phi_{n+1} \in \mathcal{B}_b(E_{n+1})$,  function 
$(x,\xi) \mapsto \partial_{\xi} Q_{n+1,\xi} \phi_{n+1}(x)$
is well defined on $E_{n} \times \domain(\xi_{n+1})$, bounded and continuous at $\xi = \eta_{n}(\xi_{n+1})$ uniformly over $x\in E_n$.
\end{hypA}
Assumptions (A\ref{hyp:1}-\ref{hyp:2}) are reasonably weak in comparison to some assumptions used in the SMC literature, such as in \cite{delmoral}, but are certainly not the weakest adopted for WLLN and CLTs (see e.g.~\cite{chopin1}). 
The continuity assumptions in (A\ref{hyp:2}) are associated to the use of a first order-Taylor expansion. We have defined $\domain(\xi_p)$ as a convex set because we need to compute integrals along segments between points of $\domain(\xi_p)$.  In general, we expect that the assumptions can be relaxed for unbounded functions at the cost of increased length and complexity of the proofs.\\

%
%
\subsection{Weak Law of Large Numbers}
\label{sec.algo1.WLLN}
In this section we establish a weak law of large numbers (WLLN). To do so, we state first a slightly stronger result that will be repeatedly used in the fluctuation analysis presented in Section \ref{sec.algo1.CLT}.
%
%
\begin{theorem}
\label{theo:wlln}
Assume (A\ref{hyp:1}). Let $\mathsf{V}$ be a Polish space and $\{V_N\}_{N \geq 0}$ a sequence of $\mathsf{V}$-valued random variables that converges in probability to $\mathsf{v} \in \mathsf{V}$. Let $n \ge 0$, $r \ge 1$ and $\phi_n: E_n \times \mathsf{V} \to \bbR^{r}$ be a bounded function continuous at $\mathsf{v} \in \mathsf{V}$ uniformly on $E_n$. The following limit holds in probability
\begin{equation*}
\lim_{N \to \infty} \eta_n^N\,[\,\phi_n(\cdot, V_N)\,]
=
\eta_n\,[\,\phi_n(\cdot, \mathsf{v})\,]\ .
\end{equation*}
\end{theorem}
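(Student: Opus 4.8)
The plan is to prove the statement by induction on $n$, mirroring the classical inductive proof of the WLLN for non-adaptive SMC (as in \cite{delmoral}), but carrying along the auxiliary parameter $V_N$ and exploiting the uniform-continuity notion of Definition \ref{def.unif.cont} to absorb the adaptivity. For the base case $n=0$: the particles $x_0^1,\dots,x_0^N$ are i.i.d.\ from $\eta_0$, independent of the randomness in $\{V_N\}$, so one first shows $\eta_0^N[\phi_0(\cdot,\mathsf{v})] \to \eta_0[\phi_0(\cdot,\mathsf{v})]$ in probability by the ordinary WLLN applied coordinatewise (the summands are bounded), and then replaces $\mathsf{v}$ by $V_N$ at the cost of an error bounded by $\sup_{x\in E_0}|\phi_0(x,V_N)-\phi_0(x,\mathsf{v})|$; this supremum goes to $0$ in probability because $V_N \prob \mathsf{v}$ and $\phi_0$ is continuous at $\mathsf{v}$ uniformly on $E_0$ in the sense of \eqref{eq.uniform.cont} — precisely the estimate that definition is designed to deliver.

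For the inductive step, assume the conclusion holds at time $n-1$ for every admissible test function. I would first handle the conditional expectation: writing $\E_{n-1}[\eta_n^N(\phi_n(\cdot,V_N))]$, note that given $\FF_{n-1}^N$ the particles at time $n$ are i.i.d.\ from $\Phi_{n,N}(\eta_{n-1}^N)$, so this conditional mean equals $\Phi_{n,N}(\eta_{n-1}^N)(\phi_n(\cdot,V_N)) = \eta_{n-1}^N(Q_{n,N}\,\phi_n(\cdot,V_N)) / \eta_{n-1}^N(G_{n-1,N})$. Now $Q_{n,N} = Q_{n,\eta_{n-1}^N(\xi_n)}$ and $G_{n-1,N} = G_{n-1,\eta_{n-1}^N(\xi_n)}$, so both numerator and denominator are of the form $\eta_{n-1}^N[\,\psi(\cdot,\,\eta_{n-1}^N(\xi_n),\,V_N)\,]$ for bounded functions $\psi$ continuous at the limiting parameter value uniformly on $E_{n-1}$, by (A\ref{hyp:1}). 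Applying the induction hypothesis twice — once to get $\eta_{n-1}^N(\xi_n) \prob \eta_{n-1}(\xi_n)$ (take $\phi = \xi_n$, no $V$-dependence), and then once more with the enlarged parameter $(\eta_{n-1}^N(\xi_n),V_N) \prob (\eta_{n-1}(\xi_n),\mathsf{v})$ in the Polish space $\mathbb{R}^d\times\mathsf{V}$ — yields that this conditional mean converges in probability to $\eta_{n-1}(Q_n\phi_n(\cdot,\mathsf{v}))/\eta_{n-1}(G_{n-1}) = \Phi_n(\eta_{n-1})(\phi_n(\cdot,\mathsf{v})) = \eta_n(\phi_n(\cdot,\mathsf{v}))$ via \eqref{eq:defPhi}. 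It remains to control the fluctuation $\eta_n^N(\phi_n(\cdot,V_N)) - \E_{n-1}[\eta_n^N(\phi_n(\cdot,V_N))]$; conditionally on $\FF_{n-1}^N$ this is an average of $N$ i.i.d.\ centred bounded terms, so its conditional variance is $O(1/N)$ uniformly, and hence it tends to $0$ in probability (e.g.\ by Chebyshev conditionally, then dominated convergence). Combining the two pieces gives the claim at time $n$.

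The main obstacle — or rather the point requiring care — is the \emph{joint} handling of the two sources of varying parameters: the data-driven statistic $\eta_{n-1}^N(\xi_n)$ and the external sequence $V_N$. The clean way around it is to observe that the induction hypothesis is stated for an \emph{arbitrary} Polish space $\mathsf{V}$ and arbitrary converging $\mathsf{V}$-valued sequence, so one simply applies it with $\mathsf{V}$ replaced by $\mathbb{R}^d \times \mathsf{V}$ and $V_N$ replaced by $(\eta_{n-1}^N(\xi_n), V_N)$ — but this requires first knowing $\eta_{n-1}^N(\xi_n)\prob \eta_{n-1}(\xi_n)$, which is itself an instance of the induction hypothesis at level $n-1$ with a constant $V$. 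So the induction should be set up to prove the statement at level $n-1$ first in the no-extra-parameter case and then bootstrap; alternatively, one proves the two statements simultaneously. A secondary technical point is verifying that sums/ratios/products of functions that are ``bounded and continuous at the limit point uniformly on $E_n$'' retain that property (so that $Q_{n,\xi}\phi_n$, $G_{n-1,\xi}$, and their ratio all fit the hypothesis of the theorem being proved) — this is routine given (A\ref{hyp:1}) and the strict positivity of the potentials, which keeps the denominator bounded away from $0$ in the limit.
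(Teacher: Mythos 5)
Your proposal is correct and follows essentially the same route as the paper's proof: induction on $n$, a base case via the i.i.d.\ WLLN plus Definition \ref{def.unif.cont}, and an inductive step splitting $\eta_n^N(\phi_n)-\eta_n(\phi_n)$ into a conditional-fluctuation term with conditional variance $O(1/N)$ and a bias term $[\Phi_{n,N}(\eta_{n-1}^N)-\eta_n](\phi_n)$ handled by the induction hypothesis applied to the adaptive parameter $\eta_{n-1}^N(\xi_n)$ together with Slutsky's lemma. The only (cosmetic) difference is that the paper first freezes $V_N$ at $\mathsf{v}$ using uniform continuity and then splits the bias into three explicit pieces, whereas you carry $V_N$ along by enlarging the parameter space to $\bbR^d\times\mathsf{V}$ — both are valid, and your closing remark correctly identifies the small continuity verifications this requires.
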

\begin{cor}[WLLN]
\label{cor:wlln}
Assume (A\ref{hyp:1}). Let $n \ge 0$, $r \ge 1$ and $\phi_n: E_n \to \bbR^{r}$ a bounded measurable function. The following limit holds in probability,
$\lim_{N \to \infty} \eta_n^N(\phi_n)
=
\eta_n(\phi_n)$.
\end{cor}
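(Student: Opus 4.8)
The plan is to prove Theorem \ref{theo:wlln} by induction on $n\ge 0$, since it immediately implies Corollary \ref{cor:wlln}: given a bounded measurable $\phi_n:E_n\to\bbR^r$, apply the theorem with $\mathsf{V}$ a singleton, $V_N\equiv\mathsf{v}$ and $\phi_n(x,\mathsf{v})=\phi_n(x)$ (trivially continuous at $\mathsf{v}$ uniformly on $E_n$). So the real work is the theorem. The base case $n=0$: the particles $x_0^{1:N}$ are i.i.d.\ from $\eta_0$, so for fixed $\xi$, $\eta_0^N(\phi_0(\cdot,\xi))\to\eta_0(\phi_0(\cdot,\xi))$ in probability by the ordinary WLLN (componentwise, using boundedness). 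To upgrade this to the random argument $V_N\to_{\PP}\mathsf{v}$, split
\begin{equation*}
\eta_0^N(\phi_0(\cdot,V_N))-\eta_0(\phi_0(\cdot,\mathsf{v}))
=\big[\eta_0^N(\phi_0(\cdot,V_N))-\eta_0^N(\phi_0(\cdot,\mathsf{v}))\big]
+\big[\eta_0^N(\phi_0(\cdot,\mathsf{v}))-\eta_0(\phi_0(\cdot,\mathsf{v}))\big].
\end{equation*}
The second bracket vanishes in probability by the fixed-argument WLLN. The first bracket is bounded by $\sup\{|\phi_0(x,V_N)-\phi_0(x,\mathsf{v})|:x\in E_0\}$, which by the uniform-continuity-at-$\mathsf{v}$ hypothesis (Definition \ref{def.unif.cont}) tends to $0$ in probability as $V_N\to_\PP\mathsf{v}$ — indeed for any $\varepsilon>0$ there is $\delta>0$ making that supremum $<\varepsilon$ whenever $d_{\mathsf{V}}(V_N,\mathsf{v})<\delta$, an event of probability $\to1$.

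For the inductive step, assume the result holds at time $n-1$ (for all bounded functions uniformly continuous at a limit point, and all sequences of auxiliary random variables converging in probability). I would first establish the ``conditional'' WLLN: conditionally on $\FF_{n-1}^N$, the particles $x_n^{1:N}$ are i.i.d.\ from $\Phi_{n,N}(\eta_{n-1}^N)$, so a conditional version of the WLLN gives that $\eta_n^N(\psi)-\Phi_{n,N}(\eta_{n-1}^N)(\psi)\to_\PP 0$ for bounded $\psi$ (this is standard: bound the conditional variance by $\|\psi\|_\infty^2/N$ and use Chebyshev, then remove the conditioning). The key identity is
\begin{equation*}
\Phi_{n,N}(\eta_{n-1}^N)(\psi)=\frac{\eta_{n-1}^N\big(Q_{n,\eta_{n-1}^N(\xi_n)}\,\psi\big)}{\eta_{n-1}^N\big(G_{n-1,\eta_{n-1}^N(\xi_n)}\big)}.
\end{equation*}
Now I want to pass to the limit in this ratio. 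By the inductive hypothesis applied to $\xi_n$ (bounded by (A\ref{hyp:1}), and constant in its auxiliary argument), $V_N:=\eta_{n-1}^N(\xi_n)\to_\PP \eta_{n-1}(\xi_n)=:\mathsf{v}_n$. Then apply the inductive hypothesis again at time $n-1$ with the function $(x,\xi)\mapsto Q_{n,\xi}\psi(x)$, which is bounded and uniformly (over $x$) continuous at $\mathsf{v}_n$ by (A\ref{hyp:1}), using this same $V_N$: this yields $\eta_{n-1}^N(Q_{n,V_N}\psi)\to_\PP \eta_{n-1}(Q_{n,\mathsf{v}_n}\psi)$. Likewise $\eta_{n-1}^N(G_{n-1,V_N})\to_\PP\eta_{n-1}(G_{n-1,\mathsf{v}_n})>0$, which is strictly positive since potentials are assumed strictly positive. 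By the continuous mapping theorem (ratio), $\Phi_{n,N}(\eta_{n-1}^N)(\psi)\to_\PP \eta_n(\psi)$ using the identity \eqref{eq:defPhi}. Combining with the conditional WLLN gives $\eta_n^N(\psi)\to_\PP\eta_n(\psi)$ for every bounded $\psi$ on $E_n$ (this is the fixed-argument case at time $n$).

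Finally, to obtain the full statement at time $n$ with a generic $V_N\to_\PP\mathsf{v}$ and $\phi_n:E_n\times\mathsf{V}\to\bbR^r$ bounded and uniformly continuous at $\mathsf{v}$, I repeat the two-term split as in the base case: $\eta_n^N(\phi_n(\cdot,V_N))-\eta_n^N(\phi_n(\cdot,\mathsf{v}))$ is bounded by the supremum over $x$ of $|\phi_n(x,V_N)-\phi_n(x,\mathsf{v})|\to_\PP 0$, while $\eta_n^N(\phi_n(\cdot,\mathsf{v}))\to_\PP\eta_n(\phi_n(\cdot,\mathsf{v}))$ is the fixed-argument case just proved (componentwise in $\bbR^r$). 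This closes the induction. The main obstacle — really the only subtle point — is the bookkeeping in the inductive step: one must apply the inductive hypothesis with a \emph{random} auxiliary parameter $V_N=\eta_{n-1}^N(\xi_n)$ whose convergence in probability is itself a consequence of a prior application of the inductive hypothesis, and one must be careful that the uniform-in-$x$ continuity hypotheses in (A\ref{hyp:1}) are exactly what licenses exchanging this limit with the $\eta_{n-1}^N$ integration; everything else (conditional Chebyshev bound, continuous mapping for the ratio, strict positivity of the denominator) is routine.
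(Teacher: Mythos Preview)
Your proposal is correct and follows essentially the same approach as the paper: induction on $n$, the split into the sampling error $\eta_n^N-\Phi_{n,N}(\eta_{n-1}^N)$ (handled by a conditional Chebyshev bound) and the propagation error $\Phi_{n,N}(\eta_{n-1}^N)-\eta_n$ (handled via the induction hypothesis and (A\ref{hyp:1})), followed by the uniform-continuity argument to pass from the fixed argument $\mathsf v$ to the random $V_N$. The only cosmetic difference is that the paper further breaks the propagation error into three pieces $B_1+B_2+B_3$ (separating the effect of $Q_{n,N}-Q_n$, of $\eta_{n-1}^N-\eta_{n-1}$, and of the denominator), whereas you apply the induction hypothesis with the \emph{random} auxiliary $V_N=\eta_{n-1}^N(\xi_n)$ directly to the numerator and denominator of the ratio and conclude by the continuous mapping theorem; this is a mild streamlining but not a different argument.
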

\begin{proof}[Proof of Theorem \ref{theo:wlln}]
It suffices to concentrate on  the scalar case $r=1$.
The proof is by induction on $n$. The initial  case $n=0$ is a direct consequence of  WLLN for i.i.d.~random variables and Definition \ref{def.unif.cont}. For notational convenience, in the rest of the proof we write $\bar{\phi}_n(\cdot)$ instead of $\phi_n(\cdot, \mathsf{v})$.
We assume the result at rank $n-1$ and proceed to the induction step.
Since $V_N$ converges in probability to $\mathsf{v} \in \mathsf{V}$, Definition \ref{def.unif.cont} shows that it suffices to prove that $[\eta_n^N-\eta_n]\big(\bar{\phi}_n\big)$ converges in probability to zero.  We use the decomposition
\begin{align*}
[\eta_n^N-\eta_n](\bar{\phi}_n) 
&= 
\big(\eta_n^N(\bar{\phi}_n)-\E_{n-1}[\eta_n^N(\bar{\phi}_n)]\big)
+
\big(\E_{n-1}[\eta_n^N(\bar{\phi}_n)]-\eta_n(\bar{\phi}_n)\big)\\
&=
[\eta_n^N-\Phi_{n,N}(\eta_{n-1}^N)](\bar{\phi}_n) 
+
[\Phi_{n,N}(\eta_{n-1}^N)-\eta_n](\bar{\phi}_n)
=: A(N) + B(N)\ .
\end{align*}
To conclude the proof, we now prove that each of these terms converges to zero in probability.
\begin{itemize}
\item
Since the expected value of $A(N)$ is zero, it suffices to prove that its moment of order two also converges to zero as $N$ goes to infinity. To this end, it suffices to notice that 
\begin{equation*}
\E_{n-1}\big[A(N)^2 \big] = \tfrac{1}{N}\,\E_{n-1}\Big[\big( \bar{\phi}(x^i_n) - \E_{n-1}[\bar{\phi}(x^i_n)] \big)^2 \Big] \leq \frac{\|\bar{\phi}\|^2_{\infty}}{N}\ .
\end{equation*}
\item
To treat the quantity $B(N)$, we use the definition of $\Phi_{n,N}(\eta_{n-1}^N)$ in \eqref{eq:defPhi} and decompose it as the sum of three terms $B(N) = B_1(N) + B_2(N) + B_3(N)$ with
\begin{align*}
B_1(N) &
=\eta_{n-1}^N \big\{ [Q_{n,N} - Q_n](\bar{\phi}_n) \big\} \, / \, \eta_{n-1}^N(G_{n-1,N})\  ;\\
B_2(N) &= [\eta_{n-1}^N-\eta_{n-1}]\big( Q_n(\bar{\phi}_n) \big) \, / \, \eta^N_{n-1}(G_{n-1,N})\  ;\\
B_3(N) &= \eta_{n-1}^N[ Q_n(\bar{\phi}_n) ] \times 
\big\{
1 / \eta^N_{n-1}(G_{n-1,N}) - 1 / \eta_{n-1}(G_{n-1})
\big\}\ .
\end{align*}
We prove that $B_i(N)$ converges in probability to zero for $i=1,2,3$. The induction hypothesis shows that $\eta_{n-1}^N(\xi_n)$ converges to $\eta_{n-1}(\xi_n)$ in probability. By Assumption~\ref{hyp:1}, the bounded function $(x,\xi) \mapsto G_{n-1, \xi}(x)$ is continuous at $\xi=\eta_{n-1}(\xi_n)$ uniformly on $E_{n-1}$; the induction hypothesis applies and $\eta^N_{n-1}(G_{n-1,N})$ converges in probability to $\eta_{n-1}(G_{n-1})$. Similarly, since $Q_n(\bar{\phi}) \in \mathcal{B}_b(E_{n-1})$ is bounded by boundedness of $\bar{\phi}_n$, it follows that $\eta^N_{n-1}[ Q_n(\bar{\phi}_n) ]$ converges in probability to $\eta_{n-1}[ Q_n(\bar{\phi}_n) ]$. Slutsky's Lemma thus yields that $B_2(N)$ and $B_3(N)$ converge to zero in probability. Finally, note that by Assumption \ref{hyp:1} the bounded function $(x,\xi) \mapsto Q_{n, \xi}(x,\bar{\phi}_n)$ is continuous at $\xi=\eta_{n-1}(\xi_n)$ uniformly on $E_{n-1}$; the induction yields 
\begin{align*}
\lim_{N \to \infty}
\eta_{n-1}^N \big\{ [Q_{n,N} - Q_n](\bar{\phi}_n) \big\}
=
\lim_{N \to \infty}
&\big\{\,\eta_{n-1}^N[Q_{n,N}(\bar{\phi}_n)] -
\eta_{n-1}[Q_n(\bar{\phi}_n)]\,\big\}\\
&-
\lim_{N \to \infty}
\big\{\,\eta_{n-1}^N[Q_{n}(\bar{\phi}_n)] -
\eta_{n-1}[Q_{n}(\bar{\phi}_n)]\,\big\} = 0\ ,
\end{align*}
which is enough for concluding that  $B_1(N)$ converges to zero in probability.
\end{itemize}
\end{proof}
As a corollary, one can establish a similar consistency result for the sequence of particle approximations $\gamma^N_n(\phi_n)$, defined in Equation \eqref{eq.approx.unnormalized.measure}, of the unnormalised quantity $\gamma_n(\phi_n)$. 
\begin{cor}
\label{cor.wlln.normalisation}
Assume (A\ref{hyp:1}). Let $n \ge 0$, $r \ge 1$ and $\phi_n: E_n \to \bbR^{r}$ be  a bounded measurable function. The following limit holds in probability,
$\lim_{N \to \infty} \gamma_n^N(\phi_n)
=
\gamma_n(\phi_n)$.
\end{cor}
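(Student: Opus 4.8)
The plan is to combine the WLLN already established (Corollary~\ref{cor:wlln}) with the product representation \eqref{eq.approx.unnormalized.measure} and the classical product identity for unnormalised Feynman--Kac measures. First I record the deterministic identity
\[
\gamma_n(\phi_n) = \Big\{ \prod_{p=0}^{n-1} \eta_p(G_p) \Big\}\, \eta_n(\phi_n)\ ,
\]
which follows by a short induction on $n$ from the recursions in \eqref{eq:defPhi}: since $M_n$ is Markovian, $Q_n 1 = G_{n-1}$, hence $\gamma_n(1) = \gamma_{n-1}(Q_n 1) = \gamma_{n-1}(G_{n-1}) = \gamma_{n-1}(1)\,\eta_{n-1}(G_{n-1})$, so that $\gamma_n(1) = \prod_{p=0}^{n-1} \eta_p(G_p)$, and then $\gamma_n(\phi_n) = \gamma_n(1)\,\eta_n(\phi_n)$ by \eqref{eq:eta_def}. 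In view of \eqref{eq.approx.unnormalized.measure} it is therefore enough to prove that $\eta_p^N(G_{p,N}) \prob \eta_p(G_p)$ for each $p \ge 0$, and then to pass to the limit in the finite product.

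The convergence of each such factor is exactly where the stronger Theorem~\ref{theo:wlln} (rather than merely Corollary~\ref{cor:wlln}) is needed, because $G_{p,N} = G_{p,\eta_p^N(\xi_{p+1})}$ is a genuinely random potential. By (A\ref{hyp:1}) the statistic $\xi_{p+1}$ is bounded, so Corollary~\ref{cor:wlln} gives $\eta_p^N(\xi_{p+1}) \prob \eta_p(\xi_{p+1})$. Again by (A\ref{hyp:1}), the map $(x,\xi) \mapsto G_{p,\xi}(x)$ is bounded and continuous at $\xi=\eta_p(\xi_{p+1})$ uniformly over $x\in E_p$; applying Theorem~\ref{theo:wlln} with $\mathsf{V}=\bbR^d$, $V_N = \eta_p^N(\xi_{p+1})$, $\mathsf{v}=\eta_p(\xi_{p+1})$ and $\phi_p(x,\xi) = G_{p,\xi}(x)$ then yields $\eta_p^N(G_{p,N}) = \eta_p^N[\phi_p(\cdot,V_N)] \prob \eta_p[\phi_p(\cdot,\mathsf{v})] = \eta_p(G_p)$. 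Note that $\eta_p(G_p)$ is finite by boundedness of $G_p$ and strictly positive since $G_{p,\xi}>0$ throughout.

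Finally, since $n$ is fixed the product $\prod_{p=0}^{n-1} \eta_p^N(G_{p,N})$ involves only finitely many factors, each converging in probability to a finite limit; by the continuous mapping theorem (equivalently, repeated use of Slutsky's lemma) $\prod_{p=0}^{n-1} \eta_p^N(G_{p,N}) \prob \prod_{p=0}^{n-1} \eta_p(G_p)$. Multiplying by $\eta_n^N(\phi_n) \prob \eta_n(\phi_n)$ (Corollary~\ref{cor:wlln}) and invoking Slutsky's lemma once more gives
\[
\gamma_n^N(\phi_n) = \Big\{ \prod_{p=0}^{n-1} \eta_p^N(G_{p,N}) \Big\}\, \eta^N_n(\phi_n) \;\prob\; \Big\{ \prod_{p=0}^{n-1} \eta_p(G_p) \Big\}\, \eta_n(\phi_n) = \gamma_n(\phi_n)\ ,
\]
which is the claim (the case $n=0$ being trivial, the product being empty). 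There is no substantial obstacle beyond bookkeeping; the only point that genuinely requires care is that the potentials $G_{p,N}$ are random, which is precisely why Theorem~\ref{theo:wlln} is used for the factors $\eta_p^N(G_{p,N})$ in place of the plain WLLN.
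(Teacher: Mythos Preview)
Your proof is correct and follows essentially the same route as the paper: write $\gamma_n^N(\phi_n)$ and $\gamma_n(\phi_n)$ as products of the normalised measures applied to the potentials, prove convergence of each factor, and conclude by Slutsky. In fact you are slightly more careful than the paper on one point: since $G_{p,N}=G_{p,\eta_p^N(\xi_{p+1})}$ is random, the convergence $\eta_p^N(G_{p,N})\prob\eta_p(G_p)$ genuinely requires Theorem~\ref{theo:wlln} (with the random parameter), whereas the paper's proof cites only Corollary~\ref{cor:wlln} at this step.
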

\begin{proof}
Since $\gamma^N_n(\phi_n) = \gamma^N_n(1) \, \eta^N_n(1)$ and $\gamma_n(\phi_n) = \gamma_n(1) \, \eta^N_n(1)$, by Corollary \ref{cor:wlln} it suffices to prove that $\gamma^N_n(1) = \eta^N_0(G_0) \times \ldots \times \eta^N_{n-1}(G_{n-1})$ converges in probability to the value $\gamma_n(1)=\eta_0(G_0) \times \ldots \times \eta_{n-1}(G_{n-1})$. By Assumption \ref{hyp:1}, the potentials $\{ G_p \}_{p \geq 0}$ are bounded so that Corollary \ref{cor:wlln} applies and the quantity $\eta^N_{p}(G_{p})$ converges in probability to $\eta_{p}(G_{p})$ for any index $p \geq 0$. The conclusion directly follows.
\end{proof}

%
%
\subsection{Central Limit Theorems}
\label{sec.algo1.CLT}
In this section, for a test function $\phi_n: E_n \to \mathbb{R}^r$, we carry out a fluctuation analysis of the particle approximations $\gamma^N_n(\phi_n)$ and $\eta^N_n(\phi_n)$ around their limiting value. As expected, we prove that there is convergence at standard Monte-Carlo rate $N^{-1/2}$; in some situations, comparison with the perfect and non-adaptive algorithm is discussed in Section \ref{sec.stability}. 
%
%
\begin{theorem} \label{thm.clt.unnormalised}
Assume (A\ref{hyp:1}-\ref{hyp:2}). Let $n \ge 0$, $r \ge 1$ and $\phi_n: E_n \to \bbR^{r}$ be a bounded measurable function. The sequence $\sqrt{N} \, [\gamma^N_n - \gamma_n](\phi_n)$ converges weakly to a centered Gaussian distribution  with covariance
\begin{equation} \label{eq.asymp.var.unnormalised}
\sum_{p=0}^n \gamma_p(1)^2 \, \Sigma_{\eta_p}(\LL_{p,n} \phi_n)
\end{equation}
where the linear operator $\LL_p:\mathcal{B}_b(E_p)^r \to \mathcal{B}_b(E_{p-1})^r$ is defined by
\begin{equation} \label{eq.semigroup}
\LL_p \phi_p = 
\eta_{p-1}[\partial_{\xi} Q_p \phi_p ] \, \big( \,  \xi_p - \eta_{p-1}(\xi_p) \, \big)
+ Q_p(\phi_p)
\end{equation}
with $\LL_{p,n} := \LL_{p+1} \circ \ldots \circ \LL_{n}$ and $\LL_{n,n} = \mathrm{Id}$.
\end{theorem}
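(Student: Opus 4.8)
The plan is to prove Theorem \ref{thm.clt.unnormalised} by induction on $n$, following the classical Del Moral-type telescoping decomposition of $\sqrt{N}[\gamma_n^N - \gamma_n](\phi_n)$ into a sum of martingale-increment-like terms, but with the extra complication that the practical algorithm uses the random operators $Q_{n,N}$ rather than $Q_n$. First I would write the standard telescoping identity $\gamma_n^N(\phi_n) - \gamma_n(\phi_n) = \sum_{p=0}^n \big[\gamma_p^N(Q_{p+1,n}\phi_n) - \gamma_{p-1}^N(Q_{p,n}\phi_n)\big]$-type sum (with the convention that the $p$-th term isolates the fluctuation introduced at stage $p$), using the many-body identities \eqref{eq:defPhi} and their $N$-particle analogues. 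The key new ingredient is to handle, at each level $p$, the discrepancy between $Q_{p,N}$ and $Q_p$: here I would invoke Assumption (A\ref{hyp:2}) to perform a first-order Taylor expansion of $\xi \mapsto Q_{p,\xi}\phi$ around $\xi = \eta_{p-1}(\xi_p)$, evaluated at the random point $\xi = \eta_{p-1}^N(\xi_p)$, so that $Q_{p,N}\phi \approx Q_p\phi + \partial_\xi Q_p\phi \cdot \big(\eta_{p-1}^N(\xi_p) - \eta_{p-1}(\xi_p)\big)$. Multiplying by $\sqrt N$ and using that $\sqrt N\,[\eta_{p-1}^N - \eta_{p-1}](\xi_p)$ is, by the induction hypothesis, asymptotically Gaussian, this is exactly what produces the first term in the definition \eqref{eq.semigroup} of $\LL_p$; the remainder of the Taylor expansion must be shown to be $o_{\PP}(1)$ after multiplication by $\sqrt N$, which uses the uniform continuity of $\partial_\xi Q_{p,\xi}\phi$ at the limiting parameter together with the WLLN (Theorem \ref{theo:wlln}).

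Concretely, the induction step would proceed as follows. Assume the CLT holds at rank $n-1$ for all bounded test functions on $E_{n-1}$, in particular jointly for $(\eta_{n-1}^N(\xi_n), \gamma_{n-1}^N(\psi))$ for the relevant $\psi$. Conditioning on $\FF_{n-1}^N$, the last-stage fluctuation $\sqrt N\,[\gamma_n^N(\phi_n) - \gamma_{n-1}^N(Q_{n,N}\phi_n)]$ is a normalized sum of conditionally i.i.d. centered terms; a conditional Lindeberg CLT (or the standard Dvoretzky-type argument, as in \cite{delmoral,chopin1}) gives that, conditionally, it converges to a centered Gaussian with variance $\gamma_{n-1}^N(1)^2\,\eta_{n-1}^N\big(Q_{n,N}[(\phi_n - \text{centering})^{\otimes 2}]\big)$-type quantity, which by the WLLN converges in probability to $\gamma_n(1)^2\,\Sigma_{\eta_n}(\phi_n)$; this contributes the $p=n$ term of \eqref{eq.asymp.var.unnormalised} since $\LL_{n,n} = \mathrm{Id}$. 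Then $\sqrt N\,[\gamma_{n-1}^N(Q_{n,N}\phi_n) - \gamma_{n-1}(Q_n\phi_n)]$ is split, via the Taylor expansion above, into $\sqrt N\,[\gamma_{n-1}^N - \gamma_{n-1}](Q_n\phi_n)$, plus $\gamma_{n-1}^N(1)\cdot \eta_{n-1}^N[\partial_\xi Q_n\phi_n]\cdot \sqrt N\,[\eta_{n-1}^N - \eta_{n-1}](\xi_n)$, plus a negligible remainder. Recognizing $\eta_{n-1}^N[\partial_\xi Q_n\phi_n]\cdot(\xi_n - \eta_{n-1}(\xi_n)) + Q_n\phi_n$ as $\LL_n\phi_n$ up to lower-order terms (note $\gamma_{n-1}^N[(\xi_n - \eta_{n-1}(\xi_n))] = \gamma_{n-1}^N(1)[\eta_{n-1}^N(\xi_n) - \eta_{n-1}(\xi_n)]$, so the two pieces combine into $\sqrt N\,[\gamma_{n-1}^N - \gamma_{n-1}](\LL_n\phi_n)$ modulo $o_\PP(1)$ terms controlled by the WLLN applied to $\eta_{n-1}^N[\partial_\xi Q_n\phi_n]$ converging to $\eta_{n-1}[\partial_\xi Q_n\phi_n]$), the induction hypothesis applied to the bounded function $\LL_n\phi_n \in \mathcal{B}_b(E_{n-1})^r$ delivers the remaining terms $\sum_{p=0}^{n-1}\gamma_p(1)^2\,\Sigma_{\eta_p}(\LL_{p,n-1}(\LL_n\phi_n)) = \sum_{p=0}^{n-1}\gamma_p(1)^2\,\Sigma_{\eta_p}(\LL_{p,n}\phi_n)$. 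Finally, asymptotic independence of the two contributions (the conditionally-Gaussian last-stage term given $\FF_{n-1}^N$, and the $\FF_{n-1}^N$-measurable term from earlier stages) is obtained by the usual characteristic-function / conditional-CLT argument, yielding the sum of the variances.

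I would also need to address the vector-valued case ($r \geq 1$): one reduces to the scalar statements by the Cramér–Wold device, applying the scalar CLT to $\langle a, \phi_n\rangle$ for arbitrary $a \in \bbR^r$, and checking that $\LL_p$ commutes with this linear functional so the limiting covariance assembles correctly into the matrix form \eqref{eq.asymp.var.unnormalised}; the boundedness of $\LL_{p,n}\phi_n$ at every stage (needed to feed the induction and to invoke Theorem \ref{theo:wlln}) follows from (A\ref{hyp:1})--(A\ref{hyp:2}), since $Q_p$ maps bounded functions to bounded functions and $\partial_\xi Q_p\phi$ and $\xi_p$ are bounded. I expect the main obstacle to be the rigorous control of the Taylor remainder: one must bound $\sqrt N \cdot \gamma_{n-1}^N\big(R_n(\eta_{n-1}^N(\xi_n))\big)$ where $R_n(\xi) = Q_{n,\xi}\phi_n - Q_n\phi_n - \partial_\xi Q_n\phi_n\cdot(\xi - \eta_{n-1}(\xi_n))$, and show it is $o_\PP(1)$. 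Writing $R_n$ as an integral of $\partial_\xi Q_{n,\cdot}\phi_n$ along the segment from $\eta_{n-1}(\xi_n)$ to $\eta_{n-1}^N(\xi_n)$ (this is why $\domain(\xi_n)$ is assumed convex), the uniform continuity in (A\ref{hyp:2}) gives that for any $\epsilon > 0$, on the event $\{|\eta_{n-1}^N(\xi_n) - \eta_{n-1}(\xi_n)| < \delta\}$ one has $\|R_n(\eta_{n-1}^N(\xi_n))\|_\infty \leq \epsilon\,|\eta_{n-1}^N(\xi_n) - \eta_{n-1}(\xi_n)|$; combined with the rank-$(n-1)$ CLT giving $\sqrt N\,|\eta_{n-1}^N(\xi_n) - \eta_{n-1}(\xi_n)| = O_\PP(1)$, this forces $\sqrt N\,\|R_n\|_\infty \to_\PP 0$, and since $\gamma_{n-1}^N(1) = O_\PP(1)$ by Corollary \ref{cor.wlln.normalisation}, the remainder term vanishes. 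Assembling these estimates carefully, with the conditional-CLT machinery, is the technical heart of the argument.
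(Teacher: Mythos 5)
Your proposal is correct and follows essentially the same route as the paper: induction on $n$, a split of $\sqrt N\,[\gamma_n^N-\gamma_n](\phi_n)$ into the conditionally centered last-stage term (handled by a conditional characteristic-function/Lindeberg argument whose conditional variance converges to $\gamma_n(1)^2\Sigma_{\eta_n}(\phi_n)$ by the WLLN) and the $\FF_{n-1}^N$-measurable term, which is reduced to $\sqrt N\,[\gamma_{n-1}^N-\gamma_{n-1}](\LL_n\phi_n)+o_\PP(1)$ via exactly the first-order Taylor expansion of $\xi\mapsto Q_{n,\xi}\phi_n$ with the integral-form remainder along the segment (this is the paper's $\omega(x,z)$ device, relying on the convexity of $\domain(\xi_n)$ and the uniform continuity in (A2)). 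The only cosmetic differences are in bookkeeping: the paper absorbs your main term and remainder into a single averaged-derivative factor rather than separating them, but the estimates are identical.
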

\begin{proof}
For notational convenience, we concentrate on the scalar case $r=1$. The proof of the multi-dimensional case is identical, with covariance matrices replacing scalar variances.
We proceed by induction on the parameter $n \geq 0$.
The case $n=0$ follows from the usual CLT for i.i.d.\@ random variables.
To prove the induction step it suffices to show that for any $t \in \bbR$ the following identity holds
\begin{equation}
\label{eq.induction.norm.clt}
\lim_{N \to \infty} \, \E\,[\,e^{i t \sqrt{N} \, 
[ \gamma^N_n-\gamma_n ](\phi_n)}\,]
= 
e^{-\frac12 t^2 \, \gamma_n(1)^2 \, \Sigma_{\eta_n}(\phi_n)} \, 
\lim_{N \to \infty} \,
\E\,[\,e^{i t \sqrt{N} \, [ \gamma^N_{n-1}-\gamma_{n-1} ](\LL_n \phi_n)}\,]\ .
\end{equation}
\noindent
Indeed, assuming that the induction hypothesis holds at time $n-1$, we have that
\begin{equation*}
\lim_{N \to \infty} \; \E\,[\,e^{i t \sqrt{N} \, [ \gamma^N_{n-1}-\gamma_{n-1} ](\LL_n \phi_n)}\,]
=
\exp\big\{ -\tfrac{1}{2} t^2 \, \sum_{p=0}^{n-1} \gamma_p(1)^2 \, \Sigma_{\eta_p}(\LL_{p,n} \phi_n) \big\}\ 
\end{equation*}
and the proof of the induction step then follows from Levy's continuity theorem and \eqref{eq.induction.norm.clt}. 
To prove \eqref{eq.induction.norm.clt} we use the following decomposition
\begin{align*} 
\begin{aligned} \,
[\gamma^N_n-\gamma_n ](\phi_n)
&=
\big\{ \gamma^N_n(\phi_n) - \E_{n-1}[\gamma^N_n(\phi_n)] \big\}
+
\big\{ \E_{n-1}[\gamma^N_n(\phi_n)]
-
\gamma_n(\phi_n) \big\}\\
&=: \wtilde{A}(N) + \wtilde{B}(N)\ .
\end{aligned}
\end{align*}
Since $\wtilde{B}(N) \in \mathscr{F}^N_{n-1}$ the expectation $\E[e^{i t \sqrt{N} \, [ \gamma^N_n-\gamma_n ](\phi_n)}]$ can be decomposed as
\begin{align*} 
\begin{aligned}
\E \Big[\Big( \E_{n-1}\big[ e^{it \sqrt{N} \wtilde{A}(N)}\big] 
&-
e^{-\frac12 t^2 \, \gamma_n(1)^2 \, \Sigma_{\eta_n}(\phi_n)}
\Big) \times e^{it \sqrt{N} \, \wtilde{B}(N)} \Big]\\
&\qquad+
e^{-\frac12 t^2 \, \gamma_n(1)^2 \, \Sigma_{\eta_n}(\phi_n)} \times \E \big[ e^{it \sqrt{N} \, \wtilde{B}(N)} \big]\ .
\end{aligned}
\end{align*}
As a consequence,  \eqref{eq.induction.norm.clt} follows once it is established that the limit
\begin{equation} 
\label{eq.cv.prob.fourier} 
\lim_{N \to \infty} \E_{n-1}\Big[ e^{it \sqrt{N} \wtilde{A}(N)} \Big]
= 
\exp\big\{-\tfrac12 t^2 \, \gamma_n(1)^2 \, \Sigma_{\eta_n}(\phi_n) \big\}
\end{equation}
holds in probability and that $\sqrt{N} \, \wtilde{B}(N) = \sqrt{N} \,  [\gamma^N_{n-1} - \gamma_{n-1}](\LL_n(\phi_n)) + o_{\mathbb{P}}(1)$. 
We finish the proof of Theorem \ref{thm.clt.unnormalised} by establishing these two results.
\begin{itemize}
\item 
Quantity $\wtilde{A}(N)$ also reads as $\gamma^N_n(1) \, A(N)$ with $A(N) := \big[ \eta^N_n-\Phi_{n,N}(\eta^N_{n-1}) \big](\phi_n)$. 
By Corollary \ref{cor.wlln.normalisation}, $\gamma_n^N(1)$ converges in probability to $\gamma_n(1)$; to prove that $\E_{n-1}\big[ e^{it \sqrt{N} \wtilde{A}(N)}\big]$ converges in probability to $\exp\big\{-\frac12 t^2 \, \gamma_n(1)^2 \, \Sigma_{\eta_n}(\phi_n) \big\}$ it thus suffices to show that $\E_{n-1}\big[ e^{it \sqrt{N} A(N)}\big]$ converges in probability to $\exp\big\{-\tfrac12 t^2 \, \Sigma_{\eta_n}(\phi_n) \big\}$. We will exploit the following identity
\begin{equation*}
\E_{n-1}\,\Big[\, e^{ i \, t \, \sqrt{N} \, A(N) }\,\Big]
=
\E_{n-1}\,\Big[\,e^{ i \, t \, \{ \phi_n(X_N)- \E_{n-1}[\phi_n(X_N)] \} / \sqrt{N}}\,\Big]^N
\end{equation*}
with $X_N$ is distributed according to $\sum_{i=1}^N \frac{G_{n-1,N}(x_{n-1}^i)}{\sum_{j=1}^N G_{n-1,N}(x_{n-1}^i)} \, M_{n,N}(x_{n-1}^i, dx)$. Since the test function $\phi_n$ is bounded, a Taylor expansion yields that
\begin{equation*}
\E_{n-1}\,\Big[\, e^{ i \, t \, \{ \phi_n(X_N)- \E_{n-1}[\phi_n(X_N)] \} / \sqrt{N}}\,\Big]
=
1 - \tfrac{t^2}{N}\, \Var_{n-1}[\phi_n(X_N)] + N^{-3/2} \times \mathcal{O}_{\mathbb{P}}(1)\ .
\end{equation*}
Consequently, $\E_{n-1}[e^{i t \sqrt{N} \, A(N)}] = \exp\big\{ - t^2 \, \Var_{n-1}[\phi_n(X_N)]  /  2\big\} + o_{\PP}(1)$ and the proof is complete once it is shown that 
\begin{align*}
\Var_{n-1}[\phi_n(X_N)]
&=
\sum_{i=1}^N G_{n-1,N}(x_{n-1}^i) M_{n,N}(\phi_n^2)(x_{n-1}^i) \; / \; \sum_{i=1}^N G_{n-1,N}(x_{n-1}^i)
\\
&\qquad -
\Big\{ \sum_{i=1}^N G_{n-1,N}(x_{n-1}^i) M_{n,N}(\phi_n)(x_{n-1}^i) \; / \; \sum_{i=1}^N G_{n-1,N}(x_{n-1}^i) \Big\}^2\\
&=
\eta^N_{n-1}\big[Q_{n-1, \eta^N_{n-1}(\xi_{n})}\phi_n^2 \big] \; / \; \eta^N_{n-1}\big[ G_{n-1, \eta^N_{n-1}(\xi_{n})} \big]
\\
&\qquad -
\Big\{ \eta^N_{n-1}\big[Q_{n-1, \eta^N_{n-1}(\xi_{n})}\phi_n \big] \; / \; \eta^N_{n-1}\big[ G_{n-1, \eta^N_{n-1}(\xi_{n})} \big] \Big\}^2
\end{align*}
converges in probability to $\Sigma_{\eta_n}(\phi_n)$. 
By Assumption \ref{hyp:1},  functions 
$(x,\xi) \mapsto G_{n-1, \xi}(x)$, 
$(x,\xi) \mapsto Q_{n, \xi} \phi_n(x)$, 
$(x,\xi) \mapsto Q_{n, \xi} \phi_n^2(x)$ are bounded and continuous at $\xi=\eta_{n-1}(\xi_{n})$ uniformly on $E_{n-1}$. By Corollary \ref{cor:wlln}, $\eta^N_{n-1}(\xi_{n})$ converges in probability to $\eta_{n-1}(\xi_{n})$;
by Theorem \ref{theo:wlln} and Slutsky's Lemma we get that $\Var_{n-1}[\phi_n(X_N)]$ converges in probability to
\begin{align*}
\eta_{n-1} [ Q_n(\phi_n^2) ] / \eta_{n-1}(G_{n}) 
-
\big( \eta_{n-1} [ Q_n(\phi_n) ] / \eta_{n-1}(G_{n}) \big)^2\ ,
\end{align*}
which is another formula for $\eta_n(\phi_n^2) - \eta_n(\phi_n)^2 = \Sigma_{\eta_n}(\phi_n)$,
as required.
\item
To prove that $\sqrt{N} \, \wtilde{B}(N) = \sqrt{N} \,  [\gamma^N_{n-1} - \gamma_{n-1}](\LL_n(\phi_n)) + o_{\mathbb{P}}(1)$ we write $\wtilde{B}(N)$ as
\begin{equation} \label{eq.decomposition.B}
\gamma^{N}_{n-1}(1) \times \eta^N_{n-1} [Q_{n,N}-Q_n](\phi_n)
\;+\;
[\gamma^N_{n-1}-\gamma_{n-1}](Q_n \phi_n)\ .
\end{equation}
Furthermore, we have 
\begin{equation} \label{eq.Q.diff.eta}
\eta^N_{n-1} [Q_{n,N}-Q_n](\phi_n) = \eta^N_{n-1}\,[\,\omega(\cdot, \eta^N_{n-1}(\xi_n))\,] \times
[\eta^N_{n-1}-\eta_{n-1}](\xi_n)
\end{equation}
with 
$
\omega(x, z) := \int_0^1 \partial_{\xi} Q_{n,\xi} \phi_n (x)|_{\xi= \eta_{n-1}(\xi_n) + \lambda(z - \eta_{n-1}(\xi_n))} \, d \lambda
$.
Under Assumption \ref{hyp:2}, function $\omega$ 
is bounded and continuous at $z = \eta_{n-1}(\xi_n)$ uniformly over $x \in E_{n-1}$. Theorem \ref{theo:wlln} applies so that $\eta^N_{n-1}\,[\,\omega(\cdot, \eta^N_{n-1}(\xi_n))\,]\rightarrow\eta_{n-1}\,[\, \omega(\cdot, \eta_{n-1}(\xi_n))\,] = \eta_{n-1}[\partial_{\xi}Q_n(\phi)]$, in probability. The induction hypothesis, Slutky's Lemma and standard manipulations yield that $\sqrt{N} \times \gamma^N_n [Q_{n,N}-Q_n](\phi_n)$ equals
\begin{equation*}
\sqrt{N} \times \eta_{n-1}\big[\partial_{\xi}Q_n(\phi) ] \times [\gamma^N_{n-1}-\gamma_{n-1}](\xi_n-\eta_{n-1}(\xi_n)) + o_{\PP}(1)\ .
\end{equation*}
It then follows from \eqref{eq.decomposition.B} that
$\sqrt{N} \, \wtilde{B}(N) = \sqrt{N} \, [\gamma^N_{n-1} - \gamma_{n-1}](\LL_n \phi_n) \;+\; o_{\PP}(1)$.
\end{itemize}
This concludes the proof of the induction steps and finishes the proof of Theorem \ref{thm.clt.unnormalised}.
\end{proof}

In the case where the summary statistics are constant, i.e. $\xi_p \equiv C\in\mathbb{R}$ for $p \geq 0$, expression \eqref{eq.asymp.var.unnormalised} reduces to the usual non-adaptive asymptotic variance as presented, for example, in \cite{delmoral}. In the special case $\phi_n \equiv 1$, one obtains the following expression for the asymptotic variance of the relative normalisation constant $\gamma^N_n(1) / \gamma_n(1)$.

\begin{cor} Assume (A\ref{hyp:1}-\ref{hyp:2}) and  let $n \ge 0$ be a non-negative integer. Then the quantity $\sqrt{N} \, \big\{ \gamma^N_n(1) / \gamma_n(1) - 1\big\}$ converges, as $N \to \infty$, to a centered Gaussian distribution with variance
\begin{equation*}
\sum_{p=0}^n 
\frac{\Var_{\eta_p}(\LL_{p,n} \, 1)}
{\prod_{k=p}^{n-1} \eta_k(G_k)^2}\ .
\end{equation*}
\end{cor}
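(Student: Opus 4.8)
The plan is to specialize Theorem~\ref{thm.clt.unnormalised} to the test function $\phi_n \equiv 1$ and then rescale. Applying the theorem with $\phi_n = 1$, the sequence $\sqrt{N}\,[\gamma^N_n - \gamma_n](1)$ converges weakly to a centered Gaussian with variance $\sum_{p=0}^n \gamma_p(1)^2\,\Sigma_{\eta_p}(\LL_{p,n}\,1)$. Since $\gamma_n(1) = \prod_{k=0}^{n-1}\eta_k(G_k)$ is a deterministic strictly positive constant (positivity of the potentials is assumed throughout), dividing by $\gamma_n(1)$ is a continuous linear map, so $\sqrt{N}\,\{\gamma^N_n(1)/\gamma_n(1) - 1\}$ converges weakly to a centered Gaussian with variance $\sum_{p=0}^n \big(\gamma_p(1)/\gamma_n(1)\big)^2\,\Sigma_{\eta_p}(\LL_{p,n}\,1)$.

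It then remains to simplify the constant $\gamma_p(1)/\gamma_n(1)$ and to rewrite $\Sigma_{\eta_p}(\LL_{p,n}\,1)$ as $\Var_{\eta_p}(\LL_{p,n}\,1)$. For the first point, using $\gamma_n(1) = \prod_{k=0}^{n-1}\eta_k(G_k)$ and $\gamma_p(1) = \prod_{k=0}^{p-1}\eta_k(G_k)$, the ratio telescopes to $\gamma_p(1)/\gamma_n(1) = 1/\prod_{k=p}^{n-1}\eta_k(G_k)$, whose square is exactly $1/\prod_{k=p}^{n-1}\eta_k(G_k)^2$, matching the denominator in the claimed expression (with the usual convention that an empty product equals $1$ when $p = n$). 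For the second point, $\Sigma_{\eta_p}$ applied to a scalar function reduces to the ordinary variance $\Var_{\eta_p}$ by the definition of $\Sigma_\mu(\phi)$ given in the notations, since $u \otimes v$ is a scalar when $d = 1$; here $\LL_{p,n}\,1 = \LL_{p+1}\circ\cdots\circ\LL_n(1)$ is indeed scalar-valued, so this identification is immediate.

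There is essentially no obstacle here: the corollary is a direct substitution and cosmetic rearrangement of Theorem~\ref{thm.clt.unnormalised}. The only point requiring a word of care is the convention for the product $\prod_{k=p}^{n-1}\eta_k(G_k)$ when $p = n$, which should be read as the empty product $1$, consistent with $\LL_{n,n} = \mathrm{Id}$ contributing the term $\Var_{\eta_n}(1) = 0$ anyway — so the $p=n$ summand vanishes and the convention is harmless. Thus the proof is: invoke Theorem~\ref{thm.clt.unnormalised} with $\phi_n \equiv 1$, apply the continuous mapping theorem with the scaling by $\gamma_n(1)^{-1}$, and simplify the resulting variance using the telescoping identity for $\gamma_p(1)/\gamma_n(1)$ and the scalar identity $\Sigma_{\eta_p} = \Var_{\eta_p}$.
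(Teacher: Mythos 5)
Your proposal is correct and is exactly the intended derivation: the paper states this corollary without proof as the special case $\phi_n\equiv 1$ of Theorem~\ref{thm.clt.unnormalised}, combined with the identity $\gamma_n(1)=\prod_{k=0}^{n-1}\eta_k(G_k)$ (used elsewhere in the paper, e.g.\ in the proofs of Corollary~\ref{cor.wlln.normalisation} and Theorem~\ref{theo:clt}) and the fact that dividing by the deterministic constant $\gamma_n(1)$ simply rescales the Gaussian limit. Nothing is missing.
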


Similarly, one can obtain a CLT for the empirical normalised measures $\eta^N_n(\phi_n)$:
%
%
\begin{theorem}
\label{theo:clt}
Assume (A\ref{hyp:1}-\ref{hyp:2}). Let $n \ge 0$, $r \ge 1$ and $\phi_n: E_n \to \bbR^{r}$ be a bounded measurable function. The sequence $\sqrt{N} \, [\eta^N_n - \eta_n](\phi_n)$ converges weakly to a centered Gaussian distribution  with covariance
\begin{equation} \label{eq.asymp.var.normalised}
\Sigma_n(\phi_n) 
:= 
\sum_{p=0}^n \frac{\gamma_p(1)^2}{\gamma_n(1)^2} \, \Sigma_{\eta_p}\big[\LL_{p,n} \big(\phi_n-\eta_n(\phi_n)\big) \big]
\end{equation}
with the linear operators $\LL_p$ for $p \geq 0$ as defined in \eqref{eq.semigroup}. The asymptotic variances satisfy
\begin{equation} \label{eq.asymp.var.normalised.rec}
\Sigma_n(\phi_n) := \Sigma_{\eta_n}(\phi_n) + \frac{\Sigma_{n-1}\big[\LL_{n} \big(\phi_n - \eta_n(\phi_n)\big) \big]}{\eta_{n-1}(G_{n-1})^2}\ .
\end{equation}
\end{theorem}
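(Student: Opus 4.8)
The plan is to deduce Theorem~\ref{theo:clt} from Theorem~\ref{thm.clt.unnormalised} by the standard device of writing the normalised estimator as a ratio and linearising. First I would observe that for any bounded $\phi_n$,
\begin{equation*}
[\eta^N_n - \eta_n](\phi_n) = \frac{\gamma^N_n(1)}{\gamma_n(1)} \, [\eta^N_n - \eta_n](\phi_n) + \Big(1 - \frac{\gamma^N_n(1)}{\gamma_n(1)}\Big)\,[\eta^N_n - \eta_n](\phi_n),
\end{equation*}
and, using $\gamma^N_n(\psi) = \gamma^N_n(1)\,\eta^N_n(\psi)$ together with $\gamma_n(\psi) = \gamma_n(1)\,\eta_n(\psi)$, that
\begin{equation*}
\frac{\gamma^N_n(1)}{\gamma_n(1)}\,[\eta^N_n - \eta_n](\phi_n) = \frac{1}{\gamma_n(1)}\,[\gamma^N_n - \gamma_n]\big(\phi_n - \eta_n(\phi_n)\big).
\end{equation*}
By Corollary~\ref{cor.wlln.normalisation} (applied with $\phi_n\equiv 1$), $\gamma^N_n(1)/\gamma_n(1) \to 1$ in probability, and by Corollary~\ref{cor:wlln}, $[\eta^N_n - \eta_n](\phi_n)\to 0$ in probability; hence the second term above is $o_{\PP}(N^{-1/2})$ after multiplying by $\sqrt{N}$, since $\sqrt{N}\,[\eta^N_n-\eta_n](\phi_n)$ is tight (it will follow from the first term being asymptotically Gaussian, or one can invoke Theorem~\ref{thm.clt.unnormalised} directly to see $\sqrt N[\gamma_n^N-\gamma_n](\phi_n-\eta_n(\phi_n))$ is tight and divide by the convergent $\gamma_n^N(1)$). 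Therefore $\sqrt{N}\,[\eta^N_n - \eta_n](\phi_n)$ has the same weak limit as $\sqrt{N}\,\gamma_n(1)^{-1}\,[\gamma^N_n - \gamma_n]\big(\phi_n - \eta_n(\phi_n)\big)$.

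Next I would apply Theorem~\ref{thm.clt.unnormalised} to the centered test function $\bar\phi_n := \phi_n - \eta_n(\phi_n)$, which is bounded and measurable. This gives that $\sqrt{N}\,[\gamma^N_n - \gamma_n](\bar\phi_n)$ converges weakly to a centered Gaussian with covariance $\sum_{p=0}^n \gamma_p(1)^2\,\Sigma_{\eta_p}(\LL_{p,n}\bar\phi_n)$. Dividing by $\gamma_n(1)$ (Slutsky, since the $o_{\PP}(1)$ correction vanishes) yields the limit covariance
\begin{equation*}
\Sigma_n(\phi_n) = \sum_{p=0}^n \frac{\gamma_p(1)^2}{\gamma_n(1)^2}\,\Sigma_{\eta_p}\big[\LL_{p,n}\big(\phi_n - \eta_n(\phi_n)\big)\big],
\end{equation*}
which is exactly \eqref{eq.asymp.var.normalised}. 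The only mild subtlety is justifying the tightness claim cleanly; the neatest route is to note that both summands in the ratio decomposition are handled simultaneously by writing $\sqrt N[\eta_n^N-\eta_n](\phi_n) = \frac{\gamma_n(1)}{\gamma_n^N(1)}\cdot \frac{\sqrt N}{\gamma_n(1)}[\gamma_n^N-\gamma_n](\bar\phi_n)$, an exact identity, and applying Slutsky with the numerator converging weakly (Theorem~\ref{thm.clt.unnormalised}) and $\gamma_n(1)/\gamma_n^N(1)\to 1$ in probability (Corollary~\ref{cor.wlln.normalisation}).

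Finally, for the recursion \eqref{eq.asymp.var.normalised.rec} I would peel off the $p=n$ term in \eqref{eq.asymp.var.normalised}: since $\LL_{n,n} = \mathrm{Id}$, that term is $\Sigma_{\eta_n}\big(\phi_n - \eta_n(\phi_n)\big) = \Sigma_{\eta_n}(\phi_n)$. For the remaining sum over $p = 0,\dots,n-1$, use $\gamma_n(1) = \gamma_{n-1}(1)\,\eta_{n-1}(G_{n-1})$, so that $\gamma_p(1)^2/\gamma_n(1)^2 = \big(\gamma_p(1)^2/\gamma_{n-1}(1)^2\big)\,\eta_{n-1}(G_{n-1})^{-2}$, and $\LL_{p,n} = \LL_{p,n-1}\circ\LL_n$ for $p \le n-1$. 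Writing $\psi_{n-1} := \LL_n\big(\phi_n - \eta_n(\phi_n)\big)$, the tail becomes
\begin{equation*}
\frac{1}{\eta_{n-1}(G_{n-1})^2}\sum_{p=0}^{n-1}\frac{\gamma_p(1)^2}{\gamma_{n-1}(1)^2}\,\Sigma_{\eta_p}\big(\LL_{p,n-1}\psi_{n-1}\big),
\end{equation*}
and I would identify the sum as $\Sigma_{n-1}(\psi_{n-1}) = \Sigma_{n-1}\big[\LL_n\big(\phi_n - \eta_n(\phi_n)\big)\big]$ by comparing with the definition \eqref{eq.asymp.var.normalised} at rank $n-1$, noting that the definition is insensitive to whether its argument has already been centered (recentering $\psi_{n-1}$ only shifts each $\LL_{p,n-1}\psi_{n-1}$ by a constant, which leaves every $\Sigma_{\eta_p}$ unchanged). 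This is the only point requiring a moment's care; everything else is bookkeeping. I do not anticipate a genuine obstacle — the entire argument is a linearisation plus an algebraic regrouping, with all analytic content already supplied by Theorem~\ref{thm.clt.unnormalised} and the two weak-law corollaries.
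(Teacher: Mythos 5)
Your argument is essentially the paper's proof: the ``neatest route'' you settle on at the end --- the exact identity $[\eta^N_n-\eta_n](\phi_n)=\frac{\gamma_n(1)}{\gamma^N_n(1)}\cdot\frac{1}{\gamma_n(1)}[\gamma^N_n-\gamma_n]\big(\phi_n-\eta_n(\phi_n)\big)$ (valid because $\gamma_n(\bar\phi_n)=0$), followed by Slutsky and Theorem~\ref{thm.clt.unnormalised} applied to the centered test function --- is precisely the decomposition used in the paper, so the preliminary tightness discussion is unnecessary. The derivation of \eqref{eq.asymp.var.normalised} is correct.

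One justification in your recursion step is wrong, although the conclusion survives. You claim that recentering $\psi_{n-1}=\LL_n\bar\phi_n$ ``only shifts each $\LL_{p,n-1}\psi_{n-1}$ by a constant.'' That is false: the operators $\LL_p$ do not map constants to constants, since $Q_p(c)=c\,G_{p-1}$ is not constant in general (nor is the first term of \eqref{eq.semigroup} applied to a constant). The correct observation is that no recentering occurs at all: the first term of $\LL_n\bar\phi_n$ integrates to zero under $\eta_{n-1}$ by construction, and $\eta_{n-1}(Q_n\bar\phi_n)=\eta_{n-1}(G_{n-1})\,\eta_n(\bar\phi_n)=0$ by \eqref{eq:defPhi}, so $\eta_{n-1}(\psi_{n-1})=0$ and the definition of $\Sigma_{n-1}$ applies to $\psi_{n-1}$ verbatim. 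With that one-line fix the bookkeeping goes through exactly as you describe.
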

\begin{proof}
One can verify that the normalised measure $\eta^N_n$ is related to the unnormalised measure $\gamma^N_n$ through the identity (\cite[pp.~301]{delmoral})
\begin{equation*}
[\eta^N_n - \eta_n](\phi_n) = \frac{\gamma_n(1)}{\gamma_n^N(1)} \, \gamma_n^N\big[ \tfrac{1}{\gamma_n(1)} (\phi_n - \eta_n(\phi_n))\big]\ .
\end{equation*}
By Corollary \ref{cor.wlln.normalisation}, $\gamma_n(1)/ \gamma_n^N(1)$ converges in probability to 1. Slutsky's Lemma and Theorem \ref{thm.clt.unnormalised} yield that $\sqrt{N} \, [\eta^N_n - \eta_n](\phi_n)$ converges weakly to a centered Gaussian variable with variance 
$
\sum_{p=0}^n \gamma_p(1)^2 \, \Sigma_{\eta_p}[\LL_{p,n} \big( \gamma_n(1)^{-1} (\phi_n - \eta_n(\phi_n)\big)]
$,
which is just another way of writing \eqref{eq.asymp.var.normalised}. Equation \eqref{eq.asymp.var.normalised.rec} follows from the identities 
$\gamma_p(1) = \prod_{k=0}^{p-1} \eta_k(G_k)$,
$\eta_{n-1}(\LL_n \phi_n) = \eta_n(\phi_n)$.
\end{proof}

\subsection{Stability}
\label{sec.stability}

We now show that in the majority of applications of interest, the asymptotic variance of the adaptive SMC algorithm is identical to the asymptotic variance of the \emph{perfect} algorithm.
%
%
\begin{theorem} [Stability] \label{thm.stability}
Assume (A\ref{hyp:1}-\ref{hyp:2}). Suppose further that for any index $n \geq 1$ the identity
\begin{equation} \label{eq.stability}
\eta_{n-1}(G_{n-1, \xi} M_{n,\xi}) / \eta_{n-1}(G_{n-1, \xi}) = \eta_n
\end{equation}
holds for any parameter $\xi \in \domain(\xi_n)$.
For any test function $\phi_n \in \mathcal{B}_b(E_n)$, the asymptotic variance of the adaptive SMC algorithm identified in Theorem \ref{thm.clt.unnormalised} equals the asymptotic variance of the perfect SMC algorithm.
\end{theorem}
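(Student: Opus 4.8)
The plan is to compare the two asymptotic variances term-by-term in the sum \eqref{eq.asymp.var.unnormalised}, showing that the key operator $\LL_p$ for the adaptive algorithm reduces to $Q_p$—the semigroup operator for the perfect algorithm—once the invariance condition \eqref{eq.stability} is imposed. Recall that the perfect SMC algorithm corresponds to the summary statistics being frozen at their limiting values $\eta_{p-1}(\xi_p)$, so its asymptotic variance is $\sum_{p=0}^n \gamma_p(1)^2 \, \Sigma_{\eta_p}(Q_{p,n} \phi_n)$ with $Q_{p,n} := Q_{p+1} \circ \cdots \circ Q_n$; this is the classical Del Moral formula and holds under (A\ref{hyp:1}) alone. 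So it suffices to show that $\LL_{p,n} \phi_n$ and $Q_{p,n} \phi_n$ produce the same value of $\Sigma_{\eta_p}(\cdot)$ for every $p$.

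First I would differentiate the identity \eqref{eq.stability} with respect to $\xi$ at $\xi = \eta_{n-1}(\xi_n)$. Writing $g_\xi := \eta_{n-1}(G_{n-1,\xi})$ and $\nu_\xi := \eta_{n-1}(G_{n-1,\xi} M_{n,\xi})$, the quotient $\nu_\xi / g_\xi$ is constant in $\xi$, so $\partial_\xi \nu_\xi \cdot g_\xi = \nu_\xi \cdot \partial_\xi g_\xi$, i.e. $\partial_\xi \nu_\xi = \eta_n \cdot \partial_\xi g_\xi$ after dividing by $g_\xi$. Applied to a test function $\phi_n$, this reads
\begin{equation*}
\eta_{n-1}\big[\partial_\xi Q_n \phi_n\big] = \eta_n(\phi_n) \cdot \eta_{n-1}\big[\partial_\xi G_{n-1}\big] = \eta_n(\phi_n) \cdot \eta_{n-1}\big[\partial_\xi Q_n \mathbf{1}\big].
\end{equation*}
The crucial consequence is that the extra term in \eqref{eq.semigroup}, namely $\eta_{p-1}[\partial_\xi Q_p \phi_p]\,(\xi_p - \eta_{p-1}(\xi_p))$, is a scalar multiple—with multiplier $\eta_p(\phi_p)$, independent of $\phi_p$—of the function $c_p(\cdot) := \eta_{p-1}[\partial_\xi Q_p \mathbf{1}]\,(\xi_p(\cdot) - \eta_{p-1}(\xi_p))$. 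In other words $\LL_p \phi_p = Q_p \phi_p + \eta_p(\phi_p)\, c_p$.

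Next I would propagate this through the composition $\LL_{p,n}$ and track how the correction terms accumulate. Using $\eta_{p}(\LL_{p+1}\psi) = \eta_{p+1}(\psi)$ (an identity already noted in the proof of Theorem \ref{theo:clt}, valid because $\eta_p(Q_{p+1}\psi)/\eta_p(G_p) = \eta_{p+1}(\psi)$ and $\eta_p(c_{p+1}) = 0$ since $\xi_{p+1}$ is centered under $\eta_p$), an induction shows that $\LL_{p,n}\phi_n$ differs from $Q_{p,n}\phi_n$ by a function of the form $\sum_{k} \alpha_k c_{p,k}$ where each correction ultimately involves a centered factor $\xi_j - \eta_{j-1}(\xi_j)$ pushed forward by $Q$-operators, all with scalar coefficients proportional to $\eta_n(\phi_n)$. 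The cleanest formulation: I claim $\LL_{p,n}\phi_n = Q_{p,n}\phi_n + \eta_n(\phi_n)\,\rho_{p,n}$ for a function $\rho_{p,n} \in \mathcal{B}_b(E_p)$ that does \emph{not} depend on $\phi_n$; this follows by induction on $n-p$, the inductive step being $\LL_p(\LL_{p+1,n}\phi_n) = \LL_p(Q_{p+1,n}\phi_n + \eta_n(\phi_n)\rho_{p+1,n}) = Q_p Q_{p+1,n}\phi_n + \eta_n(\phi_n)[\eta_{p}(Q_{p+1,n}\phi_n) /\eta_{p}(\phi)\cdots]$—here one uses $\LL_p\psi = Q_p\psi + \eta_p(\psi) c_p$ and linearity, collecting all $\phi_n$-independent pieces into $\rho_{p,n}$, and using that $\eta_p(\LL_{p+1,n}\phi_n) = \eta_n(\phi_n)$.

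Finally, since $\Sigma_{\eta_p}(\cdot)$ is a centered second moment, adding a constant (the scalar $\eta_n(\phi_n)$ times a $\phi_n$-independent quantity is \emph{not} constant in general, so care is needed here) — more precisely, one should rewrite $\LL_{p,n}\phi_n = \LL_{p,n}(\phi_n - \eta_n(\phi_n)) + \eta_n(\phi_n)\LL_{p,n}\mathbf{1}$ by linearity, and similarly $Q_{p,n}\phi_n = Q_{p,n}(\phi_n - \eta_n(\phi_n)) + \eta_n(\phi_n) Q_{p,n}\mathbf{1}$; applying the claim to the \emph{centered} function $\phi_n - \eta_n(\phi_n)$, for which $\eta_n(\phi_n - \eta_n(\phi_n)) = 0$, gives $\LL_{p,n}(\phi_n - \eta_n(\phi_n)) = Q_{p,n}(\phi_n - \eta_n(\phi_n))$ \emph{exactly}. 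Since $\Sigma_{\eta_p}(\LL_{p,n}\phi_n) = \Sigma_{\eta_p}(\LL_{p,n}(\phi_n - \eta_n(\phi_n)))$ (the operator $\Sigma_{\eta_p}$ is insensitive to additive constants, and $\eta_n(\phi_n)\LL_{p,n}\mathbf{1}$ is the part that could differ — but in fact for the variance of the \emph{unnormalised} estimator one must be more careful). I would therefore actually phrase the whole argument at the level of Theorem \ref{theo:clt} for the normalised measures, where \eqref{eq.asymp.var.normalised} already contains only the centered function $\phi_n - \eta_n(\phi_n)$: there the identity $\LL_{p,n}(\phi_n - \eta_n(\phi_n)) = Q_{p,n}(\phi_n - \eta_n(\phi_n))$ shows the adaptive and perfect asymptotic variances coincide term by term, and the unnormalised case then follows by the same bookkeeping plus the observation that $\LL_{p,n}\mathbf{1} = Q_{p,n}\mathbf{1}$ as well (take $\phi_n \equiv \mathbf{1}$ and note $\rho$ vanishes since $\partial_\xi Q_p \mathbf{1}$ integrated against the centered statistic is exactly $c_p$, handled consistently). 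The main obstacle is precisely this last step: making sure the $\phi_n$-independent correction $\rho_{p,n}$ is genuinely killed by $\Sigma_{\eta_p}$ in the unnormalised formula, which requires either restricting to the normalised CLT (where centering is built in) or separately verifying $\LL_{p,n}\mathbf{1} = Q_{p,n}\mathbf{1}$; I expect the paper handles this by working with \eqref{eq.asymp.var.normalised}, and I would do the same, reducing the unnormalised claim to the normalised one via the relation $[\eta^N_n - \eta_n](\phi_n) = \frac{\gamma_n(1)}{\gamma^N_n(1)}\gamma^N_n[\gamma_n(1)^{-1}(\phi_n - \eta_n(\phi_n))]$ already used in the proof of Theorem \ref{theo:clt}.
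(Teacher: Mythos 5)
Your quotient-rule identity $\eta_{n-1}[\partial_\xi Q_n\phi_n]=\eta_n(\phi_n)\,\eta_{n-1}[\partial_\xi G_{n-1}]$ is correct, and the consequence $\LL_{p,n}\big(\phi_n-\eta_n(\phi_n)\big)=Q_{p,n}\big(\phi_n-\eta_n(\phi_n)\big)$ does follow, so your argument settles the \emph{normalised} CLT. But the theorem is stated for the unnormalised variance \eqref{eq.asymp.var.unnormalised}, and there your route has a genuine gap, exactly where you flag it. From $\LL_p\phi_p=Q_p\phi_p+\eta_p(\phi_p)\,c_p$ with $c_p=\eta_{p-1}[\partial_\xi Q_p\mathbf{1}]\,(\xi_p-\eta_{p-1}(\xi_p))$ one gets $\LL_n\mathbf{1}=Q_n\mathbf{1}+c_n$, i.e.\ $\rho_{n-1,n}=c_n$, which is a \emph{nonconstant function} of $x$ (not a number) and does not vanish unless $\eta_{n-1}[\partial_\xi G_{n-1}]=0$; consequently $\Sigma_{\eta_{n-1}}(\LL_n\mathbf{1})\neq\Sigma_{\eta_{n-1}}(Q_n\mathbf{1})$ in general, and your proposed patch (``$\rho$ vanishes since $\partial_\xi Q_p\mathbf{1}$ integrated against the centered statistic is exactly $c_p$'') is circular. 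Your weaker identity genuinely does not suffice for the unnormalised statement: one also needs $\eta_{n-1}[\partial_\xi G_{n-1}]=0$.

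The paper's proof is a one-line strengthening of your very first step, which supplies precisely this missing fact and makes all the composition bookkeeping unnecessary. Condition \eqref{eq.stability} gives $\eta_{n-1}(Q_{n,\xi}\phi_n)=\eta_{n-1}(G_{n-1,\xi})\,\eta_n(\phi_n)$, and the paper reads this as saying that $\xi\mapsto\eta_{n-1}(Q_{n,\xi}\phi_n)$ is \emph{constant} on $\domain(\xi_n)$ --- which holds because in the settings the theorem targets (Sections \ref{sec:seq_bi}--\ref{sec:ex_filt}, where one adapts the proposal kernel) the normalisation $\xi\mapsto\eta_{n-1}(G_{n-1,\xi})$ does not vary with $\xi$. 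Differentiating under the integral sign then yields $\eta_{n-1}[\partial_\xi Q_n\phi_n]=0$ for \emph{every} test function, so by \eqref{eq.semigroup} the correction term in $\LL_p$ vanishes identically, $\LL_p=Q_p$ as operators, and both \eqref{eq.asymp.var.unnormalised} and \eqref{eq.asymp.var.normalised} reduce termwise to the perfect-algorithm expressions. In short: strengthen your differentiated identity to full constancy of $\xi\mapsto\eta_{n-1}(Q_{n,\xi}\phi_n)$ and your entire propagation argument collapses to that one line; without that strengthening the unnormalised claim is not reachable.
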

\begin{proof}
Formula \eqref{eq.semigroup} shows that it suffices to prove that the term $\eta_{n-1}(\partial_{\eta} Q_n \phi_n )$ vanishes. By differentiation under the integral sign, it is enough to prove that the mapping  $\xi \mapsto \eta_{n-1}(Q_{n, \xi} \phi_n)$ is constant on $\domain(\xi_n)$. Indeed, it follows from \eqref{eq.stability} that $\eta_{n-1}(Q_{n, \xi} \phi_n) = \eta_n(\phi_n)$ for any $\xi \in \domain(\xi_n)$, concluding the proof of Theorem \ref{thm.stability}.
\end{proof}

Theorem \ref{thm.stability} applies for instance to the sequential Bayesian parameter inference context discussed in Section \ref{sec:seq_bi} and to the filtering setting of Section \ref{sec:ex_filt}.
A consequence of Theorem \ref{thm.stability} is that standard behaviours for the asymptotic variance of the \emph{perfect} SMC algorithm, such as linear growth of the asymptotic variance of $\sqrt{N} \, \big( \gamma^N_n(1) / \gamma_n(1) - 1\big)$, are inherited by the adaptive SMC algorithm.

 %
 %
\section{Adaptive Tempering}\label{sec:annealed}
We now look at the scenario when one uses the information in the evolving particle population 
to adapt a sequence of distributions by means of a tempering parameter $\beta \in (0,1)$. 

\subsection{Algorithmic Set-Up}
In many situations in Bayesian inference one seeks to sample from a distribution $\pi$ on a set $E$ of the form
\begin{equation*}
\pi(dx) = \tfrac{1}{Z} \, e^{- \beta_{*} \, V(x)} \, m(dx)
\end{equation*}
where $Z$ is a normalisation constant, $m(dx)$ a dominating measure on the set $E$ and $V:E \to \bbR$ a potential.  Coefficient $\beta_{*}\in \bbR$ can be thought of as an inverse temperature parameter.
A frequently invoked algorithm involves forming a sequence of `tempered' probability distributions
\begin{equation*}
\eta_n(dx)  = \frac{1}{Z(\beta_n)} \, e^{-\beta_{n} V(x)} \, m(dx)
\end{equation*}
for inverse temperatures
$\beta_0 \leq \ldots \leq \beta_{n-1} \leq \beta_n\leq \cdots \leq  \beta_{n_*}=\beta_*$; in many applications $\beta_*=1$. 
The associated unnormalised measures are
\begin{equation*}
\gamma_n(dx)  = e^{-\beta_{n} V(x)} \, m(dx)\ .
\end{equation*}
Particles are propagated from $\eta_{n-1}$ to $\eta_{n}$ through a Markov kernel $M_n
$ that preserves $\eta_{n}$. In other words, the algorithm corresponds to the SMC approach discussed in Section \ref{sec:algo} with potentials
\begin{equation*}
G_n(x) 
= e^{-\Delta_{n} \, V(x)}\ , 
\quad \Delta_n := \beta_{n+1}-\beta_n\  ,
\end{equation*}
and Markov kernels $M_n$ satisfying $\eta_n M_{n} = \eta_n$. For test function $\phi_n \in \mathcal{B}_b(E)$, the $N$-particle approximation of the normalised and unnormalised distribution are given in \eqref{eq.approx.normalized.measure}, \eqref{eq.approx.unnormalized.measure}.
To be consistent with the notations introduced in Section \ref{sec.algo1.CLT},
note that the normalisation constants also read as $Z(\beta_n) = \gamma_n(1)$ and $Z=Z(\beta_*)=\gamma_{n^*}(1)$. 
In most scenarios of practical interest, it can be difficult or even  undesirable to decide \emph{a-priori} upon the annealing sequence $\{\beta_n\}_{n=0}^{n_*}$.
Indeed, if the chosen sequence features big gaps, one may reach the terminal temperature rapidly, the variance of the weights being potentially very large due to large discrepancies between consecutive elements of the bridging sequence of probability distributions. Alternatively, if the gaps between the annealing parameters are too small, the variance of the final weights can be very small; this comes at the price of needlessly wasting a lot of computation time. Knowing what constitutes `big' or `small' with regards to the temperature gaps can be very-problem specific. Thus, an automated procedure for determining the annealing sequence is of great practical importance. In this section we  investigate the asymptotic properties  of an algorithm where the temperatures,  as well as statistics of the MCMC kernel, are  determined empirically by the evolving population of particles.

A partial analysis of the algorithm to be described can be found in \cite{giraud}. However, the way in which the annealing sequence is determined in that work does not correspond to one typically used in the literature. In addition, the authors assume that the perfect MCMC kernels are used at each time step, whereas we do
not assume so. It should also be noted, however, that the analysis in \cite{giraud} is non-asymptotic.

The adaptive version of the above described  algorithm constructs the (random) temperatures sequence $\{ \beta_p^{N} \}_{p \geq 0}$ `on the fly'  as follows.
Once a proportion $\alpha\in(0,1)$ has been specified, the random tempering sequence is determined through the recursive equation 
\begin{equation} \label{eq.beta.recursion}
\beta^N_{n+1} = 
\inf \Big\{ \beta^N_n < \beta \leq \beta_*
\; : \;
\ess(\eta^N_{n}, e^{-(\beta-\beta^N_n) \, V}) = \alpha
\Big\}
\end{equation}
initialized at a prescribed value $\beta_0$ typically chosen so that the distribution $\eta_0$ is easy to sample from.
For completeness, we use the convention that $\inf \varnothing =\beta_{*}$. 
In the above displayed equation, we have used the $\ess$ functional defined for a measure $\eta$ on the set $E$ and a weight function $\omega: E \to (0, \infty)$ by
\begin{equation*}
\ess(\eta, \omega) := \eta(\omega)^2 / \eta(\omega^2)\ .
\end{equation*}
The following lemma guaranties that under mild assumptions the effective sample size functional $\beta \mapsto \ess(\eta_p, e^{-(\beta-\beta_n) \, V})$ is continuous and decreasing so that \eqref{eq.beta.recursion} is well-defined and the inverse temperature $\beta_{n+1}$ can be efficiently computed by a standard bisection method.
%
%
\begin{lem} \label{lem.ess.decreasing}
Let $\eta$ be a finite measure on the set $E$ and $V:E \to \bbR$ be a bounded potential. Then, the function $\lambda \mapsto \ess(\eta, e^{-\lambda \, V})$ 
is continuous and decreasing on $[0,\infty)$. Furthermore, if $\PP\,[\,V(X) \neq V(Y)\,] > 0$ for $X,Y$ independent and distributed according to $\eta$, the function is strictly decreasing.
\end{lem}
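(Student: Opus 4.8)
The plan is to exhibit $\ess(\eta, e^{-\lambda V})$ as a ratio of explicit moment functions in $\lambda$, differentiate, and recognize the sign of the derivative as (minus) a variance. Write $a(\lambda) := \eta(e^{-\lambda V})$ and $b(\lambda) := \eta(e^{-2\lambda V})$, so that $\ess(\eta, e^{-\lambda V}) = a(\lambda)^2 / b(\lambda)$. Since $V$ is bounded, $e^{-\lambda V}$ and $e^{-2\lambda V}$ are bounded above and below by positive constants on any compact $\lambda$-interval, and $\eta$ is a finite measure; hence $a$ and $b$ are finite, strictly positive, and — by dominated convergence applied to the difference quotients, whose integrands are uniformly dominated on compacts since $|\partial_\lambda^k e^{-\lambda V}| = |V|^k e^{-\lambda V} \le \|V\|_\infty^k e^{\|V\|_\infty \lambda}$ — infinitely differentiable on $[0,\infty)$ with derivatives obtained by differentiating under the integral sign: $a'(\lambda) = -\eta(V e^{-\lambda V})$ and $b'(\lambda) = -2\eta(V e^{-2\lambda V})$. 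Continuity of $\lambda \mapsto a(\lambda)^2/b(\lambda)$ is then immediate since $b$ never vanishes.

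Next I would compute the derivative of $\ess$. A clean route is to work with the logarithm on $(0,\infty)$ (and handle $\lambda = 0$ by continuity): $\log \ess = 2 \log a(\lambda) - \log b(\lambda)$, so
\begin{equation*}
\frac{d}{d\lambda} \log \ess(\eta, e^{-\lambda V}) = \frac{2 a'(\lambda)}{a(\lambda)} - \frac{b'(\lambda)}{b(\lambda)} = -2\,\frac{\eta(V e^{-\lambda V})}{\eta(e^{-\lambda V})} + 2\,\frac{\eta(V e^{-2\lambda V})}{\eta(e^{-2\lambda V})}.
\end{equation*}
Now introduce the two probability measures $\mu_1(dx) \propto e^{-\lambda V(x)} \eta(dx)$ and $\mu_2(dx) \propto e^{-2\lambda V(x)} \eta(dx)$; the bracketed expression is $2\big(\mu_2(V) - \mu_1(V)\big)$. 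Observe that $\mu_2$ has Radon–Nikodym derivative with respect to $\mu_1$ proportional to $e^{-\lambda V}$, a decreasing function of $V$; so $\mu_2$ stochastically down-weights large values of $V$ relative to $\mu_1$. Concretely, this is the covariance inequality: for a probability measure $\mu_1$ and a nonincreasing (here strictly decreasing) reweighting by $h(V) = e^{-\lambda V}/\mu_1(e^{-\lambda V})$, one has $\mu_2(V) - \mu_1(V) = \mu_1(V h(V)) - \mu_1(V)\mu_1(h(V)) = \Cov_{\mu_1}(V, h(V)) \le 0$, since $V$ and $h(V)$ are monotone in opposite directions (Chebyshev's correlation/association inequality, which is an elementary two-variable computation: $\Cov_{\mu_1}(V, h(V)) = \tfrac12 \int\!\!\int (V(x)-V(y))(h(V(x))-h(V(y)))\, \mu_1(dx)\mu_1(dy)$, and each summand of the integrand is $\le 0$). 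This shows $\frac{d}{d\lambda}\log\ess \le 0$, hence $\ess$ is (weakly) decreasing; combined with continuity at $0$ this gives the first assertion.

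For the strict statement: the integrand $(V(x)-V(y))(h(V(x))-h(V(y)))$ is \emph{strictly} negative whenever $V(x) \neq V(y)$, because $h$ is strictly decreasing. Under the hypothesis $\PP[V(X)\neq V(Y)] > 0$ for $X, Y$ i.i.d.\ $\sim \eta$ (equivalently $\sim \mu_1$, since $\mu_1 \sim \eta$), the product measure $\mu_1 \otimes \mu_1$ assigns positive mass to $\{V(x)\neq V(y)\}$, so $\Cov_{\mu_1}(V, h(V)) < 0$ strictly, giving $\frac{d}{d\lambda}\log\ess(\eta, e^{-\lambda V}) < 0$ for every $\lambda \in (0,\infty)$; strict monotonicity on $[0,\infty)$ follows. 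I do not anticipate a serious obstacle here — the main point requiring care is the interchange of differentiation and integration (cleanly justified by boundedness of $V$ and finiteness of $\eta$) and phrasing the covariance/association inequality correctly, including the observation that $\mu_1$ and $\eta$ are mutually absolutely continuous so the non-degeneracy hypothesis transfers between them.
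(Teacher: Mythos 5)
Your proof is correct and follows essentially the same route as the paper's: justify differentiation under the integral sign via boundedness of $V$, then reduce the sign of the derivative to the Chebyshev correlation inequality through the symmetrized double integral in $(V(x)-V(y))\big(e^{-\lambda V(x)}-e^{-\lambda V(y)}\big)\le 0$, with strictness on the event $\{V(x)\neq V(y)\}$. The only cosmetic difference is that you take logarithms and phrase the key quantity as a covariance under the tilted measure $\mu_1\propto e^{-\lambda V}\eta$, whereas the paper applies the quotient rule directly and writes the same condition as $\eta(Ve^{-\lambda V})\,\eta(e^{-2\lambda V})>\eta(e^{-\lambda V})\,\eta(Ve^{-2\lambda V})$.
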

\begin{proof}
We treat the case where $\PP\,[\,V(X) \neq V(Y)\,] > 0$, the case $\PP\,[\,V(X) \neq V(Y)\,] = 0$ being trivial. 
Let $X$ and $Y$ be two independent random variables  distributed according to $\eta$. 
The dominated convergence theorem shows that the function $\lambda \mapsto \ess(\eta, e^{-\lambda \, V})$ is continuous, with a continuous derivative. Standard manipulations show that the derivative is strictly negative if 
$\eta( V \, e^{-\lambda V}) \, \eta( e^{-2\lambda V}) >
\eta( e^{-\lambda V} ) \, \eta( V \, e^{-2 \lambda V})$, which is equivalent to the condition
\begin{equation*}
\E\,\Big[\, e^{-\lambda \{ V(X)+V(Y) \} } \times \Big\{ V(X)-V(Y) \Big\} \times \Big\{ e^{-\lambda V(X)} - e^{-\lambda V(Y)}\Big\}\,\Big] < 0\ .
\end{equation*}
This last condition is satisfied since for any $x,y \in \bbR$ and any $\lambda > 0$ we have the inequality $\{ V(x)-V(y)\} \{e^{-\lambda \, V(x)} - e^{- \lambda \, V(y)}\} < 0$, with strict inequality for $x \neq y$.
\end{proof}
We will assume that the sequence of temperatures $\{ \beta_n \}_{n \geq 0}$ and $\{ \beta^N_n\}_{n \geq 0}$ are defined for \emph{any} index $n \geq 0$,
using the convention that the first time that the parameter $\beta^N_n$ reaches the level $\beta_*$, which is random for the practical algorithm, the algorithm still goes on with fixed inverse temperatures equal to $\beta_*$. 
Under this convention, we can carry out an asymptotic  analysis using an induction argument. Ideally one would like to prove asymptotic consistency (and a CLT) for the empirical measure 
at the random termination time of the practical algorithm; we do not do this, due to the additional technical challenge that it poses.
We believe that the result to be proven still provides a very satisfying theoretical justification for the practical adaptive algorithm. 
We assume from now on that for the perfect algorithm the sequence of inverse temperatures is given by the limiting analogue of 
\eqref{eq.beta.recursion},
\begin{equation}
\label{eq:hitting}
\beta_{n+1} = 
\inf \Big\{ \beta_n < \beta \leq \beta_*
\; : \;
\ess(\eta_{n}, e^{-(\beta-\beta_n) \, V}) = \alpha
\Big\}\ .
\end{equation}

We will show in the next section that under mild assumptions the adaptive version $\beta^N_n$ converges in probability towards $\beta_n$. For statistics $\xi_{n+1}:E \to \bbR^d$ we set
\begin{equation*}
\theta^N_{n} =\big( \beta^N_n, \beta^N_{n+1},\eta^N_{n}(\xi_{n+1})^{\top} \big)^{\top}
\end{equation*}
and denote by $\theta_n$ its limiting value. At time $n$, for a particle system $\{x_n^i\}_{i=1}^N$ and associated empirical distribution $\eta^N_n$ targeting the distribution $\eta_n$, the next inverse temperature $\beta^N_{n+1}$ is computed according to \eqref{eq.beta.recursion}; the particle system is re-sampled according to a multinomial scheme with weights 
\begin{equation*}
G_{n,N}(x) := e^{-\Delta^N_{n} \, V(x)}\ ; 
\quad
\Delta^N_n = \beta^N_{n+1} - \beta^N_{n}\ , 
\end{equation*}
and then evolves via a Markov kernel $M_{n+1,N} \equiv M_{n+1, \eta^N_n(\xi_{n+1}), \beta^N_{n+1}}$
that preserves the preserves $Z(\beta^N_{n+1})^{-1} \, e^{-\beta^N_{n+1} \, V} \, m(dx)$. 
Similarly to Section \ref{sec:algo1}, we will make use of the operator
\begin{equation*}
Q_{n,N}(x,dy) \equiv
G_{n-1,N}(x) \, M_{n,N}(x,dy)
\end{equation*}
and its limiting analogue $Q_n$. With these notations, note that Equation \eqref{eq:d1} holds. To emphasise the dependencies upon the parameter $\theta=(\beta_1, \beta_2, \eta)$, we will sometimes use the expression $Q_{n,\theta} = G_{n,\theta}(x) \, M_{n,\eta,\beta_2}(x,dy)$ with $G_{n,\theta} = e^{-(\beta_2-\beta_1) \, V} = e^{- \Delta \, V}$ and $\Delta = \beta_2 - \beta_1$. For notational convenience, we sometimes write $\partial_{\Delta}$ when the meaning is clear. For example, by differentiation under the integral sign, the quantity $\partial_{\Delta} \eta_n(G_n)$ also equals $- \eta_n( V \, G_n)$. Unless otherwise stated, the derivative $\partial_\theta$ is evaluated at the limiting parameter $\theta_n = (\beta_n, \beta_{n+1}, \eta_n(\xi_{n+1}))$. 

\subsection{Assumptions}
We define $\domain(\beta)= \{ (\beta_1, \beta_2) \in [\beta_0, \beta_*]^2 \; 
; \; \beta_1 \leq \beta_2 \}$. By $\domain(\xi_p) \subset \mathbb{R}^d$ we denote a convex set that contains the range of the statistic $\xi_p:E_{p-1} \to \mathbb{R}^{d}$.
The results to be presented in the next section make use of the following hypotheses.
\begin{hypA}\label{hyp:anneal1}
The potential $V$ is bounded on the set $E$.
For each $n \geq 0$ the function $(x,\theta) \mapsto G_{n,\theta}(x)$ is bounded and continuous at $\theta_n= \big(\beta_n, \beta_{n+1}, \eta_{n}(\xi_{n+1}) \big)$  uniformly on $E$. The statistic $\xi_{n}:E \to \bbR^d$ is bounded. For any bounded Borel test function $\phi_{n}:E \to \bbR^r$, the function $(x,\theta) \mapsto Q_{n, \theta} \phi_{n} (x)$ is bounded and continuous at $\theta =\theta_{n-1}$ uniformly on $E$.
\end{hypA}

\begin{hypA}\label{hyp:anneal2}
For each $n \geq 1$, $r\ge 1$ and bounded Borel test function $\phi_{n}:E \to \bbR^r$ the function 
$(x,\theta) \mapsto \partial_{\theta} Q_{n,\theta} \, \phi_{n}(x)$ is well defined, bounded and continuous at $\theta=\theta_{n-1}$ uniformly on $E$.
\end{hypA}
These conditions could be relaxed at the cost of considerable technical complications in the proofs.

\subsection{Weak Law of Large Numbers}

In this section we prove that the consistency results of Section \ref{sec.algo1.WLLN} also hold in the adaptive annealing setting. To do so, we prove that for any index $n \geq 0$ the empirical inverse temperature parameter $\beta^N_n$ converges in probability towards $\beta_n$.
%
%
\begin{theorem}[WLLN]\label{theo:wlln_aneal}
Assume (A\ref{hyp:anneal1}). For any $n \geq 0$, the empirical inverse temperature $\beta^N_n$ converges in probability to $\beta_n$ as $N \to \infty$. Also, 
let $\mathsf{V}$ be a Polish space and $\{V_N\}_{N \geq 0}$ a sequence of $\mathsf{V}$-valued random variables that converges in probability to $\mathsf{v} \in \mathsf{V}$. Let $r \ge 1$ and $\phi_n: E \times \mathsf{V} \to \bbR^{r}$ a bounded function continuous at $\mathsf{v} \in \mathsf{V}$ uniformly on $E$. Then, the following limit holds in probability
\begin{equation*}
\lim_{N \to \infty} \eta_n^N[\phi_n(\cdot, V_N)]
=
\eta_n[\phi_n(\cdot, \mathsf{v})]\ .
\end{equation*}
\end{theorem}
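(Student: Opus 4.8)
The plan is to prove both assertions simultaneously by induction on $n \geq 0$, since the convergence $\beta^N_n \to_{\mathbb{P}} \beta_n$ and the weak-law statement for test functions feed into each other at successive time steps. For the base case $n = 0$, the particles $\{x_0^i\}$ are i.i.d.\@ from $\eta_0$, so the weak-law for $\phi_0(\cdot, V_N)$ follows from the ordinary WLLN together with Definition \ref{def.unif.cont} (exactly as in the base case of Theorem \ref{theo:wlln}); and $\beta^N_1$ is defined via \eqref{eq.beta.recursion} applied to $\eta^N_0$, whereas $\beta_1$ is defined via \eqref{eq:hitting} applied to $\eta_0$, so the convergence $\beta^N_1 \to_{\mathbb{P}} \beta_1$ will follow from the continuity argument described below.

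For the induction step, assume the two conclusions hold at rank $n-1$. First I would establish the weak-law statement at rank $n$. The key observation is that $\eta^N_n$ is built from $\eta^N_{n-1}$ by reweighting with $G_{n-1,N}(x) = e^{-\Delta^N_{n-1} V(x)}$ and propagating through $M_{n,N}$, i.e.\@ via the operator $\Phi_{n,N}$ with $Q_{n,N} = Q_{n, \theta^N_{n-1}}$ where $\theta^N_{n-1} = (\beta^N_{n-1}, \beta^N_n, \eta^N_{n-1}(\xi_n)^\top)^\top$. By the induction hypothesis (for $\beta^N_{n-1}$, $\beta^N_n$) and by applying the rank-$(n-1)$ weak-law to the bounded statistic $\xi_n$, the random parameter $\theta^N_{n-1}$ converges in probability to $\theta_{n-1}$. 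Then one repeats verbatim the $A(N) + B(N)$ decomposition from the proof of Theorem \ref{theo:wlln}: $A(N)$ is a conditionally-centered average with second moment $O(1/N)$, and $B(N) = [\Phi_{n,N}(\eta^N_{n-1}) - \eta_n](\bar\phi_n)$ splits into three pieces controlled by Assumption (A\ref{hyp:anneal1}) — boundedness and continuity at $\theta_{n-1}$ of $(x,\theta) \mapsto G_{n-1,\theta}(x)$ and $(x,\theta) \mapsto Q_{n,\theta}\bar\phi_n(x)$ — the rank-$(n-1)$ induction hypothesis, and Slutsky's lemma. This gives $\eta^N_n(\bar\phi_n) \to_{\mathbb{P}} \eta_n(\bar\phi_n)$, and then the uniform continuity at $\mathsf{v}$ upgrades this to $\eta^N_n[\phi_n(\cdot, V_N)] \to_{\mathbb{P}} \eta_n[\phi_n(\cdot, \mathsf{v})]$.

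Finally I would prove $\beta^N_{n+1} \to_{\mathbb{P}} \beta_{n+1}$ using the weak-law just established at rank $n$. Define $\Psi(\eta, \beta) := \ess(\eta, e^{-(\beta - \beta_n)V})$ for a measure $\eta$ and $\beta \in [\beta_n, \beta_*]$; since $V$ is bounded (A\ref{hyp:anneal1}) the maps $\eta \mapsto \eta(e^{-(\beta-\beta_n)V})$ and $\eta \mapsto \eta(e^{-2(\beta-\beta_n)V})$ are integrals of bounded functions, and the rank-$n$ weak-law gives that, for each fixed $\beta$, $\Psi(\eta^N_n, \beta) \to_{\mathbb{P}} \Psi(\eta_n, \beta)$; boundedness of $V$ also yields equicontinuity in $\beta$, so the convergence is uniform in $\beta$ on the compact interval. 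By Lemma \ref{lem.ess.decreasing} the limiting function $\beta \mapsto \Psi(\eta_n, \beta)$ is continuous and (strictly, in the nondegenerate case) decreasing, so its level set at $\alpha$ is a single point $\beta_{n+1}$ and crossing is transversal; uniform convergence of a monotone sequence of functions to a strictly monotone continuous limit forces the crossing points to converge, i.e.\@ $\beta^N_{n+1} \to_{\mathbb{P}} \beta_{n+1}$. One must handle with the stated convention the degenerate situations: if $\Psi(\eta_n, \beta) > \alpha$ for all $\beta \le \beta_*$ then $\beta_{n+1} = \beta_*$ by the convention $\inf \varnothing = \beta_*$, and $\beta^N_{n+1} \to_{\mathbb{P}} \beta_* = \beta_{n+1}$ still holds; and if the induction convention has already pinned $\beta_n = \beta_*$, then $G_{n-1,N} \equiv G_{n-1} \equiv 1$ and everything is trivial. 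The main obstacle is precisely this last step — turning pointwise-in-$\beta$ stochastic convergence of the ESS functional into convergence of the (implicitly-defined) crossing times, i.e.\@ making rigorous the "inverse-function" argument uniformly in the randomness; the clean way is to exploit monotonicity of $\beta \mapsto \Psi(\eta^N_n, \beta)$ so that $\{\beta^N_{n+1} > \beta_{n+1} + \epsilon\} \subseteq \{\Psi(\eta^N_n, \beta_{n+1}+\epsilon) > \alpha\}$ and $\{\beta^N_{n+1} < \beta_{n+1} - \epsilon\} \subseteq \{\Psi(\eta^N_n, \beta_{n+1}-\epsilon) < \alpha\}$ (up to the boundary conventions), and each right-hand event has probability tending to zero by the pointwise weak-law at the two fixed arguments $\beta_{n+1} \pm \epsilon$, using $\Psi(\eta_n, \beta_{n+1} \mp \epsilon) \ne \alpha$.
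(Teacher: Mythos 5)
Your proposal is correct and follows essentially the same route as the paper: a joint induction in which the convergence of the empirical temperatures and the weak law feed into each other, with the weak‑law step at rank $n$ reusing verbatim the $A(N)+B(N)$ decomposition of Theorem \ref{theo:wlln} once $\theta^N_{n-1}\to_{\PP}\theta_{n-1}$ is in hand, and with Lemma \ref{lem.ess.decreasing} supplying the monotonicity of the ESS curve. The one place you diverge is the passage from convergence of the ESS functional to convergence of the crossing times: the paper first upgrades pointwise convergence to uniform convergence on $[\beta_0,\beta_*]$ (monotone functions converging to a continuous limit) and then reads off the infima, whereas you argue directly via the inclusions $\{\beta^N_{n+1}>\beta_{n+1}+\epsilon\}\subseteq\{\Psi(\eta^N_n,\beta_{n+1}+\epsilon)>\alpha\}$ and $\{\beta^N_{n+1}<\beta_{n+1}-\epsilon\}\subseteq\{\Psi(\eta^N_n,\beta_{n+1}-\epsilon)\le\alpha\}$, which needs only pointwise convergence at the two fixed arguments $\beta_{n+1}\pm\epsilon$ together with strict monotonicity of the limit; this is a slightly more economical finish (no uniform‑convergence step), and your explicit treatment of the boundary conventions ($\inf\varnothing=\beta_*$ and the frozen‑temperature regime) matches what the paper leaves implicit.
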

%
%
\begin{cor}[WLLN]
\label{cor:wlln_aneal}
Assume (A\ref{hyp:anneal1}). Let $n \ge 0$, $r \ge 1$ and $\phi_n: E \to \bbR^{r}$ be  a bounded measurable function. The following limit holds in probability,
$\lim_{N \to \infty} \eta_n^N(\phi_n)
=
\eta_n(\phi_n)$.
\end{cor}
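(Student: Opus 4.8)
\textbf{Proof proposal for Corollary \ref{cor:wlln_aneal}.}

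The plan is to derive the corollary as an immediate consequence of the preceding Theorem \ref{theo:wlln_aneal}. Given a bounded measurable $\phi_n : E \to \bbR^r$, I want to put myself in the framework of that theorem by trivializing the auxiliary parameter space. Concretely, take $\mathsf{V}$ to be any one-point Polish space, say $\mathsf{V} = \{\mathsf{v}\}$ with the discrete metric, and let $V_N \equiv \mathsf{v}$ for every $N$; then $V_N \to \mathsf{v}$ in probability trivially. Define $\widetilde{\phi}_n : E \times \mathsf{V} \to \bbR^r$ by $\widetilde{\phi}_n(x, \mathsf{v}) := \phi_n(x)$. This $\widetilde{\phi}_n$ is bounded, and it is continuous at $\mathsf{v}$ uniformly on $E$ for the cheap reason that $\mathsf{V}$ is a singleton: in Definition \ref{def.unif.cont}, for any $\delta > 0$ the only $y \in \mathsf{V}$ with $d_{\mathsf{V}}(y, \mathsf{v}) < \delta$ is $y = \mathsf{v}$ itself, so the supremum in \eqref{eq.uniform.cont} is a supremum of zeros and the limsup is $0$. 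Hence all hypotheses of Theorem \ref{theo:wlln_aneal} are met.

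Applying Theorem \ref{theo:wlln_aneal} to $\widetilde{\phi}_n$ with this choice of $(\mathsf{V}, V_N, \mathsf{v})$ gives
\begin{equation*}
\lim_{N \to \infty} \eta_n^N[\widetilde{\phi}_n(\cdot, V_N)] = \eta_n[\widetilde{\phi}_n(\cdot, \mathsf{v})]
\end{equation*}
in probability, and since $\widetilde{\phi}_n(\cdot, V_N) = \phi_n(\cdot)$ and $\widetilde{\phi}_n(\cdot, \mathsf{v}) = \phi_n(\cdot)$ by construction, this is precisely $\lim_{N \to \infty} \eta_n^N(\phi_n) = \eta_n(\phi_n)$ in probability, as claimed. (Alternatively, one can simply quote the second conclusion of Theorem \ref{theo:wlln_aneal} with a $\mathsf{V}$-independent test function; the argument is the same.)

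There is essentially no obstacle here: the content is entirely carried by Theorem \ref{theo:wlln_aneal}, and the only thing to check is that a constant sequence $V_N$ and a constant-in-$\mathsf{V}$ test function satisfy the theorem's regularity requirements, which is immediate. The one point worth stating explicitly, so the reader is not puzzled, is why a measurable-but-not-necessarily-continuous $\phi_n$ is admissible: the continuity requirement in Theorem \ref{theo:wlln_aneal} is continuity in the $\mathsf{V}$-variable only (uniformly over $x \in E$), and when $\mathsf{V}$ is a single point that requirement is vacuous, so no continuity in $x$ is imposed. This is exactly the same reduction by which Corollary \ref{cor:wlln} was obtained from Theorem \ref{theo:wlln} in the non-annealed setting.
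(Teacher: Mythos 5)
Your reduction is correct and is exactly the route the paper intends: the corollary is stated as an immediate consequence of Theorem \ref{theo:wlln_aneal}, obtained by specializing to a trivial (one-point) auxiliary space so that the uniform-continuity hypothesis is vacuous, mirroring how Corollary \ref{cor:wlln} follows from Theorem \ref{theo:wlln}. No gaps.
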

%
%
\begin{cor}
\label{cor.wlln_aneal.normalisation}
Assume (A\ref{hyp:anneal1}). Let $n \ge 0$, $r \ge 1$ and $\phi_n: E \to \bbR^{d}$ a bounded measurable function. The following limit holds in probability,
$\lim_{N \to \infty} \gamma_n^N(\phi_n)
=
\gamma_n(\phi_n)$.
\end{cor}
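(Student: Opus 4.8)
The plan is to mimic the proof of Corollary \ref{cor.wlln.normalisation}, the only new ingredient being that the potentials $G_{n,N} = G_{n,\theta_n^N}$ now depend on the random parameter $\theta_n^N = (\beta_n^N,\beta_{n+1}^N,\eta_n^N(\xi_{n+1})^\top)^\top$, so that the consistency of $\eta_p^N(G_{p,N})$ must be extracted from the auxiliary-variable version of the WLLN, namely Theorem \ref{theo:wlln_aneal}, rather than from the plain WLLN.

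First I would write, using the definition \eqref{eq.approx.unnormalized.measure}, $\gamma_n^N(\phi_n) = \big\{\prod_{p=0}^{n-1}\eta_p^N(G_{p,N})\big\}\,\eta_n^N(\phi_n)$ and $\gamma_n(\phi_n) = \big\{\prod_{p=0}^{n-1}\eta_p(G_p)\big\}\,\eta_n(\phi_n)$, the latter being the iterate of the recursion $\gamma_n(\phi_n)=\gamma_{n-1}(Q_n\phi_n)$. By Corollary \ref{cor:wlln_aneal}, $\eta_n^N(\phi_n) \to \eta_n(\phi_n)$ in probability, so by Slutsky's Lemma it suffices to show that $\gamma_n^N(1) = \prod_{p=0}^{n-1}\eta_p^N(G_{p,N})$ converges in probability to $\gamma_n(1) = \prod_{p=0}^{n-1}\eta_p(G_p)$; by continuity of finite products it is in turn enough to prove $\eta_p^N(G_{p,N}) \to \eta_p(G_p)$ in probability for each fixed $0 \le p \le n-1$.

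Fix such a $p$. By Theorem \ref{theo:wlln_aneal} the parameter $\beta_p^N$ (and $\beta_{p+1}^N$) converges in probability to $\beta_p$ (resp.\ $\beta_{p+1}$), and by Corollary \ref{cor:wlln_aneal} the vector $\eta_p^N(\xi_{p+1})$ converges in probability to $\eta_p(\xi_{p+1})$; hence $\theta_p^N \to \theta_p$ in probability in the relevant Polish space. Under (A\ref{hyp:anneal1}) the map $(x,\theta)\mapsto G_{p,\theta}(x)$ is bounded and continuous at $\theta_p$ uniformly over $x \in E$, so the auxiliary-variable conclusion of Theorem \ref{theo:wlln_aneal}, applied with $V_N = \theta_p^N$ and $\phi(x,\theta) = G_{p,\theta}(x)$, gives $\eta_p^N(G_{p,N}) = \eta_p^N[G_{p,\theta_p^N}] \to \eta_p[G_{p,\theta_p}] = \eta_p(G_p)$ in probability. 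Combining over $p$ and with $\eta_n^N(\phi_n)$ as above finishes the argument.

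There is no real obstacle here beyond bookkeeping: the substantive work — the convergence of the empirical temperatures and the handling of randomly parametrised test functions — has already been absorbed into Theorem \ref{theo:wlln_aneal}, and the rest is an application of Slutsky's Lemma and continuity of multiplication. The only point requiring a line of care is confirming that the joint vector $\theta_p^N$ converges in probability (which follows since each coordinate does), so that the uniform-continuity hypothesis of Theorem \ref{theo:wlln_aneal} can be invoked with $\mathsf{V}$ taken to be (a neighbourhood of $\theta_p$ inside) $\domain(\beta)\times\domain(\xi_{p+1})$.
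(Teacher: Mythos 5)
Your proof is correct and is exactly the argument the paper intends (it omits the proof of this corollary, but your route is the direct analogue of its proof of Corollary \ref{cor.wlln.normalisation}, with Theorem \ref{theo:wlln_aneal} correctly substituted to handle the random parameter $\theta_p^N$ in the potentials). In fact your treatment is slightly more careful than the paper's earlier proof, which glosses over the distinction between $\eta_p^N(G_{p,N})$ and $\eta_p^N(G_p)$; your explicit appeal to the auxiliary-variable WLLN with $V_N=\theta_p^N$ is the right way to close that gap.
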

\begin{proof}[Proof of Theorem \ref{theo:wlln_aneal}]
Clearly, it suffices tp concentrate on the  case $r=1$.
We prove by induction on the rank $n \geq 0$ that $\beta^N_{n}$ converges in probability to $\beta_{n}$ and for any test function $\phi: E \times \mathsf{V} \to \bbR$ bounded and continuous at $\mathsf{v} \in \mathsf{V}$ uniformly on $E$ 
that
$[\eta_n^N-\eta_n](\phi) \prob 0$. 
The initial  case $n=0$ is a direct consequence of WLLN for i.i.d.~random variables and Definition \ref{def.unif.cont}. 
We assume the result at rank $n-1$ and proceed to the induction step.
\begin{itemize}
\item We first focus on proving that $\beta_{n}^{N}$ converges in probability to $\beta_{n}$.
Note that $\beta^N_{n}$ can also be expressed as
\begin{equation*}
\beta^N_{n} := \inf \Big\{ \beta \in [\beta_0, \beta_*] \; : \; 
\frac{\zeta_{1,n-1}^N(\beta)}{\zeta_{2,n-1}^N(\beta)} \leq \alpha \Big\}
\end{equation*}
with
$\zeta_{1,n-1}^N(\beta) = \eta_{n-1}^N[e^{-\max(0, \beta-\beta^N_{n-1}) \, V}]^2$
and
$\zeta_{2,n-1}^N(\beta) = \eta_{n-1}^N[e^{-2\max(0, \beta-\beta^N_{n-1}) \, V}]$. 
Indeed, the limiting temperature $\beta_{n}$ can also be expressed as 
\begin{equation*}
\beta_{n} := \inf \Big\{ \beta \in [\beta_0, \beta_*] \; : \; 
\frac{\zeta_{1,n-1}(\beta)}{\zeta_{2,n-1}(\beta)} \leq \alpha \Big\}
\end{equation*}
where $\zeta_{1,n-1}(\beta)$ and $\zeta_{2,n-1}(\beta)$ are the limiting values of 
$\zeta^N_{1,n-1}(\beta)$ and $\zeta^N_{2,n-1}(\beta)$. The dominated convergence theorem shows that  the paths 
$\beta \mapsto \zeta_{1,n-1}^N(\beta) / \zeta_{2,n-1}^N(\beta)$ and 
$\beta \mapsto \zeta_{1,n-1}(\beta) / \zeta_{2,n-1}(\beta)$ are continuous; it thus suffices to prove that 
the limit
\begin{equation} \label{eq.uniform.cv.temperature}
\lim_{N \to \infty} \; 
\big\| \zeta_{1,n-1}^N(\beta) / \zeta_{2,n-1}^N(\beta)-\zeta_{1,n-1}(\beta) / \zeta_{2,n-1}(\beta)\big\|_{\infty, [\beta_0, \beta_*]}
= 0
\end{equation}
holds in probability. Lemma \ref{lem.ess.decreasing} shows that the function $\beta \mapsto \zeta_{i,n-1}^N(\beta)$ is decreasing on $[\beta_0, \beta_*]$ for any $1 \leq i \leq 2$ and $n,N \geq 1$; by standard arguments, for proving \eqref{eq.uniform.cv.temperature} it suffices to show that for any fixed inverse temperature $\beta \in [\beta_0, \beta_*]$ the difference $\zeta_{1,n-1}^N(\beta) / \zeta_{2,n-1}^N(\beta)-\zeta_{1,n-1}(\beta) / \zeta_{2,n-1}(\beta)$ converges to zero in probability. Indeed, one can focus on proving that $\zeta^N_{i,n-1}(\beta)$ converges in probability to $\zeta_{i,n-1}(\beta)$ for $i \in \{1,2\}$. We present the proof for $i=2$, the case $i=1$ being entirely similar.
\begin{itemize}
\item For the case $\beta < \beta_{n-1}$, the induction hypothesis shows that $\beta^N_{n-1}$ converges in probability to $\beta_{n-1}$. Since $\zeta^N_{2,n-1}(\beta)=1=\zeta_{2,n-1}(\beta)$ for $\beta \leq \min(\beta^N_{n-1}, \beta_{n-1})$, the conclusion follows.
\item The case $\beta \geq \beta_{n-1}$ follows from the convergence in probability of $\beta^N_{n-1}$ to $\beta_{n-1}$ and $\eta^N_{n-1}(e^{ -(\beta-\beta_{n-1}) \, V})$ to $\eta_{n-1}(e^{ -(\beta-\beta_{n-1}) \, V})$.
\end{itemize}
\item
To prove that $\eta_n^N[\phi_n(\cdot, V_N)]$ converges in probability towards $\eta_n[\phi_n(\cdot, \mathsf{v})]$, because of the convergence in probability of $\beta^N_{n}$ to $\beta_{n}$, of $\eta^N_{n-1}(\xi_n)$ to $\eta_{n-1}(\xi_n)$ and of $V_n$ to $\mathsf{v}$, one can use exactly the same approach as the one in the proof of Theorem \ref{theo:wlln}.
\end{itemize}
\end{proof}

\subsection{Central Limit Theorem}
In this section we extend the fluctuation analysis of Section \ref{sec.algo1.CLT} to the adaptive annealing setting. We prove that for a test function $\phi_n$ the empirical quantity $\gamma^N_n(\phi_n)$ converges at $N^{-1/2}$-rate towards its limiting value $\gamma_n(\phi_n)$; we give explicit recursive expressions for the asymptotic variances.
It is noted that results for $\eta_n^N(\phi_n)$ may also be proved as in Section \ref{sec.algo1.CLT}, but are omitted for brevity.
Before stating the main result of this section, several notations need to be introduced. For any $n \geq 0$ and test function $\phi_n: E \to \bbR^r$ we consider the extension operator $\ext_n$ that maps the test function $\phi_n$ to the function $\ext_n(\phi_n): E \to \bbR^{r+2}$ defined by
\begin{equation*}
\ext_n(\phi) := \Big(G_n - \eta_{n}(G_n), \, G_n^2- \eta_{n}(G^2_n), \, \phi_n \Big)^\top.
\end{equation*}
The linear operator $\A_n$ maps the bounded Borel function $\phi_n: E \to \bbR^r$ to the rectangular $(r+1)\times (r+3)$ matrix $\A_n(\phi_n)$  
defined by
$[\A_n(\phi_n)]_{1,1}=1$, $[\A_n(\phi)]_{1,[4:r+3]} = 0_{1 \times r}$, $[\A_n(\phi_n)]_{[2:r+1],[4:r+3]} = I_{r \times r}$ and
\begin{align*}
&[\A_n(\phi_n)]_{1,2}= -2\gamma_{n-1}^{-1}(1) \, \frac{ \eta_{n-1}(G_{n-1})}{\eta_{n-1}(G^2_{n-1})}\cdot \Big\{ \partial_\Delta \Big[
\frac{\eta_{n-1}(G_{n-1})^2}{\eta_{n-1}(G_{n-1}^2)} \Big] \Big\}^{-1} \ ; \\
&[\A_n(\phi_n)]_{1,3}  =   
\gamma_{n-1}^{-1}(1) \, \frac{\eta_{n-1}(G_{n-1})^2}{\eta_{n-1}(G_{n-1}^2)^2} \cdot \Big\{ \partial_\Delta 
\Big[ \frac{\eta_{n-1}(G_{n-1})^2}{\eta_{n-1}(G_{n-1}^2)} \Big] \Big\}^{-1}\  ;\\
&[\A_n(\phi_n)]_{2:r+1,1}  = \big(\partial_{\beta_{n-1}}+\partial_{\beta_{n}}\big) \, \eta_{n-1}(Q_n \phi_n)\  ; \\
&[\A_n(\phi_n)]_{2:r+1,2}= \gamma_{n-1}(1) \, \eta_{n-1}[\partial_{\beta_{n}} Q_n \phi_n] \times [\A_n(\phi_n)]_{1,2}\ ; \\
&[\A_n(\phi_n)]_{2:r+1,3}= \gamma_{n-1}(1) \, \eta_{n-1}[\partial_{\beta_{n}} Q_n \phi_n] \times [\A_n(\phi_n)]_{1,3}\  .
&\end{align*}

%
%
\begin{theorem}[CLT]\label{theo:clt2}
Assume (A\ref{hyp:anneal1})-(A\ref{hyp:anneal2}). 
Let $n \ge 0$, $r \ge 1$ and $\phi_n: E_n \to \bbR^{r}$ be a bounded measurable function. The sequence $\sqrt{N} \, \big( \beta^N_n - \beta_n , [\gamma^N_n - \gamma_n](\phi_n) \big)^\top$ converges weakly to a centred Gaussian distribution  with covariance
\begin{equation} \label{eq.recursion.cov}
\Sigma_n(\phi_n)
\;=\;
\A_n(\phi_n) \cdot \Sigma_{n-1}\big( \ext_{n-1}( Q_n \phi_n ) \big) \cdot \A_n(\phi_n)^\top + \gamma^2_n(1) \, \wtilde{\Sigma}_{\eta_n}(\phi_n)
\end{equation}
where $\wtilde{\Sigma}_{\eta_n}(\phi_n)$ is the covariance matrix of the function $\big(0,\phi_n \big)^\top$ under $\eta_n$.
\end{theorem}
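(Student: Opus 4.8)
\textbf{Proof strategy for Theorem \ref{theo:clt2}.}
The plan is to follow the same induction-on-$n$ template used in the proof of Theorem \ref{thm.clt.unnormalised}, but now carrying along the inverse-temperature coordinate $\beta^N_n$ as an extra component of the vector whose fluctuations we track. The case $n=0$ is the usual i.i.d.\ CLT (with $\beta_0$ deterministic, the first coordinate contributes nothing). For the induction step, I would first establish the one-step recursion: assuming the $(r+1)$-dimensional CLT at rank $n-1$ for $\sqrt{N}\,\big(\beta^N_{n-1}-\beta_{n-1},[\gamma^N_{n-1}-\gamma_{n-1}](\psi)\big)$ holds jointly for all bounded test functions $\psi$, show that $\sqrt{N}\,\big(\beta^N_n-\beta_n,[\gamma^N_n-\gamma_n](\phi_n)\big)$ decomposes into a term measurable with respect to $\FF^N_{n-1}$ plus a conditionally-Gaussian ``new randomness'' term, exactly as in the $\wtilde{A}(N)+\wtilde{B}(N)$ split of Theorem \ref{thm.clt.unnormalised}. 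The conditional-CLT piece for $[\gamma^N_n-\gamma_n](\phi_n)$ given $\FF^N_{n-1}$ is handled verbatim by the Taylor-expansion/characteristic-function argument already in the text and produces the $\gamma_n^2(1)\,\wtilde{\Sigma}_{\eta_n}(\phi_n)$ contribution (with a zero in the $\beta$-slot, since $\beta^N_n$ is $\FF^N_{n-1}$-measurable).

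The genuinely new work is the $\FF^N_{n-1}$-measurable term, i.e.\ expressing $\sqrt{N}\,(\beta^N_n-\beta_n)$ and $\sqrt{N}\,\big(\E_{n-1}[\gamma^N_n(\phi_n)]-\gamma_n(\phi_n)\big)$ as a linear functional of $\sqrt{N}\,[\gamma^N_{n-1}-\gamma_{n-1}]$ applied to $\ext_{n-1}(Q_n\phi_n)$, up to $o_\PP(1)$, with the matrix of that linear functional being precisely $\A_n(\phi_n)$. For the temperature, I would use the implicit definition \eqref{eq.beta.recursion}: $\beta^N_n$ solves $\eta^N_{n-1}(e^{-\max(0,\beta-\beta^N_{n-1})V})^2=\alpha\,\eta^N_{n-1}(e^{-2\max(0,\beta-\beta^N_{n-1})V})$, and the limiting $\beta_n$ solves the analogous equation with $\eta_{n-1}$. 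Subtracting and applying the implicit function theorem / a first-order Taylor expansion (justified by Lemma \ref{lem.ess.decreasing}, which gives strict monotonicity hence a non-vanishing derivative $\partial_\Delta[\eta_{n-1}(G_{n-1})^2/\eta_{n-1}(G_{n-1}^2)]$, the quantity appearing inverted in the first row of $\A_n$) linearises $\beta^N_n-\beta_n$ in terms of $[\eta^N_{n-1}-\eta_{n-1}](G_{n-1})$, $[\eta^N_{n-1}-\eta_{n-1}](G_{n-1}^2)$ and $\beta^N_{n-1}-\beta_{n-1}$; the coefficients of the first two are exactly $[\A_n(\phi_n)]_{1,2}$, $[\A_n(\phi_n)]_{1,3}$ and the $\gamma_{n-1}^{-1}(1)$ factors convert from $\eta$- to $\gamma$-fluctuations. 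Converting $[\eta^N_{n-1}-\eta_{n-1}]$ fluctuations into $[\gamma^N_{n-1}-\gamma_{n-1}]$ fluctuations uses the identity from \cite[pp.~301]{delmoral} together with Corollary \ref{cor.wlln_aneal.normalisation}, as in the proof of Theorem \ref{theo:clt}. For the $\gamma^N_n(\phi_n)$ coordinate I would expand $\E_{n-1}[\gamma^N_n(\phi_n)]=\gamma^N_{n-1}(Q_{n,N}\phi_n)$ and write $\gamma^N_{n-1}(Q_{n,N}\phi_n)-\gamma_{n-1}(Q_n\phi_n) = [\gamma^N_{n-1}-\gamma_{n-1}](Q_n\phi_n) + \gamma^N_{n-1}(1)\cdot\eta^N_{n-1}\big[(Q_{n,N}-Q_n)\phi_n\big]$; the first term gives the $[\A_n]_{2:r+1,1}=(\partial_{\beta_{n-1}}+\partial_{\beta_n})\eta_{n-1}(Q_n\phi_n)$ row once we note $Q_n$ itself depends on $\beta_{n-1},\beta_n$ through $G_{n-1}$ and $M_n$, and the second term is handled by a mean-value expansion of $Q_{n,\theta}\phi_n$ in $\theta$ around $\theta_{n-1}$ (Assumption (A\ref{hyp:anneal2}) plus Theorem \ref{theo:wlln_aneal} to pass $\eta^N_{n-1}[\partial_\theta Q_n\phi_n]\to\eta_{n-1}[\partial_\theta Q_n\phi_n]$), producing the $\eta_{n-1}[\partial_{\beta_n}Q_n\phi_n]$ terms coupled with the already-derived expansion of $\beta^N_n-\beta_n$ and of $\eta^N_{n-1}(\xi_n)-\eta_{n-1}(\xi_n)$; the $\xi$-directional derivative term vanishes from the final covariance contribution only through the algebra encoded in $\A_n$ acting on $\ext_{n-1}$, but its fluctuation must still be tracked and shown to be a linear image of $[\gamma^N_{n-1}-\gamma_{n-1}](\xi_n)$.

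Assembling these, $\sqrt{N}\,\big(\beta^N_n-\beta_n,[\gamma^N_n-\gamma_n](\phi_n)\big)^\top = \A_n(\phi_n)\cdot\sqrt{N}\,[\gamma^N_{n-1}-\gamma_{n-1}]\big(\ext_{n-1}(Q_n\phi_n)\big) + \sqrt{N}\,\wtilde{A}(N) + o_\PP(1)$, where the two leading terms are asymptotically independent (one is $\FF^N_{n-1}$-measurable, the other conditionally Gaussian given $\FF^N_{n-1}$ with deterministic limiting covariance), so the limiting covariance is the sum $\A_n(\phi_n)\,\Sigma_{n-1}(\ext_{n-1}(Q_n\phi_n))\,\A_n(\phi_n)^\top + \gamma_n^2(1)\wtilde{\Sigma}_{\eta_n}(\phi_n)$, which is \eqref{eq.recursion.cov}; the conclusion follows from Lévy's continuity theorem applied to the conditional characteristic function, exactly as in Theorem \ref{thm.clt.unnormalised}. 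The main obstacle is bookkeeping the temperature-coordinate linearisation correctly: one must verify that the implicit-function-theorem expansion of the two coupled defining equations for $\beta^N_n$ produces \emph{precisely} the coefficients $[\A_n(\phi_n)]_{1,2},[\A_n(\phi_n)]_{1,3}$ (including the $\partial_\Delta$ factor and its inverse, and the normalising-constant prefactors), and that the dependence of $Q_n$ on \emph{both} $\beta_{n-1}$ and $\beta_n$ is accounted for when differentiating $\eta_{n-1}(Q_n\phi_n)$ — getting the split between ``$\partial_{\beta_{n-1}}+\partial_{\beta_n}$'' (which multiplies the temperature fluctuation already present at rank $n-1$) and ``$\partial_{\beta_n}$'' (which multiplies the freshly-created $\beta^N_n-\beta_n$ fluctuation) exactly right. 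A secondary technical point is that the induction hypothesis must be the \emph{joint} CLT over the pair (temperature, measure) and must hold for the enlarged, vector-valued test function $\ext_{n-1}(Q_n\phi_n)$, so the induction statement has to be set up in that strengthened form from the start.
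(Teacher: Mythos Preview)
Your overall template---induction on $n$, the $\wtilde{A}(N)+\wtilde{B}(N)$ split with the conditional characteristic-function argument for the fresh randomness, an implicit-function/Taylor linearisation of the ESS equation for $\beta^N_n-\beta_n$, and substitution of that expansion back into the linearisation of $\eta^N_{n-1}[(Q_{n,N}-Q_n)\phi_n]$---is exactly the paper's route, and the recursion you arrive at is the right one.

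There is, however, one concrete point where your bookkeeping would not close. You write that ``the $\xi$-directional derivative term vanishes from the final covariance contribution only through the algebra encoded in $\A_n$ acting on $\ext_{n-1}$, but its fluctuation must still be tracked''. This is not how it goes. The function $\ext_{n-1}(Q_n\phi_n)$ contains only $G_{n-1}-\eta_{n-1}(G_{n-1})$, $G_{n-1}^2-\eta_{n-1}(G_{n-1}^2)$ and $Q_n\phi_n$; the statistic $\xi_n$ is \emph{nowhere} in it, so there is no algebraic mechanism by which a nonzero $\xi$-fluctuation could be absorbed. The reason the recursion closes is structural: since $M_{n,\xi,\beta_n}$ preserves $\eta_n$ for every $\xi$, one has $\eta_{n-1}(Q_{n,\xi}\phi_n)=\eta_{n-1}(G_{n-1})\,\eta_n(\phi_n)$ independently of $\xi$, hence $\eta_{n-1}[\partial_\xi Q_n\phi_n]=0$. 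The paper uses exactly this observation (it is the analogue of Theorem~\ref{thm.stability}). Without it, you would be forced to enlarge $\ext_{n-1}$ to include $\xi_n$ and the matrix $\A_n$ would have extra columns---the stated theorem would be false as written.

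A second, smaller misattribution: the term $[\gamma^N_{n-1}-\gamma_{n-1}](Q_n\phi_n)$ produces the identity block $[\A_n]_{[2:r+1],[4:r+3]}=I$, not the entry $[\A_n]_{2:r+1,1}$. The $(\partial_{\beta_{n-1}}+\partial_{\beta_n})\eta_{n-1}(Q_n\phi_n)$ coefficient of $(\beta^N_{n-1}-\beta_{n-1})$ comes entirely from the other piece, $\gamma^N_{n-1}(1)\cdot\eta^N_{n-1}[(Q_{n,N}-Q_n)\phi_n]$: its Taylor expansion yields $\eta_{n-1}[\partial_{\beta_{n-1}}Q_n\phi_n]\,(\beta^N_{n-1}-\beta_{n-1})+\eta_{n-1}[\partial_{\beta_n}Q_n\phi_n]\,(\beta^N_n-\beta_n)$, and substituting your linearisation of $\beta^N_n-\beta_n$ (whose leading coefficient on $\beta^N_{n-1}-\beta_{n-1}$ is $[\A_n]_{1,1}=1$) combines the two $\partial$'s.
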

\begin{proof}
The proof follows closely the one of Theorem \ref{thm.clt.unnormalised}. For the reader's convenience, we only  highlight the differences. The proof proceeds by induction, the case $n=0$ directly following from the CLT for i.i.d random variables. For proving the induction step, assuming that the result holds at rank $n-1$, it suffices to prove that 
\begin{equation} \label{eq.conditional.rec}
\E_{n-1}\left(\begin{array}{c}
\beta^N_n - \beta_n \\
\big[ \gamma^N_n - \gamma_n\big](\phi_n)
\end{array} \right) 
\; = \;
\A_{n,N}(\phi_n)\,
\left( \begin{array}{c}  
\beta^N_{n-1} - \beta_{n-1} \\
\big[ \gamma^N_n - \gamma_n\big]  \big( \ext[Q_n \phi_n] \big)
\end{array}\right),
\end{equation}
with $\A_{n,N}(\phi_n) \in \mat_{r+1,r+3}(\bbR)$ converging in probability to $\A_n(\phi_n)$, and that for any vector $t \in \bbR^r$ the following limit holds in probability
\begin{equation*}
\lim_{N \to \infty}
\E\Big[ \exp \Big\{ i \, t \, \sqrt{N} \, C(N) \Big]
\;=\;
\exp\big\{ -\gamma^2_n(1) \, \bra{t, \Sigma_{\eta_n}(\phi_n) \, t}  / 2\big\}\ 
\end{equation*}
with 
$C(N) = (\gamma^N_n-\gamma_n)(\phi_n) - \E_{n-1}\big[ (\gamma^N_n-\gamma_n)(\phi_n) \big]$.
The proof of the above displayed equation is identical to the proof of  \eqref{eq.cv.prob.fourier} and is thus omitted. We now prove  \eqref{eq.conditional.rec}.
\begin{itemize}
\item
We first treat the term $\E_{n-1}[\beta^N_n - \beta_n] = \beta^N_n - \beta_n$.
The relation
$\ess(\eta^N_{n-1}, e^{-\Delta^N_{n-1} \, V}) = \alpha = \ess(\eta_{n-1}, e^{-\Delta_{n-1} \, V})$
can be rearranged as
\begin{align} \label{eq.ess.decomp}
\begin{aligned}
\eta_{n-1}(G_{n-1})^2 \, &\Big\{ \eta^N_{n-1}(e^{-2 \Delta^N_{n-1}V}) - \eta_{n-1}(e^{-2 \Delta_{n-1}V})\Big\} = \\
&
\eta_{n-1}(G_{n-1}^2) \, \Big\{ \eta^N_{n-1}(e^{- \Delta^N_{n-1}V})^2 - \eta_{n-1}(e^{- \Delta_{n-1}V})^2\Big\}\ .
\end{aligned}
\end{align}
Decomposing $\eta^N_{n-1}(e^{-2 \Delta^N_{n-1}V}) - \eta_{n-1}(e^{-2 \Delta_{n-1}V})$ 
as the sum of 
$\eta^N_{n-1}(e^{-2 \Delta^N_{n-1}V}) - e^{-2 \Delta_{n-1}V})$
and
$[\eta^N_{n-1} - \eta_{n-1}](G_{n-1}^2)$,
and using a similar decomposition for the difference 
$\eta^N_{n-1}(e^{- \Delta^N_{n-1}V})^2 - \eta_{n-1}(e^{- \Delta_{n-1}V})^2$,
one can exploit the boundedness of the potential $V$, Theorem \ref{theo:wlln_aneal} and 
the same approach as the one used for proving  \eqref{eq.Q.diff.eta} to obtain that 
$\eta^N_{n-1}(e^{-2 \Delta^N_{n-1}V}) - \eta_{n-1}(e^{-2 \Delta_{n-1}V})$ equals
\begin{align} \label{eq.ess.d.1}
\Big\{ \partial_{\Delta}\eta_{n-1}(G^2_{n-1}) + o_{\PP}(1) \Big\} 
\times (\Delta_{n-1}^N - \Delta_{n-1})+
[\eta^N_{n-1} - \eta_{n-1} ] (G_{n-1}^2)
\end{align}
and 
$[\eta^N_{n-1}(\kappa^{\Delta^N_{n-2}})^2-\eta_{n-2}(\kappa^{\Delta_{n-2}})^2]$ equals
\begin{align} \label{eq.ess.d.2}
\begin{aligned}
\Big\{ 2 \eta_{n-1}&(G_{n-1}) \partial_{\Delta}\eta_{n-1}(G_{n-1}) + o_{\PP}(1) \Big\} \times (\Delta_{n-2}^N - \Delta_{n-2})\\
&+
\Big\{ 2 \eta_{n-1}(G_{n-1}) + o_{\PP}(1) \Big\} \times
[\eta^N_{n-1} - \eta_{n-1} ] (G_{n-1})\ .
\end{aligned}
\end{align}
Since $(\Delta^N_{n-1}-\Delta_{n-1})$ equals $(\beta^N_{n}-\beta_{n})+(\beta^N_{n-1}-\beta_{n-1})$, Slutsky's Lemma, Equations \eqref{eq.ess.decomp}, \eqref{eq.ess.d.1}, \eqref{eq.ess.d.2} and standard algebraic manipulations yield
\begin{align} \label{eq.beta.rec}
\begin{aligned}
(\beta_n^N-\beta_n) 
&= 
[\A_{n,N}(\phi)]_{1,1} \, (\beta_{n-1}^N-\beta_{n-1}) \\
&\qquad +
[\A_{n,N}(\phi)]_{1,2} \, [\gamma_{n-1}^N-\gamma_{n-1}]
\big(G_{n-1}-\eta_{n-1}(G_{n-1}) \big)\\
&\qquad +
[\A_{n,N}(\phi)]_{1,3} \, [\gamma_{n-1}^N-\gamma_{n-1}]\big(G^2_{n-1}-\eta_{n-1}(G^2_{n-1}) \big)
\end{aligned}
\end{align}
where $[\A_{n,N}(\phi)]_{1,i}$ converges in probability to $[\A_{n,N}(\phi)]_{1,i}$ for $1 \leq i \leq 3$.
\item
To deal with the term $\E_{n-1}\big[ (\gamma^N_n - \gamma_n)(\phi_n) \big]$ we make use of the decomposition 
\begin{equation} \label{eq.gamma.rec}
\E_{n-1}\big[ (\gamma^N_n - \gamma_n)(\phi_n) \big]
=
\gamma_{n-1}^N(1) \times \eta^N_{n-1}[Q_{n,N}-Q_n](\phi_n) + [\gamma^N_{n-1}-\gamma_{n-1}](Q_n \phi_n)\ .
\end{equation}
Assumptions (A\ref{hyp:anneal1})-(A\ref{hyp:anneal2}), Theorem \ref{theo:wlln_aneal} and 
the same approach as the one used for proving  \eqref{eq.Q.diff.eta} show that the term $\eta^N_{n-1}[Q_{n,N}-Q_n](\phi_n)$ equals
\begin{equation*}
\Big\{ \eta_{n-1}[\partial_{\beta_{n-1}}Q_n \phi_n ] + o_{\PP}(1)\Big\} \, (\beta^N_{n-1}-\beta_{n-1})
+
\Big\{ \eta_{n-1}[\partial_{\beta_{n}}Q_n \phi_n] + o_{\PP}(1)\Big\} \, (\beta^N_{n}-\beta_{n})\ .
\end{equation*}
Note that there is no term involving the derivative with respect to the value of the summary statistics; indeed, this is because for any value of $\xi \in \bbR^r$ the Markov kernel $M_{n,\xi}$ preserves $\eta_{n}$ so that one can readily check that $\eta_{n-1}[ \partial_{\xi} Q_{n,\xi} \phi_n] = 0$. One can then use  \eqref{eq.beta.rec} to express $(\beta^N_{n}-\beta_{n})$ in terms of the three quantities  $(\beta^N_{n-1}-\beta_{n-1})$, $[\gamma_{n-1}^N-\gamma_{n-1}]
\big(G_{n-1}-\eta_{n-1}(G_{n-1}) \big)$ and $[\gamma_{n-1}^N-\gamma_{n-1}]
\big(G^2_{n-1}-\eta_{n-1}(G^2_{n-1}) \big)$ and obtain, via Slutsky's Lemma and  \eqref{eq.gamma.rec}, that for any coordinate $1 \leq i \leq r$,
\begin{align} \label{eq.gamma.rec.2}
\begin{aligned}
\E_{n-1}\big[ (\gamma^N_n - \gamma_n)(\phi_n) \big]_i
&= 
[\A_{n,N}(\phi)]_{i+1,1} \, (\beta_{n-1}^N-\beta_{n-1}) \\
&\qquad +
[\A_{n,N}(\phi)]_{i+1,2} \, [\gamma_{n-1}^N-\gamma_{n-1}]
\big(G_{n-1}-\eta_{n-1}(G_{n-1}) \big)\\
&\qquad +
[\A_{n,N}(\phi)]_{i+1,3} \, [\gamma_{n-1}^N-\gamma_{n-1}]\big(G^2_{n-1}-\eta_{n-1}(G^2_{n-1}) \big)\\
&\qquad +
[\gamma_{n-1}^N-\gamma_{n-1}](Q_n \phi_n)_i
\end{aligned}
\end{align}
where $[\A_{n,N}(\phi)]_{i+1,j}$ converges in probability to  $[\A_{n}(\phi)]_{i+1,j}$ for $1 \leq j \leq 3$.
\end{itemize}
Equation \eqref{eq.conditional.rec} is a simple rewriting of 
\eqref{eq.beta.rec} and \eqref{eq.gamma.rec.2}. This concludes the proof.
\end{proof}

%
%
\section{Applications}\label{sec:exam}

%
%
\subsection{Verifying the Assumptions}
We consider the sequential Bayesian parameter inference framework of Section \ref{sec:seq_bi}. That is, for a parameter $x \in E=\bbR^m$, observations $y_i \in \mathcal{Y}$ and prior measure with density $\eta_0(x)$ with respect to the Lebesgue measure in $\bbR^m$.
We assume the following.
\begin{hypB}\label{hyp:b1}
For each $n \geq 1$ the function $G_n(x) := \PP[y_{n+1} \mid y_{1:n},x]$ is bounded and strictly positive. The statistics $\xi_n: E \to \bbR^d$ is bounded.
\end{hypB}
\begin{hypB}\label{hyp:b2}
For each $n\geq 1$, the parametric family of Markov kernel $M_{n,\xi}$ is given by a Random-Walk-Metropolis kernel.  The proposal density $q(\cdot; \xi)$ is symmetric; for a current position $x \in E$ the proposal $y$ is such that $\PP(y-x \in du) = q(u; \xi) \, du$. We suppose that the first and second derivatives
\begin{equation*}
\xi \mapsto \nabla_{\xi} q(u;\xi) \ ; \quad
\xi \mapsto \nabla^2_{\xi} q(u;\xi) \ ,
\end{equation*}
are bounded on the range $\domain(\xi_n)$ of the adaptive statistics $\xi_n: E \to \domain(\xi_n) \subset \bbR^d$.
\end{hypB}
%
%
%

Assumption (B\ref{hyp:b1}) is reasonable and satisfied by many real statistical models. Similarly, it is straightforward to construct proposals verifying Assumption (B\ref{hyp:b2}); one can for example show that for a function $\sigma: \domain(\xi_n) \to \bbR_+$, bounded away from zero with bounded first and second derivatives, the Gaussian proposal density $q(u; \xi) := \exp\big\{ -u^2 / [2 \sigma^2(\xi)] \big\} / \sqrt{2 \pi \sigma^2(\xi)}$ satisfies Assumption (B\ref{hyp:b2}); multi-dimensional extensions of this settings are readily constructed.

%
%
\begin{prop} \label{prop.verif.assump}
Assume (B\ref{hyp:b1}-\ref{hyp:b2}). The kernels $(M_{n,\cdot})_{n\geq 1}$  and potentials $(G_n)_{n\geq 0}$ satisfy Assumptions (A\ref{hyp:1}-\ref{hyp:2}).
\end{prop}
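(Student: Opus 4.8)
The plan is to verify each clause of (A\ref{hyp:1}) and (A\ref{hyp:2}) in turn, reducing every statement to boundedness and differentiability properties of the Random-Walk-Metropolis acceptance mechanism as a function of the parameter $\xi$. First I would record the explicit form of the RWM kernel: writing $a(x,y;\xi) = 1 \wedge \big[ \eta_n(y)/\eta_n(x) \big]$ for the acceptance probability (which does \emph{not} depend on $\xi$ since the proposal is symmetric), we have
\begin{equation*}
M_{n,\xi}\phi(x) = \int \phi(x+u)\, a(x,x+u;\xi=\text{n/a})\, q(u;\xi)\, du + \phi(x)\,\Big(1 - \int a(x,x+u)\, q(u;\xi)\, du\Big)\ ,
\end{equation*}
so the \emph{only} appearance of $\xi$ is through the proposal density $q(\cdot;\xi)$. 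Consequently $Q_{n,\xi}\phi_n(x) = G_{n-1,\xi}(x)\, M_{n,\xi}\phi_n(x)$, and since in this example $G_{n-1}$ has no $\xi$-dependence at all (it is the fixed likelihood increment $\PP[y_n\mid y_{1:n-1},x]$), the continuity and differentiability in $\xi$ of $Q_{n,\xi}\phi_n$ are inherited directly from those of $M_{n,\xi}\phi_n$, multiplied by the bounded factor $G_{n-1}$.

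Next I would check (A\ref{hyp:1}). Boundedness of $(x,\xi)\mapsto G_{n,\xi}(x)$ and of the statistics $\xi_{n+1}$ is immediate from (B\ref{hyp:b1}); continuity in $\xi$ is vacuous since $G_n$ does not depend on $\xi$. For $(x,\xi)\mapsto Q_{n+1,\xi}\phi_{n+1}(x)$: boundedness follows from $\|\phi_{n+1}\|_\infty < \infty$, $\|G_n\|_\infty<\infty$, and the fact that $M_{n,\xi}$ is Markov so $\|M_{n,\xi}\phi_{n+1}\|_\infty \le \|\phi_{n+1}\|_\infty$. For continuity at $\xi = \eta_n(\xi_{n+1})$ uniformly over $x$, I would write
\begin{equation*}
M_{n,\xi}\phi(x) - M_{n,\xi_0}\phi(x) = \int \big[\phi(x+u) - \phi(x)\big]\, a(x,x+u)\,\big[ q(u;\xi) - q(u;\xi_0)\big]\, du\ ,
\end{equation*}
and bound the right-hand side by $2\|\phi\|_\infty \int |q(u;\xi) - q(u;\xi_0)|\, du$; the mean value theorem together with the uniform bound on $\nabla_\xi q$ from (B\ref{hyp:b2}) gives $\int |q(u;\xi) - q(u;\xi_0)|\, du \le |\xi-\xi_0|\cdot \sup_\xi \int |\nabla_\xi q(u;\xi)|\, du$, which tends to $0$ as $\xi\to\xi_0$ uniformly in $x$. (One subtlety: I should confirm that the boundedness of $\nabla_\xi q(u;\xi)$ in (B\ref{hyp:b2}) is meant in the integrated/$L^1(du)$ sense, or else supplement it with the integrability implicit in $q(\cdot;\xi)$ being a density — for the Gaussian example given this is automatic, and I would phrase the verification so that the needed quantity $\sup_{\xi\in\domain(\xi_n)}\int |\nabla_\xi q(u;\xi)|\, du$ is finite.)

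For (A\ref{hyp:2}) I would differentiate under the integral sign: $\partial_\xi Q_{n+1,\xi}\phi_{n+1}(x) = G_n(x)\, \int [\phi_{n+1}(x+u) - \phi_{n+1}(x)]\, a(x,x+u)\, \nabla_\xi q(u;\xi)\, du$, justified by the dominating function supplied by the bound on $\nabla_\xi q$. Boundedness of this expression (uniformly in $x$ and $\xi\in\domain(\xi_{n+1})$) follows as above; continuity at $\xi=\eta_n(\xi_{n+1})$ uniformly over $x$ follows by the same mean-value-theorem argument applied now to $\nabla_\xi q$, using the bound on the \emph{second} derivative $\nabla^2_\xi q$ from (B\ref{hyp:b2}). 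The convexity of $\domain(\xi_{n+1})$ is exactly what lets me apply the mean value theorem along segments. The main obstacle I anticipate is purely bookkeeping around integrability: one must be careful that all the "$\int |\nabla_\xi q|\,du$" and "$\int |\nabla^2_\xi q|\,du$" quantities are genuinely finite and uniformly bounded over $\domain(\xi_n)$ — the statement of (B\ref{hyp:b2}) says the derivatives are "bounded on the range $\domain(\xi_n)$", which I read as bounded jointly in $(u,\xi)$ but then needs to be combined with tail decay of $q$ to get $L^1(du)$ control; for the explicit Gaussian family this is trivial, so I would either assume the intended reading or insert one line noting that for the proposals constructed (Gaussian with $\sigma(\cdot)$ bounded away from $0$ with bounded derivatives) these integrals are finite and uniformly bounded. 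Apart from that, every step is a routine application of dominated convergence, the mean value theorem, and the Markov property of $M_{n,\xi}$.
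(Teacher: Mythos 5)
Your argument is essentially the paper's own proof: write $M_{n,\xi}\phi(x)=\phi(x)+\int[\phi(x+u)-\phi(x)]\,r(x,u)\,q(u;\xi)\,du$, note that $G_n$ carries no $\xi$-dependence, and obtain (A1)--(A2) by differentiating under the integral sign and invoking the bounds on $\nabla_\xi q$ and $\nabla^2_\xi q$ from (B2). The integrability caveat you raise (pointwise boundedness of $\nabla_\xi q$ versus the $L^1(du)$ control actually needed) is a legitimate observation that the paper's terse proof also implicitly relies on, and your resolution via the Gaussian family is the intended reading.
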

\begin{proof}
By assumption, the potentials $\{ G_n \}_{n \geq 0}$ are bounded and strictly positive and the statistics $\xi_n: E \to \bbR^d$ are bounded.
To verify that Assumptions (A\ref{hyp:1}-\ref{hyp:2}) are satisfied, it suffices to prove that for any test 
function $\phi \in \mathcal{B}_b(E)$, the first and second derivatives of  
$(x,\xi) \mapsto M_{n,\xi} \phi (x)$ exist and are uniformly bounded.
The Metropolis-Hastings accept-reject 
ratio of the proposal $x \mapsto x+u$ is $r(x,u) := \min\big\{ 1, \big( \PP[y_{1:n} \mid x+u] \, \eta_0(x+u)  \big) \, / \, \big( \PP[y_{1:n} \mid x] \, \eta_0(x)  \big) \big\}$
and we have
%
$M_{n,\xi}(\phi)(x) 
=
\phi(x) + 
\int_{\bbR^m} \big[\phi(x+u) - \phi(x) \big] \, r(x,u) \, q(u; \xi)  \, du$. 
Differentiation under the integral sign yields
\begin{align*}
&\nabla_{\xi} M_{n,\xi}(\phi)(x) 
= \int \big[\phi(x+u) - \phi(x) \big] \, r(x,u) \, 
\nabla_{\xi} q(u;\xi) \, du\  , \\
&\nabla^2_{\xi} M_{n,\xi}(\phi)(x) 
= \int \big[\phi(x+u) - \phi(x) \big] \, r(x,u)
\nabla^2_{\xi} q(u;\xi) \, du\ ,
\end{align*}
and the conclusion follows by boundedness of the first and second derivative of $q(u;\xi)$ with respect to the parameter $\xi \in \domain(\xi_n)$.
\end{proof}

%
%
\subsection{Numerical Example}\label{sec:num_ex}
We now provide a numerical study of a high-dimensional sequential Bayesian parameter inference, as described in  Section \ref{sec:seq_bi}, applied to the Navier-Stokes model. In this section, we briefly describe the Navier-Stokes model, the associated SMC algorithm and focus on the analysis of the behavior of the method when estimating the normalising constant. The SMC method to be presented is described in detail in \cite{kantas}.  In the subsequent discussion, we highlight the algorithmic 
challenges and the usefulness of the adaptive SMC methodology when applied to such high-dimensional scenarios. This motivates theoretical results presented in Section \ref{sec:simos} where the stability properties of the SMC estimates are investigated in the regime where the dimension $d$ of the adaptive statistics is large.

%
%
\subsubsection{Model Description}
We work with the Navier-Stokes dynamics describing the incompressible flow of a fluid in a two dimensional torus $\mathbb{T}=[0,2\pi)\times[0,2\pi)$. The time-space varying velocity field is denoted by $v(t,x):[0,\infty) \times \mathbb{T} \rightarrow \bbR^{2}$. The Newton's laws of motion yield the Navier-Stokes system of partial differential equations \cite{doer}
\begin{align}
\begin{aligned}\label{eq:NSPDE}
\partial_{t}v-\nu\Delta v+(v\cdot\nabla)\, v=f-\nabla\mathfrak{p}\ ,
\qquad 
\nabla\cdot v=0\ ,
\qquad
\int_{\mathbb{T}}v(x,\cdot) \, dx=0\ , 
\end{aligned}
\end{align}
with initial condition $v(x,0)=u(x)$. The quantity $\nu>0$ is a viscosity parameter, $\mathfrak{p}:\mathbb{T}\times[0,\infty) \rightarrow \bbR$ is the pressure field and $f:\mathbb{T}\rightarrow\bbR^{2}$ is an exogenous time-homogeneous forcing. For simplicity, we assume periodic boundary conditions. We adopt a Bayesian approach for inferring the unknown initial condition $u = u(x)$ from noisy measurements of the evolving velocity
field $v(\cdot,t)$ on a fixed grid of points $\big(x_{1},\ldots,x_{M} \big) \in \mathbb{T}$. Performing inference with this type of data is referred to as Eulerian data assimilation.
Measurements are available at time $t_j := j \times \delta$ for time increment $\delta > 0$ and index $1 \leq j \leq T$ at each fixed location $x_m \in \mathbb{T}$. We assume i.i.d Gaussian measurements error with standard deviation $\varepsilon > 0$ so that the noisy observations $y:=\big\{ y_{j,m}\}_{j,m}$ for $1 \leq j \leq T$ and $1 \leq m \leq M$ can be modelled as
\begin{equation*}
y_{j,m}=v\left(x_{m},t_j\right)+\varepsilon\,\zeta_{j,m}
\end{equation*}
for an i.i.d sequence $\zeta_{j,m}\stackrel{iid}{\sim}\Normal(0,I_2)$.
We follow the notations of \cite{kantas} and set
\begin{equation*}
\mathbb{U}=\Big\{ \left.2\pi\textrm{-}\mbox{periodic trigonometric polynomials }u:\:\mathbb{T}\rightarrow\mathbb{R}^{2}\right|\:\nabla\cdot u=0\ ,\:\int_{\mathbb{T}}u(x)dx=0\,\Big\}\ .
\end{equation*}
We use a Gaussian random field prior for the unknown initial condition; as will become apparent from the discussion to follow, it is appropriate in this setting to assume that the initial condition $u=u(x)$ belongs the closure $U$ of $\mathbb{U}$ with respect to the $\big(L^{2}(\mathbb{T}) \big)^{2}$ norm. 
The semigroup operator for the Navier-Stokes PDE is denoted by $\Psi:U \times [0, \infty) \rightarrow U$ so that the likelihood for the noisy observation $y$ reads
\begin{equation}
\label{eq:likeli}
\ell(y;u) 
=
\exp \Big\{ -\frac{1}{2 \varepsilon^2} \sum_{j=1}^T \sum_{m=1}^M \big\|y_{j,m}-
[\Psi(u, t_j)](x_m)\,\big\|^{2} \Big\} / (2 \pi \varepsilon^2)^{MT}\ .
\end{equation}
Under periodic boundary conditions, an appropriate orthonormal basis for $U$ is comprised of the functions $\psi_{k}(x) := \big( k^{\perp} / (2\pi \, |k| \big) \, e^{ik\cdot x}$ for $k \in \mathbb{Z}^2_* := \mathbb{Z}^{2}\setminus\{(0,0)\}$ and $k^{\perp}:=(-k_{2},k_{1})^\top$, $\left|k\right| :=\sqrt{ k_1^2+k_2^2}$. The index $k$ corresponds to a bivariate frequency and the Fourier series decomposition of an element $u\in U$ reads
\begin{equation}
\label{eq:Fourier}
u(x)=\sum_{k\in\mathbb{Z}^{2}_*} \, u_{k} \, \psi_{k}(x)
\end{equation}
with Fourier coefficients $u_{k} = \langle u,\psi_{k} \rangle = \int_{\mathbb{T}} u(x) \cdot\overline{\psi}_{k}(x) \, dx$. Since the initial condition $u \in U$ is real-valued we have $\overline{u_{k}}=-u_{-k}$ and one can focus on reconstructing the frequencies in the subset
\begin{align*}
\mathbb{Z}_{\uparrow}^{2}=
\Big\{ k=(k_{1},k_{2})\in\mathbb{Z}^{2}_* \, : \,  [k_{1}+k_{2}>0] \; \textrm{or} \;  [k_{1} = -k_{2} >0]\Big\}.
\end{align*}
We adopt a Bayesian framework and assume a centred Gaussian random field prior $\eta_0$ on the unknown initial condition
\begin{equation}\label{eq:prior}
\eta_{0}=\Normal(0,\beta^{2}A^{-\alpha})
\end{equation}
with hyper-parameters $\alpha,\beta$ affecting the roughness and magnitude of the initial vector field. In \eqref{eq:prior}, $A=-P\Delta$ denotes the Stokes operator where $\Delta = \big(\partial^2_{x_1}+\partial^2_{x_2},\,\partial^2_{x_1}+\partial^2_{x_2} \big)$ is the usual Laplacian and $P: \big(L^{2}(\mathbb{T}) \big)^{2}\rightarrow U$ is the Leray-Helmholtz orthogonal projector that maps a field to its divergence-free and zero-mean part. 
%
%
A simple understanding of the prior distribution $\eta_0$ can be obtained through the Karhunen-Lo\'{e}ve representation; a draw from the prior distribution $\eta_0$ can be realised as the infinite sum
\begin{equation}
\mathfrak{Z} = \beta \, \sum_{k\in\mathbb{Z}^{2}_*} \left| k \right|^{-\alpha} \, \xi_{k} \,\psi_{k} \; \sim \; \eta_0
\end{equation}
where variables $\{ \xi_k \}_{k \in \mathbb{Z}^2_*}$ correspond standard complex centred Gaussian random variables with $\big( \mbox{Re}(\xi_{k}), \mbox{Im}(\xi_{k}) \big)$ $\stackrel{iid}{\sim} \Normal\big(0, \frac12 \, I_2 \big)$  for $k\in\mathbb{Z}_{\uparrow}^{2}$ and $\xi_{k}=-\overline{\xi_{-k}}$ for $k \in \mathbb{Z}^{2}_* \setminus \mathbb{Z}_{\uparrow}^{2}$. In other words, \emph{a-priori}, the Fourier coefficients $u_{k}$ with $k\in\mathbb{Z}_{\uparrow}^{2}$ are assumed independent, normally distributed, with a particular rate of decay for their variances as $\left|k\right|$ increases.  Statistical inference is carried out by sampling from the  posterior probability measure $\eta$ on $U$ defined as the Gaussian change of measure
\begin{equation} \label{eq:target}
\frac{d\eta}{d\eta_{0}}(u) =  \frac{1}{Z(y)} \, \ell(y; u)
\end{equation}
for a normalisation constant $Z(y)>0$.

%
%
\subsubsection{Algorithmic Challenges and Adaptive SMC}
\label{sec:what}

With a slight abuse of notation we will henceforth use a single subscript to count the observations and set $y_{(j-1)M+m} \equiv y_{j,m}$.
We will apply an SMC sampler on the sequence of distributions $\{ \eta_n \}_{n=0}^{M \times T}$ defined by
\begin{equation}
\label{eq:seq}
\frac{d \eta_n}{d\eta_0}(u)  = \frac{1}{Z(y_{1:n})} \, \ell(y_{1:n}; u)
\end{equation}
for a normalisation constant $Z(y_{1:n})$ and likelihood $\ell(y_{1:n}; u)$.
Note that the state space $U$ is infinite-dimensional even though in practice, as described in \cite{kantas}, our solver truncates
the Fourier expansion \eqref{eq:Fourier} on a pre-specified window of frequencies $-k_{\max}+1 \le k_1, k_2 \le k_{\max}$ for $k_{\max}=32$. 

We now describe the MCMC mutation steps used for propagating the $N$-particle system. For a tuning parameter $\rho\in (0,1)$, a simple Markov kernel suggested in several articles (see e.g.\@ \cite{cotter} and the references therein) for target distributions that are Gaussian changes of measure of the form \eqref{eq:seq} is the following. Given the current position $u \in U$, the proposal $\widetilde{u}$ is defined as
\begin{equation}\label{eq:RWW}
\widetilde{u} = \rho \, u + (1-\rho^2)^{1/2} \, \mathfrak{Z}
\end{equation}
%
%
with $\mathfrak{Z} \sim \eta_0$; the proposal is accepted with probability $\min\big(1, \ell(y_{1:n};\widetilde{u}) / \ell(y_{1:n};u) \big)$. Proposal \eqref{eq:RWW} preserves the prior Gaussian distribution \eqref{eq:prior} for any $\rho\in (0,1)$ and the above Markov transition is well-defined on the infinite-dimensional space $U$. It follows that the method is robust upon mesh-refinement in the sense that $\rho$ does not need to be adjusted as $k_{\max}$ increases \cite{pillai2014}. In contrast, for standard Random-Walk Metropolis 
proposals, one would have to pick a smaller step-size upon mesh-refinement; for the optimal step-size, the mixing time will typically deteriorate as $\mathcal{O}(k_{\max}^2)$, see e.g.~\cite{besk}. Still, 
proposal \eqref{eq:RWW} can be inefficient when targeting the posterior distribution $\eta$ when it differs significantly from the prior distribution $\eta_0$. Indeed, \emph{a-priori} the Fourier coefficients $u_k$ have known scales appropriately taken under consideration in \eqref{eq:RWW}; \emph{a-posteriori}, information from the data spreads non-uniformly 
on the Fourier coefficients, with more information being available for low frequencies than for high ones. Taking a glimpse into results from the execution of the adaptive SMC algorithm yet to be defined, in Figure~\ref{ex2:circle} we plot the 
fractions, as estimated by the SMC method, between posterior and prior standard deviations for the Fourier coefficient $\mathrm{Re}(u_k)$ (left panel) and  
$\mathrm{Im}(u_k)$ (right panel) over all pairs of frequencies $k=(k_1,k_2)$ with $-20\le k_1,k_2 \le 20$. In this case it is apparent that most of the information in the data concentrates on a window of frequencies around the origin; still there is a large number of variables (around $2\cdot 10^2$ in this example) which have diverse posterior standard deviations under the posterior distribution. The standard deviations of these Fourier coefficients can potentially be very different from their prior standard deviations.

\begin{figure}[!h]
\centering \includegraphics[width=16cm]{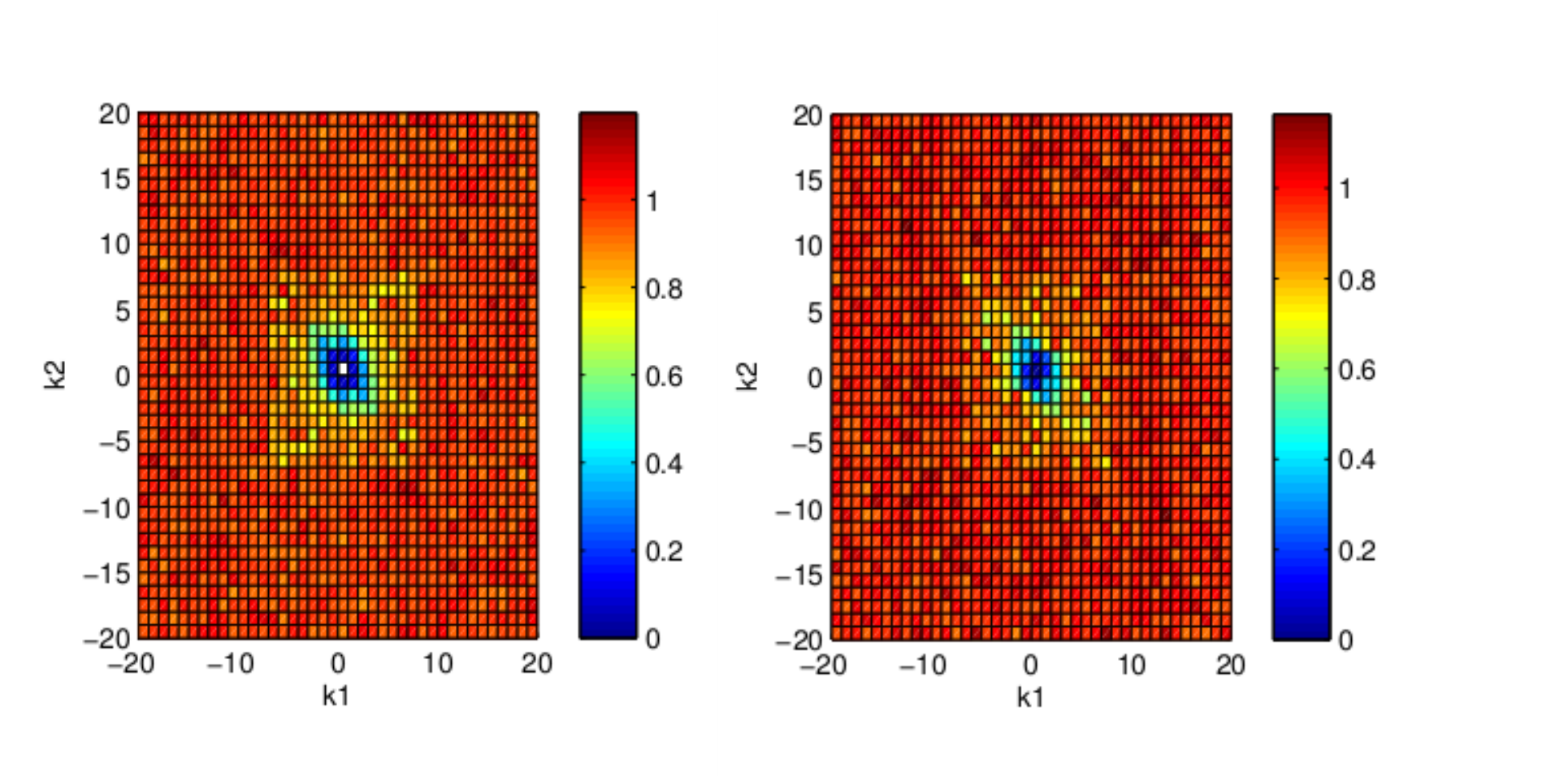}
\caption{Ratio of (estimated) posterior vs prior standard deviations for $\mathrm{Re}(u_k)$ (left panel) and  
$\mathrm{Im}(u_k)$ (right panel) over all pairs  $k=(k_1,k_2)$ with $-20\le k_1,k_2 \le 20$. 
The model here corresponds to: $\delta = 0.2$, $m=4$, $T=20$, $\alpha=2$, $\beta^2=5$, $\varepsilon^2=0.2$,
$f(x)=\nabla^{\perp}\cos((5,5)'\cdot x)$. The $m=4$ observation locations 
were at  $(0,\pi)$, $(\pi,0)$, $(0,0)$, $(\pi,\pi)$. 
Samples from the posterior were generated by applying a version of the adaptive SMC algorithm described in Section \ref{sec:what}  for $K=7$, see \cite{kantas} for full details. The `true' initial condition was sampled from the prior; data were then simulated accordingly.}
\label{ex2:circle} 
\end{figure}

The approach followed in \cite{kantas} for constructing better-mixing Markov kernels involves selecting a `window' of frequencies $\mathbf{K}=\left\{ k\in\mathbb{Z}^{2}_* \,:\, \max( k_{1}, k_{2} )\leq K \right\}$, for a user pre-specified threshold $K \geq 1$, and using the following Markov mutation steps within an SMC algorithm.
%
%
\begin{itemize}
\item Use the currently available particles approximation $\{u^{i}\}_{i=1}^{N}$ of $\eta_n$ to estimate the current marginal mean and covariance $\mathfrak{m}^{N}_k$ and $\Sigma^N_k$ of the two-dimensional variable $u_k=\big(\mathrm{Re}(u_k),\mathrm{Im}(u_k) \big)$ over the window $k=(k_1, k_2) \in \mathbf{K} \cap \mathbb{Z}_{\uparrow}^{2}$,
\begin{equation*}
\mathfrak{m}^{N}_k = \tfrac{1}{N}\,\sum_{i=1}^{N}u_k^{i}
\ ; \quad 
\Sigma^N_k = \tfrac{1}{N-1} 
 \sum_{i=1}^{N}
(u_k^{i}-\mathfrak{m}^{N}_k) \otimes (u_k^{i}-\mathfrak{m}^{N}_k)\  .
\end{equation*}
For high-frequencies $k=(k_1, k_2) \in \mathbf{K}^c \cap \mathbb{Z}_{\uparrow}^{2}$, only the  information contained in the prior distribution is used and we thus set $\mathfrak{m}^{N}_k = 0$ and $\Sigma^N_k = \frac{1}{2}\, |k|^{-2\alpha} \, I_2$.
\item
For a current position $u = \sum u_k \, \psi_k$, the proposal $\widetilde{u}=\sum \widetilde{u}_k \, \psi_k$ is defined as
\begin{equation*}
\widetilde{u}_k = \mathfrak{m}^N_{k} + \rho \, (u_{k} - \mathfrak{m}^N_{k}) + 
(1-\rho^{2})^{1/2}\, \mathfrak{Z}_k
\end{equation*}
for $k \in \mathbb{Z}_{\uparrow}^{2}$ and $\mathfrak{Z}_k \sim \Normal(0,\Sigma^N_k)$ and $\widetilde{u}_{-k} = -\overline{\widetilde{u}_{k} }$ for $\mathbb{Z}^2_* \setminus \mathbb{Z}_{\uparrow}^{2}$; this proposal is accepted with the relevant Metropolis-Hastings ratio.
\item In addition to the above adaptation at the Markov kernel, the analytical algorithm also involved an annealing step as 
described in Section \ref{sec:annealed}, whereby additional intermediate distributions were introduced, if needed, in between 
any pairs $\eta_{n-1}$, $\eta_n$. We found this to be important for avoiding weight degeneracy and getting a stable algorithm.
 As explained in Section~\ref{sec:annealed}, the choice of temperatures was determined 
on the fly, according to a minimum requirement of the effective sample size (we choose $\alpha=\tfrac{1}{3}$).
\end{itemize}


It is important to note that in this Navier-Stokes setting, the regularity assumptions adopted in the theoretical parts of this article for the derivation of the asymptotic results do not apply anymore. As illustrated by this numerical analysis, the asymptotic behaviour predicted in Theorem \ref{thm.stability} is likely to hold in far more general contexts.
Figure \ref{fig:1} shows a plot of an estimate of the variance of $Z^{N}(y_{1:n}) / Z(y_{1:n})$, where  $Z^{N}(y_{1:n})$ is the $N$-particle particle approximation of normalisation constant $Z^{N}(y_{1:n})$, as a function of the amount of data $n$ for an adaptive SMC algorithm using $N=500$ particles. In this complex setting, the numerical results seem to confirm the theoretical asymptotic results of Theorem \ref{thm.stability}: the estimated asymptotic variance seems to grow linearly with $n$, as one would have expected to be true for the perfect SMC algorithm that does not use adaptation. 
This is an indication that Theorem \ref{thm.stability} is likely to hold under weaker assumptions than adopted in this article.

\begin{figure}[h!]
\vspace{-4cm}
\centering
{\includegraphics[width=\textwidth,height=18cm]{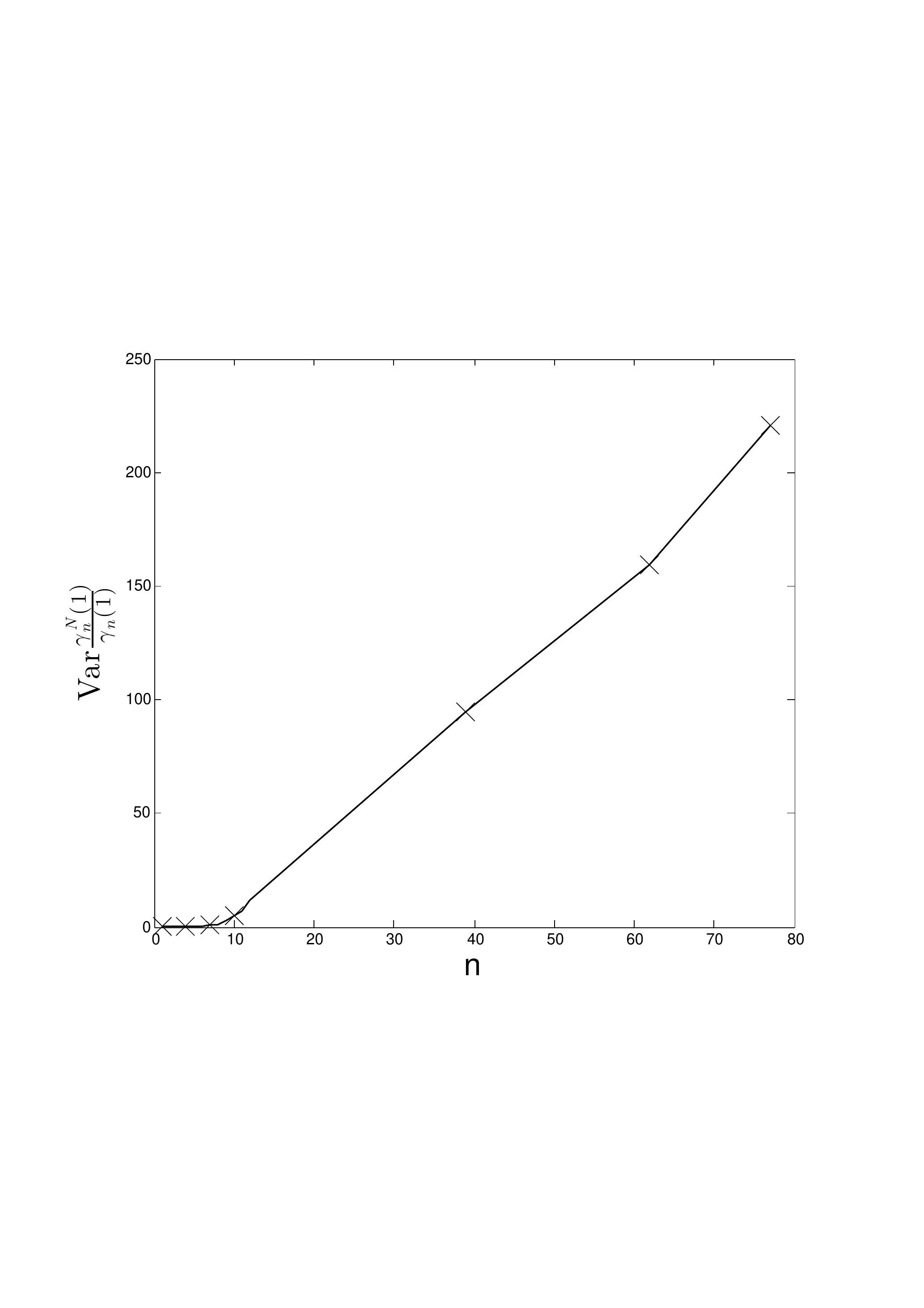}}
\vspace{-5cm}
\caption{Estimated variance for the estimate of the normalizing constant of adaptive SMC. The `true' normalizing constant was estimated from 1000 independent runs with $N=500$ and the relative variance is estimated when $N=500$ over 500 independent runs. The crosses are the estimated values of the relative variance.
\label{fig:1}}
\end{figure}

\subsubsection{Algorithmic Stability in Large Scale Adaptation}\label{sec:simos}

When the dimension $d$ of the adapted statistics is large, as in the Navier-Stokes case 
(in our simulation study $d=\textbf{Card}(\mathbf{K} \cap \mathbb{Z}_{\uparrow}^{2}) \times 5\approx [(2K)^2/2] \times 5 \approx   500$) and potentially in other scenarios, 
it is certainly of interest to quantify the effect of the dimensionality $d$ of the adaptive statistics on the overall accuracy of the SMC estimators.
We will make a first modest attempt to shed some light on this issue via the consideration of a very simple modelling structure motivated by the Navier-Stokes example and allowing for some simple calculations.

For each $n \geq 1$ we assume a product form Gaussian target on $E_n=\bbR^{\infty}$,
\begin{equation*}
\eta_{n} =  \bigotimes_{j=1}^{\infty} \mathcal{N}(0,\sigma_j^{2})\ ,
\end{equation*}
for a given sequence of variances $\{\sigma_j^2\}_{j=1}^{\infty}$ that does not depend on the index $n \geq 1$.  This represents an optimistic case where the incremental weights $G_{n}(x)$ are small enough to be irrelevant for the study of the influence of the dimension $d$; we set $G_n(x)\equiv 1$.
It is assumed that the SMC method has worked well up-to time $(n-1)$ and has produced a collection of i.i.d.\@ samples $\{x_{n-1}^{i}\}_{i=1}^{N}$ from $\eta_{n-1}$. For the mutation step, we consider an adaptive Metropolis-Hastings Markov kernel $M_{n,\xi}$ preserving $\eta_{n}$ that proposes, when the current position is $x \in \bbR^{\infty}$, a new position $\wtilde{x} \in \bbR^{\infty}$ distributed as
\begin{equation} \label{eq:M} 
\begin{aligned}
\widetilde{x}_j &= \rho \, x_j + (1-\rho^2)^{1/2} \, \Normal(0, \widehat{\sigma}^2_{j})\ , 
\quad \textrm{for} \quad 1 \leq j \leq d \ ,\\
\widetilde{x}_j &= \rho \, x_j + (1-\rho^2)^{1/2} \, \Normal(0, \sigma^2_{j})\ , 
\quad \textrm{for} \quad j \geq d+1 \  ,
\end{aligned}
\end{equation}
where we have set $\widehat{\sigma}^2_j := (1/N) \, \sum_{i=1}^{N}\{x_{n-1,j}^i\}^2$. This corresponds to the adaptive SMC approach described in Section \ref{sec:algo} with a $d$-dimensional adaptive  statistics $\xi_n(x)=(x_1^2, \ldots, x_d^2)$. Thus, the $d$ first coordinates of the proposal are adapted to the estimated marginal variance while the ideal variance is used for the remaining coordinates. 
We want to investigate the effect of the amount of adaptation on the 
accuracy of the estimator $\eta_{n}^{N}(\phi)$ for a bounded function $\phi$ that only depends on the $(d+1)$-th coordinate,
\begin{equation*}
\phi(x)=\phi(x_{d+1})\ . 
\end{equation*}
Notice that in this simple scenario the Metropolis-Hastings proposal corresponding to the ideal kernel $M_{n,\eta_{n-1}(\xi_n)}$ preserves $\eta_n$ and is thus always accepted; under the ideal kernel, the particles at time $n$ would still be a set of $N$ i.i.d.\@ samples from $\eta_n$. Consequently, any deviation from the $\mathcal{O}(N^{-1/2})$ rate of convergence for the estimator $\eta_{n}^{N}(\phi)$ will be solely due to the effect of the adaptation.

%
We now investigate in this context the behavior of the difference $\eta_{n}^{N}(\phi)-\eta_{n}(\phi)$. Following the proof of Theorem \ref{theo:wlln} we use the decomposition
\begin{equation*}
[\eta_{n}^{N}-\eta_{n}](\phi) = A(N) + B_1(N) + B_2(N) 
\end{equation*}
where, using the notations of Section \ref{sec:algo}, we have set
$A(N) =  [\eta_n^N -\Phi_{n,N}(\eta_{n-1}^N)](\phi)$, 
$B_1(N)  =\eta_{n-1}^N[Q_{n,N} - Q_n](\phi)$
and
$B_2(N)  = [\eta_{n-1}^N-\eta_{n-1}](Q_n \phi)$.
%
%
Denoting by $\|\!\cdot\! \|_2$ the $L_2$-norm of random variables and 
conditioning upon $\mathcal{F}_{n-1}^{N}$, we have that
\begin{equation}
\label{eq:A}
\| A(N) \|_2^2  = \tfrac{1}{N}\,\Exp\,\big[\,\Var\,[\,\phi(x_n^{1})\,|\,\mathcal{F}_{n-1}^{N}\,]\,\big]
= \mathcal{O}(\tfrac{1}{N})\ .
\end{equation}
For  $B_2(N)$ one can notice that $Q_n(\phi)$ is a bounded mapping from $\bbR^{\infty}$ to $\bbR$, thus
\begin{equation}
\label{eq:B}
\| B_2(N) \|_2^2 = \tfrac{1}{N}\,\mathrm{Var}_{\eta_{n-1}}\,[\,Q_n(\phi)\,] = \mathcal{O}(\tfrac{1}{N})\ . 
\end{equation}
The critical term with regards to the effect of the dimension $d$ on the magnitude of the difference $[\eta_{n}^{N}-\eta_{n}](\phi)$ is $B_1(N)$. An approach similar to Equation \eqref{eq.Q.diff.eta} in the proof of Theorem \ref{thm.clt.unnormalised} yields
\begin{align*} 
B_1(N) 
&= \eta^N_{n-1} [Q_{n,N}-Q_n](\phi) 
=  \eta^N_{n-1}\big( \, \big[ M_{n,N} - 
M_{n} \big]( \phi ) \, \big) \\
&=
\eta^N_{n-1} \big[ \partial_{\xi} M_{n} \phi \big]
\cdot
[\eta^N_{n-1}-\eta_{n-1}](\xi_n) + R 
=:
\widetilde{B}_1(N) + R\ ,  \label{eq:R} 
\end{align*}
for a residual random variable $R$. Controlling the residual term in the above expansion poses 
enormous technical challenges and we restrict our analysis to the main order term $\widetilde{B}_1(N)$.
%
\begin{prop}
\label{pr:B}
The term $\widetilde{B}_1(N)$ satisfies
\begin{equation*}
\|\widetilde{B}_1(N) \|_2 = \mathcal{O}\big(\tfrac{\sqrt{d}}{N} \big) + \mathcal{O}\big(\tfrac{d}{N^{3/2}}\big)\ . 
\end{equation*}
\end{prop}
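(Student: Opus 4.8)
The plan is to compute the conditional second moment of $\widetilde{B}_1(N)$ given $\mathscr{F}^N_{n-1}$ and then take expectations. Write $\widetilde{B}_1(N) = \eta^N_{n-1}\big[\partial_\xi M_n \phi\big] \cdot [\eta^N_{n-1}-\eta_{n-1}](\xi_n)$. The key structural observation is that $\phi$ depends only on the $(d+1)$-th coordinate, which is \emph{not} among the adapted coordinates, so we must see how the adaptive bias propagates through the Metropolis--Hastings kernel \eqref{eq:M}. Since the proposal factorises over coordinates and the target is a product measure, the acceptance ratio depends on all coordinates jointly, but after integrating out the $j$-th proposal increment for $j \le d$ one obtains a smooth dependence on $\widehat{\sigma}^2_j$ through the single scalar $\widehat{\sigma}^2_j - \sigma^2_j$. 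Concretely, I would first argue that $\partial_{\xi_j} M_n \phi(x)$ is bounded uniformly in $x$ and in the coordinate index $j$, and — this is the crucial point — that $\eta_{n-1}\big[\partial_{\xi_j} M_n \phi\big] = 0$ for each $j \le d$, because differentiating the invariance identity $\eta_n M_{n,\xi} = \eta_n$ with respect to $\xi_j$ (legitimate by (A\ref{hyp:2})-type bounds, here supplied directly by the Gaussian structure) gives $\eta_n\big[\partial_{\xi_j} M_{n,\xi}\phi\big] = 0$ at the limiting parameter. Hence $\eta^N_{n-1}\big[\partial_{\xi_j}M_n\phi\big] = [\eta^N_{n-1}-\eta_{n-1}]\big(\partial_{\xi_j}M_n\phi\big)$ is itself an $\mathcal{O}_{L_2}(N^{-1/2})$ centred fluctuation.

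Next I would expand the dot product as $\widetilde{B}_1(N) = \sum_{j=1}^d \big(\eta^N_{n-1}-\eta_{n-1}\big)\big(\partial_{\xi_j}M_n\phi\big) \cdot \big(\eta^N_{n-1}-\eta_{n-1}\big)(\xi_{n,j})$, where $\xi_{n,j}(x)=x_j^2$. Each summand is a product of two independent-particle $U$-statistic-type fluctuations; since the particles $\{x^i_{n-1}\}$ are i.i.d. from $\eta_{n-1}$ (by the stated hypothesis of this subsection), $\mathbb{E}\big[\text{(summand)}^2\big] = \mathcal{O}(N^{-2})$ with a constant uniform in $j$ (note $\xi_{n,j}$ has finite fourth moment since $\eta_{n-1,j}=\mathcal{N}(0,\sigma_j^2)$, and the $\sigma_j^2$ are bounded). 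The cross terms $\mathbb{E}\big[\text{summand}_j \cdot \text{summand}_{j'}\big]$ for $j \ne j'$ are where a little care is needed: each summand is a sum over ordered pairs $(i,k)$ of particles of a product $f_j(x^i)g_j(x^k)$, so the product of two such objects is a quadruple sum; terms with all four particle indices distinct vanish in expectation by independence and centring, leaving only $\mathcal{O}(N^3)$ terms out of $\mathcal{O}(N^4)$, each $\mathcal{O}(1)$, divided by the $N^4$ normalisation, giving $\mathcal{O}(N^{-1})$ per cross pair — but summed over the $\mathcal{O}(d^2)$ cross pairs this would be $\mathcal{O}(d^2/N)$, too large. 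The resolution is that for $j \ne j'$ the functions $\partial_{\xi_j}M_n\phi$ and $\xi_{n,j'}$ involve disjoint coordinate blocks in a way that forces additional centring: one checks that $\mathbb{E}_{\eta_{n-1}}\big[\partial_{\xi_j}M_n\phi \cdot \xi_{n,j'}\big]$-type quantities factorise and vanish, killing the $N^{-1}$ contribution and leaving only $\mathcal{O}(N^{-2})$ per cross pair.

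Collecting: the diagonal contributes $d \cdot \mathcal{O}(N^{-2})$ to $\mathbb{E}\big[\widetilde{B}_1(N)^2\big]$, i.e. $\mathcal{O}(d/N^2)$, giving the $\mathcal{O}(\sqrt{d}/N)$ term; the off-diagonal, after the extra-centring argument, contributes $d^2 \cdot \mathcal{O}(N^{-3}) = \mathcal{O}(d^2/N^3)$ — wait, this yields $\mathcal{O}(d/N^{3/2})$ only after noting that in fact one of the two factors in each cross summand carries an extra $N^{-1/2}$, so the cross contribution is $\mathcal{O}(d^2/N^3)$ and its square root is $\mathcal{O}(d/N^{3/2})$, matching the claimed bound. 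Taking square roots of the total second moment and using $\sqrt{a+b}\le\sqrt{a}+\sqrt{b}$ gives $\|\widetilde{B}_1(N)\|_2 = \mathcal{O}(\sqrt{d}/N) + \mathcal{O}(d/N^{3/2})$.

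\textbf{Main obstacle.} The delicate point is the bookkeeping of the fourth-order ($U$-statistic) moment expansion for the cross terms $j \ne j'$: naively each contributes $\mathcal{O}(N^{-1})$ and summing over $\mathcal{O}(d^2)$ pairs destroys the bound. The argument hinges on showing that the relevant mixed expectations under the product measure $\eta_{n-1}$ vanish because $\partial_{\xi_j}M_n\phi$, after integrating out coordinate $j$, has mean zero in coordinate $j$ while $\xi_{n,j'}=x_{j'}^2$ lives on a different coordinate — so the only surviving coincidences are the ones with a repeated \emph{particle} index, which are already counted in the diagonal-type estimate. Making this orthogonality fully rigorous, given that the Metropolis acceptance ratio couples all coordinates, is the technical heart of the proof; it is exactly analogous to — but more intricate than — the uniform-in-$x$ continuity and differentiability bounds of (A\ref{hyp:1})--(A\ref{hyp:2}) and Proposition \ref{prop.verif.assump}, and it is why the authors flag that controlling the residual $R$ (which lacks this clean structure) is out of reach.
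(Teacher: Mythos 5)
Your proof follows essentially the same route as the paper's: expand $\widetilde{B}_1(N)$ as a sum over coordinates $j\le d$ of products of two centred empirical averages, compute the second moment as a quadruple sum over particle indices, and show that the diagonal ($j=j'$) block contributes $\mathcal{O}(d/N^2)$ while the off-diagonal block contributes $\mathcal{O}(d^2/N^3)$. Two remarks. First, your intermediate claim that the extra centring leaves ``$\mathcal{O}(N^{-2})$ per cross pair'' is wrong and contradicts your own final tally (it would give $\|\widetilde{B}_1(N)\|_2=\mathcal{O}(d/N)$): the correct count, which is the one the paper uses, is that for $j\neq j'$ \emph{every} quadruple with at least two distinct particle indices has vanishing expectation — not only the all-distinct ones, but also the single-coincidence terms and all three two-pair pairings — so only the $N$ quadruples with all four particle indices equal survive, giving $N/N^4=\mathcal{O}(N^{-3})$ per cross pair; your ``one factor carries an extra $N^{-1/2}$'' gloss should be replaced by this statement. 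Second, the ``obstacle'' you flag (that the acceptance ratio couples all coordinates) dissolves in this toy model: when only the $j$-th scaling is perturbed off its limiting value, the Metropolis acceptance probability depends on $(x_j,Z_j)$ alone, which yields the paper's factorisation $\partial_{\xi_j}M_n\phi(x)=D(x_j,\eta_{n-1}(\xi_{n,j}))\,\Delta\phi(x_{d+1})$ with $\E_{\eta_{n-1}}[\Delta\phi(x_{d+1})]=0$, see \eqref{eq:exp}--\eqref{eq:fact}; this product structure, combined with the independence of coordinates under $\eta_{n-1}$, is exactly what annihilates the mixed two-particle pairings. It also gives $\eta_{n-1}[\partial_{\xi_j}M_n\phi]=0$ directly — your alternative derivation by differentiating the invariance identity $\eta_n M_{n,\xi}=\eta_n$ is valid and arguably cleaner for that one step, but it does not by itself supply the factorisation needed to control the cross terms.
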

\begin{proof}
See the Appendix.
\end{proof}
Proposition \ref{pr:B} combined with \eqref{eq:A}-\eqref{eq:B} suggests that, in a high dimensional setting with $d \gg 1$, it is reasonable to choose $N$ of order $\mathcal{O}(d)$, yielding a mean squared error of order $\mathcal{O}(1/d)$.
Even if this choice of $N$ should be thought of as a minimum requirement for the complete sequential method, 
it could maybe explain the fairly accurate SMC estimates of the marginal expectation obtained in the Navier-Stokes example when $N=500$ and $d\approx 500$; we refer the reader to \cite{kantas} for further simulation studies.

\section{Summary}\label{sec:summ}
This article studies the asymptotic properties of a class of adaptive SMC algorithms; weak law of large numbers and a central limit theorems are established in several settings.
There are several extensions to the work in this article.
First, one could relax the boundedness assumptions used in the paper; our proof technique, also used in \cite{chopin1}, is particularly amenable to this.
Second, an approach to deal with the random stopping of some adaptive SMC algorithms (see Section \ref{sec:annealed})
also needs to be developed. Lastly, one can extend the analysis to the context of adaptive resampling.

\subsubsection*{Acknowledgements}

AB and AT were supported by a Singapore MOE grant.
AJ was supported by Singapore MOE grant R-155-000-119-133 and is also affiliated with the risk management institute at the National University of Singapore.


%
%
\appendix

\section{Proof of Proposition \ref{pr:B}}

First of all, notice that without loss of generality we can assume that $\sigma_j^2 = const.$. 
We have that:
\begin{align}
\widetilde{B}_1(N) &= \frac{\sqrt{d}}{N}\times \sum_{j=1}^{d}
\Big\{   \tfrac{\sum_{i=1}^{N}\partial_{\xi_j}M_{n,\xi}(\phi)(x_{n-1}^{i})|_{\xi=\eta_{n-1}(\xi_n)}}{\sqrt{N}}\cdot \sqrt{N}\,(\eta_{n-1}^N -
 \eta_{n-1})(\xi_{n,j})   \Big\}/\sqrt{d} \nonumber \\
&\equiv \frac{\sqrt{d}}{N}\times \sum_{j=1}^{d}\big[\, \sqrt{N}\,\eta_{n-1}^{N}
(\bar{\Xi}_{n,j})\cdot \sqrt{N}\eta_{n-1}^N(\bar{\xi}_{n,j})  \,\big]/\sqrt{d} \label{eq:aa}
\end{align}
where we have set $\bar{\Xi}_{n,j}(x)=\partial_{\xi_j}M_{n,\xi}(\phi)(x)|_{\xi=\eta_{n-1}(\xi_n)}$ and 
$\bar{\xi}_{n,j}(x)=
{\xi}_{n,j}(x)-\eta_{n-1}({\xi}_{n,j})$. Clearly, the expectation of the latter variable over $\eta_{n-1}$ is 
zero, but the same is also true for the former one. 
Initially, we will focus on the term $\bar{\Xi}_{n,j}(x)$ as it has some structure which will be exploited in subsequent calculations. Indeed, considering $M_{n,\xi_{j}}(\phi)(x)$, for an arbitrary $\xi_j$ and the rest  $\xi_{k}$, $k\neq j$, at their limiting `correct' values, we have that:
\begin{equation}
\label{eq:exp}
M_{n,\xi_{j}}(\phi)(x) = \Exp\,[\,\phi(x_{d+1}')\,|\,x\,]  = \phi(x_{d+1}) + \Exp\,[\,a(x_{j},\xi_j,Z_{j})\,|\,x_{j}\,]\,\Delta \phi(x_{d+1})
\end{equation}
where we have set $\Delta \phi(x_{d+1}) = \Exp\,[\,\phi(x'_{d+1})-\phi(x_{d+1})\,|\,x_{d+1}\,]$; 
$x_{d+1}'$ denotes the Metropolis-Hastings proposal for the $(d+1)$-th co-ordinate as specified in \eqref{eq:M} ; $a(x_j,\xi_j,Z_j)$ denotes the Metropolis-Hastings acceptance probability  which depends only on the current position $x_j$,
the (arbitrary) scaling choice $\xi_j$ and the noise $Z_j\sim \mathcal{N}(0,1)$ for simulating the proposal 
for the $j$-th co-ordinate assuming a scaling $\xi_j$ (that is, we have $x_{j}'=\rho x_j + \sqrt{1-\rho^2}\,\xi_j^{1/2}\,Z_j$). 
We will give the explicit formula for $a(\cdot)$ below.
Notice that due to the proposal for $x_{d+1}$ preserving the target marginally at the $(d+1)$-th co-ordinate, we 
have that $\Exp_{\eta_{n-1}}\,[\,\Delta \phi(x_{d+1})\,] = 0$.
Recall that $\bar{\Xi}_{n,j}(x) = \partial_{\xi_j}M_{n,\xi_j}(\phi)(x)|_{\xi_j=\eta_{n-1}(\xi_{n,j})}$, thus to check for the differentiability of the  mapping
 $\xi_j \mapsto  \Exp\,[\,a(x_{j},\xi_j,Z_{j})\,|\,x_{j}\,]$ we can only resort to analytical calculations, starting from the fact that
(after some algebraic manipulations):
\begin{equation*}
a(x_j,\xi_j,Z_j) = 1 \wedge \exp\Big\{   -\tfrac{1}{2}\big( \,\xi_j^{-1} -\sigma_j^{-2}\big)\big(x_j^2 - \big\{\rho\, x_j + \sqrt{1-\rho^2}\,\xi_j^{1/2}Z_j\big\}^2\,\big)  \Big\}\ . 
\end{equation*}
After a lot of cumbersome analytical calculations (which are omitted for brevity) we can integrate out $Z_j$ and find that i) the derivative  $D(x_j, \eta_{n-1}(\xi_{n,j})) = \partial_{\xi_j}\Exp\,[\,a(x_{j},\xi_j,Z_{j})\,|\,x_{j}\,]|_{\xi_j=\eta_{n-1}(\xi_{n,j})}$ exists; ii) $D(x_j, \eta_{n-1}(\xi_{n,j}))$, with $x_j\sim \mathcal{N}(0,\sigma_j^2)$, has a finite second moment.
Thus, continuing from \eqref{eq:exp} we have:
\begin{equation}
\label{eq:fact}
\bar{\Xi}_{n,j}(x) = \partial_{\xi_j}M_{n,\xi_j}(\phi)(x)|_{\xi_j=\eta_{n-1}(\xi_{n,j})} =  D(x_j,\eta_{n-1}(\xi_{n,j}))\,\Delta\phi(x_{d+1})\  .
\end{equation}
The factorisation in (\ref{eq:fact}) will be exploited in the remaining calculations.

Continuing from (\ref{eq:aa}),  we now have that:
\begin{align}
\|\tfrac{N}{\sqrt{d}}&\widetilde{B}_1(N) \|_2^2 = 
\tfrac{1}{d}\sum_{j=1}^{d}N^2\,\Exp\,\big[\, \{\eta_{n-1}^N(\bar{\Xi}_{n,j})\}^2\, \{\eta^N(\bar{\xi}_{n,j})\}^2\big]
\nonumber
\\
&+ \tfrac{1}{d} \sum_{\substack{j,k=1,2,\ldots, d\\ j\neq k}} N^2\,\Exp\,\big[\, \eta_{n-1}^N(\bar{\Xi}_{n,j})\, \eta_{n-1}^N(\bar{\xi}_{n,j})\,\eta_{n-1}^N(\bar{\Xi}_{n,k})\,\eta_{n-1}^N(\bar{\xi}_{n,k})  \,\big] 
\nonumber
\\ 
 &\qquad \qquad \qquad =: T_1 + T_2 \ . \label{eq:B1}
\end{align}
The following zero-expectations obtained for terms involved in $T_1$, $T_2$ are a direct consequence 
of the fact that $\bar{\xi}_{n,j}(x)$ only depends on $x_j$ and has zero expectation under $\eta_{n-1}$, and that 
$\bar{\Xi}_{n,j}(x)$ only depends on $x_j$, $x_{d+1}$ through the product form in (\ref{eq:fact}) with the $x_{d+1}$-term 
having zero-expectation; critically, recall that particles $x_{n-1,j}^{i}$ are independent over both $i, j$.
Focusing on the $T_1$-term and the expectation $\Exp\,\big[\, \{\eta_{n-1}^N(\bar{\Xi}_{n,j})\}^2\, \{\eta_{n-1}^N(\bar{\xi}_{n,j})\}^2\big]$ we note that all 4-way product terms arising after replacing $\eta_{n-1}^{N}$ with its sum-expression will have expectation 0, 
except for the ones that involve cross-products of the form 
$\{\bar{\Xi}_{n,j}(x_{n-1}^i)\}^2 \times \{\bar{\xi}_{n,j}(x_{n-1}^{i'})\}^2$, thus:
\begin{equation}
\label{eq:T1}
T_1 =  \tfrac{1}{d}\sum_{j=1}^{d} N^2\cdot \tfrac{1}{N^4}\cdot \mathcal{O}(N^2) = \mathcal{O}(1)\ .
\end{equation}
Then, moving on to the $T_2$-term, notice that  all 4-way products in the expectation term $\Exp\,\big[\, \eta_{n-1}^N(\bar{\Xi}_{n,j})\, \eta_{n-1}^N(\bar{\xi}_{n,j})\,\eta_{n-1}^N(\bar{\Xi}_{n,k})\,\eta_{n-1}^N(\bar{\xi}_{n,k})  \,\big]$ have expectation $0$, except for the products involving the same particles
$\bar{\Xi}_{n,j}(x_{n-1}^{i})\, \bar{\xi}_{n,j}(x_{n-1}^{i})\,\bar{\Xi}_{n,k}(x_{n-1}^{i})\,\bar{\xi}_{n,k}(x_{n-1}^{i})$.
Thus, we have that:
\begin{equation*}
T_2 = \tfrac{1}{d}\,\sum_{j,k=1,j\neq k}^{d}N^2\cdot  \tfrac{1}{N^4}\cdot \mathcal{O}(N) = \mathcal{O}(\tfrac{d}{N})
\end{equation*}
Thus, overall we have that:
\begin{equation}
\label{eq:T2}
\|\widetilde{B}_1(N) \|_2 = \mathcal{O}(\tfrac{\sqrt{d}}{N}) + \mathcal{O}(\tfrac{d}{N^{3/2}})\ . 
\end{equation}
Results (\ref{eq:T1}), (\ref{eq:T2}), used within (\ref{eq:B1}) complete the proof.

\end{document}